\documentclass[a4paper,UKenglish,cleveref, autoref, thm-restate]{lipics-v2021}

\pdfoutput=1 
\hideLIPIcs  

\title{A Congestion Parameter for Depth-First Graph Traversals}

\author{Codaline Bourotte
}{\'Ecole Normale Supérieure de Lyon, France}{coda.bourotte@ens-lyon.fr}{https://orcid.org/0009-0009-3488-0425}{Work supported in part by the grant ``Bourse Région Mobilité Internationale Etudiants de la région Auvergne-Rhône-Alpes'' and by Kubota Information Technology 
}

\author{Gwendal Ducloz}{\'Ecole Normale Sup\'erieure de Lyon, LIP (UMR 5668, Équipe MC2), France}{gwendal.ducloz@ens-lyon.fr}{https://orcid.org/0009-0002-2015-267X}{Work supported in part by the French government under the ``France 2030'' programme ANR-24-RRII-0001, ANR-22-CE09-0034, and ANR-22-CE45-0020.}

\author{Pekka Orponen}{Aalto University, Finland}{pekka.orponen@aalto.fi}{https://orcid.org/0000-0002-0417-2104}{}

\author{Shinnosuke Seki}{University of Electro-Communications, Japan}{s.seki@uec.ac.jp}{https://orcid.org/0000-0002-0276-3322}{Work supported in part by Ambition Internationale de la R\'egion Auvergne-Rh\^one-Alpes No.~00284230-1 and by the Aalto Science Institute visiting researcher programme.}

\authorrunning{C. Bourotte et al.} 

\Copyright{Codaline Bourotte, Gwendal Ducloz, Pekka Orponen, and Shinnosuke Seki} 

\ccsdesc[500]{Mathematics of computing~Graph algorithms}
\ccsdesc[300]{Theory of computation~Fixed parameter tractability}
\ccsdesc[300]{Applied computing~Computational biology}

\keywords{KLX, depth-first search, DFS trees, $k$-connectedness, tree-width, parameterised complexity, monadic second-order logic, Courcelle's theorem, RNA nanotechnology}

\category{} 

\relatedversion{} 

\acknowledgements{A major part of this work was done during an internship of C.B. at the University of Electro-Communications, Japan.
Further work was pursued during a visit of S.S. and G.D. to Aalto University, Finland, and during an internship of G.D. at the University of Electro-Communications.
}

\nolinenumbers 

\EventEditors{Michal Kouck\'{y} and Daniela Petri\c{s}an}
\EventNoEds{2}
\EventLongTitle{51st International Symposium on Mathematical Foundations of Computer Science (MFCS 2026)}
\EventShortTitle{MFCS 2026}
\EventAcronym{MFCS}
\EventYear{2026}
\EventDate{August 24--28, 2026}
\EventLocation{Paris, France}
\EventLogo{}
\SeriesVolume{386}
\ArticleNo{1}

\usepackage{algorithm}
\usepackage{algpseudocode}
\usepackage{bm}
\usepackage{comment}
\usepackage{nccmath}
\usepackage[textsize=tiny]{todonotes} 
\usepackage[normalem]{ulem}
\usepackage{xspace}

\usepackage{tikz}
\tikzset{vertex/.style={circle, draw, fill=black, inner sep=0pt, minimum size=4pt}}
\usetikzlibrary{shapes.geometric}

\newcommand{\KLX}{\ensuremath{\mathrm{KLX}}}
\newcommand{\TW}{\ensuremath{\mathrm{TW}}}

\newcommand{\MSO}{\ensuremath{\mathrm{MSO}}}
\newcommand{\dtc}{\ensuremath{\mathrm{dtc}}}
\newcommand{\DTC}{\ensuremath{\mathrm{DTC}}}

\newcommand{\gd}[1]{{\color{teal}#1}}
\newcommand{\gdsout}[1]{}

\newcommand{\parhead}[1]{\smallskip\noindent\textbf{#1. }\ignorespaces}

\begin{document}

\maketitle

\begin{abstract}
We explore a new graph parameter, \emph{KLX number}, which quantifies the minimum edge congestion of depth-first search (DFS) traversals of a given graph. 
Originally motivated by a problem in RNA nanostructure design, this parameter is also of independent theoretical interest.
Informally, the KLX number of a graph is defined as the minimum, over all its DFS traversals, of the maximum number of back edges that are simultaneously open during the traversal.

We provide full characterisations and linear-time recognition algorithms for graphs with KLX numbers 0, 1 and 2. We also relate KLX to tree-width, proving that any graph satisfies $\TW \le \KLX+1$. 
Furthermore, we show that the property $\KLX \le k$ is $\MSO_2$-expressible for every fixed $k$. 
Combined with the tree-width bound, this result implies
that determining whether a graph has KLX number at most $k$ can be achieved in linear time for any constant $k$.
\end{abstract}

\newpage

\section{Introduction}
\label{sec:introduction}

\subsection{Background}

In this paper we investigate the minimum ``congestion'' of depth-first graph traversals, in the sense of how many back edges are simultaneously open (discovered but not concluded) at any point during the traversal. This concept is analogous to the notion of \emph{spanning tree congestion} introduced by Ostrovskii~\cite{ostrovskii_minimal_2004}, but is specific to ordered DFS trees, and addresses the dynamic traversal process rather than the static DFS tree.

Our interest in this concept was spurred by a practical problem in RNA nanotechnology~\cite{stewart_rna-nanotechnology_2024}, viz.\ minimising the number of kissing-loop connector motifs in the algorithmic design of co-transcriptionally folding RNA wireframe nanostructures~\cite{elonen_algorithmic_2022,geary_single-stranded_2014,orponen_secondary_2025}.

\begin{figure}[ht]
  \centering
  
  \begin{subfigure}[t]{0.2\textwidth}
    \includegraphics[width=0.7\textwidth]{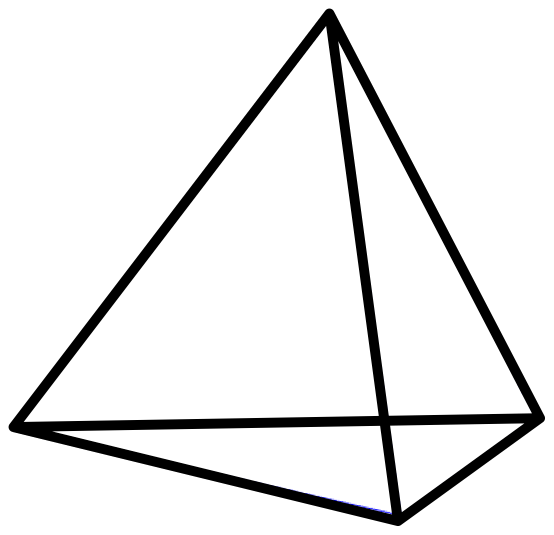}
    \subcaption{}
  \end{subfigure}
  \begin{subfigure}[t]{0.2\textwidth}
    \includegraphics[width=0.8\textwidth]{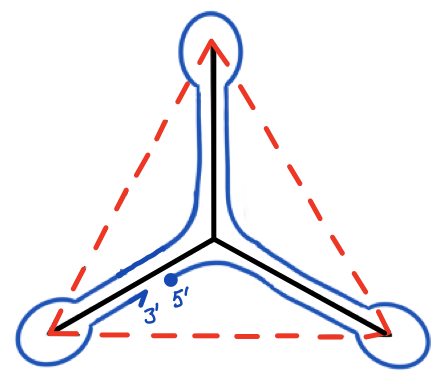}
    \subcaption{}
  \end{subfigure}
  \begin{subfigure}[t]{0.2\textwidth}
    \includegraphics[width=0.8\textwidth]{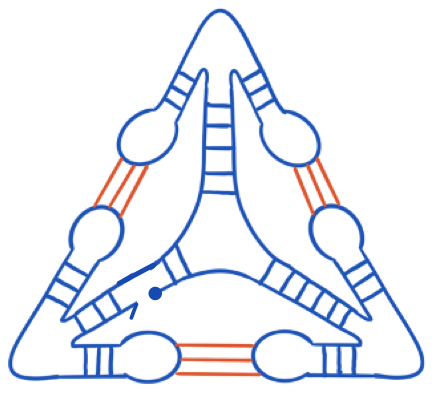}
    \subcaption{}
  \end{subfigure}
  \begin{subfigure}[t]{0.2\textwidth}
    \includegraphics[width=0.8\textwidth]{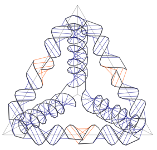}
    \subcaption{}
  \end{subfigure}
  
\caption{
A spanning-tree based design scheme for RNA wireframe nanostructures. (a) Targeted wireframe model. (b) A spanning tree and strand routing of the wireframe graph. (c) Routing-based helical and kissing-loop pairings. (d) Helix-level model. (Adapted with permission from~\cite{elonen_algorithmic_2022}.)
}
\label{fig:rna-design}
\end{figure}

In this design task, one starts with a graph model $G$ that represents the targeted wireframe nanostructure (\cref{fig:rna-design}(a)), and then designs an initial route for the RNA strand by a traversal around some spanning tree $T$ of $G$ (blue curve in \cref{fig:rna-design}(b)).
This establishes an RNA base-pairing scheme for creating the tree $T$, but omits the co-tree edges (dashed red lines in \cref{fig:rna-design}(b)).
These are grafted to the design by extending the strand routing by two half-edges from both end-vertices of each co-tree edge, and terminating these extensions in ``kissing loop'' (KL) pairs with pairwise matching base sequences (\cref{fig:rna-design}(c)).
\cref{fig:rna-design}(d) presents a helix-level model of the resulting nanostructure.


In nature, RNA is transcribed from a DNA template by a \emph{polymerase enzyme} molecule and folds \emph{co-transcriptionally}, that is, already while being transcribed. 
This opens up the possibility of reusing the KL base-pair sequences, because once a transcribed KL pair has closed, there is no longer a risk of mispairings from introducing another copy of the same sequence pair. 
This opportunity is potentially significant for large structures and motivates interest in cotranscriptional designs with a minimum number of simultaneously open KL pairs, that is, \emph{graph traversals with a minimum number of simultaneously open co-tree edges}.

Co-transcriptional folding also has a downside: if the scheduling of the KL pairings relative to the helix transcriptions is bad, it is possible that the polymerase molecule becomes ``trapped'' by a KL pair that closes too early, and the winding of the downstream helices cannot be completed~\cite{geary_design_2014}. Surprisingly, this kinetic folding trap can be decisively avoided in wireframe designs \emph{if and only if the spanning tree used for routing is a DFS tree}~\cite{orponen_secondary_2025}.

Accordingly, and following~\cite{orponen_secondary_2025}, we define the \KLX\ (``kissing loop crossing'') number of a graph~$G$ as the \emph{minimum number of simultaneously open back edges in any DFS traversal of~$G$}, and aim to characterise and recognise those graphs that can be constituted with a small \KLX\ number. Even though this parameter thus has an application-driven origin, it connects to the broad literature on edge congestion problems, e.g.~\cite{bodlaender_parameterized_2012, luu_better_2025, ostrovskii_minimal_2004}.

\subsection{Structure of the paper}

After introducing some preliminaries in \cref{sec:preliminaries}, we provide characterisations of graphs with small KLX values in \cref{sec:klx-leq-2}. Specifically, we show that:
graphs with $\KLX \leq 1$ are cactus graphs, and graphs with $\KLX \leq 2$ admit an explicit, linear-time decidable characterisation.
Computing the \KLX\ value is known to be NP-hard~\cite{orponen_secondary_2025}, but in \cref{sec:fpt-courcelle} we establish the existence of a linear-time fixed-parameter tractable (FPT) algorithm to test whether $\KLX \le k$ for any constant $k$. Our approach is based on Courcelle's Theorem~\cite{courcelle_monadic_1990}, and consists of two key steps: we prove that the tree-width $\TW$ of any graph satisfies $\TW \le \KLX +1$, and we demonstrate that the condition $\KLX \le k$ is $\MSO_2$-expressible. 
All the proofs that have been abridged or omitted in the main paper are provided in the Appendix. 

\section{Preliminaries}
\label{sec:preliminaries}

All graphs $G = (V, E)$ considered in this article are connected and simple. We denote by $\deg(x)$ the degree of a vertex $x \in V$, and by $\delta(G)$ and $\Delta(G)$ the minimum and maximum degrees of vertices in $G$.
For $S \subseteq V$, we denote by $G[S]$ the subgraph of $G$ induced by the vertices in $S$. 
We use the notation $|C|$ for the length of a sequence, path or cycle $C$.

A graph $G = (V, E)$ is \emph{$k$-(vertex-)connected} if $|V| > k$ and $G$ cannot be disconnected by removing fewer than $k$ vertices. A 2-connected graph is also called \emph{biconnected}.
By $\kappa(G)$, we denote the largest integer $k$ such that $G$ is $k$-connected. 
The \emph{$k$-edge-connectivity} and the notation $\lambda(G)$ are the edge analogues. 
By $k$-connectedness, we always mean $k$-vertex-connectedness. 
It is well known that $\kappa(G) \le \lambda(G) \le \delta(G)$. 
A \emph{biconnected component} or \emph{block} of $G$ is a maximal biconnected subgraph. Any graph decomposes into a tree of biconnected components, known as the \emph{block-cut tree} of the graph~\cite{bollobas_modern_1998}.

Starting from a vertex $r$ and traversing $G$ in a depth-first manner yields a spanning tree $T$ for $G$ of a special kind, called a \emph{DFS tree} or a \emph{Tr\'emaux tree}, rooted at~$r$. 
The unique path between two vertices $x$ and $y$ in $T$ is denoted by $xTy$, and the subtree rooted at $x$ is denoted by $T(x)$. 
The root $r$ induces a \emph{partial tree order}
$\prec$ on the vertices of $G$, where $u \prec v$ if $u \in rTv$ (equivalently if
$v \in T(u)$). The root is the unique minimum element of this order.

A vertex that is neither a leaf nor the root is called \emph{internal}, and an internal vertex of degree at least three in $T$ is called a \emph{branch vertex}.
We note the following simple lemma. 
\begin{lemma}\label{lem:root_DFST_biconnected-graph_non-branching}
    The root of a DFS tree of a biconnected graph is incident to only one tree edge and is thus not a branch vertex. 
\end{lemma}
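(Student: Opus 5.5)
The plan is to use the standard structural fact about DFS trees: every non-tree edge joins two vertices that are comparable in the tree order $\prec$, i.e.\ one is an ancestor of the other. From this, the root can have only one child whenever $G$ is biconnected. We assume $|V| \ge 3$, as implied by the definition of $2$-connectedness. Then $T$ has at least two vertices and $r$ has at least one child. Whether the root counts as a ``branch vertex'' depends on its degree in $T$, and that degree is exactly its number of children, so it suffices to show $r$ has exactly one child.

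\textbf{Step 1 (no edges between different subtrees).} Suppose, for contradiction, that $r$ has two distinct children $c_1$ and $c_2$ in $T$. The vertex sets of $T(c_1)$ and $T(c_2)$ are disjoint, and any vertex of $T(c_1)$ is incomparable under $\prec$ with any vertex of $T(c_2)$. We recall why a DFS tree admits no edge $xy$ with $x \in T(c_1)$ and $y \in T(c_2)$. Without loss of generality the traversal enters $c_1$ before $c_2$. When $x$ is discovered, $y$ is still undiscovered. The depth-first procedure does not finish $x$ until all its neighbours are visited. Hence $y$ would be discovered inside the subtree $T(x) \subseteq T(c_1)$, which is a contradiction. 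This is the Trémaux property, and I would state it briefly in this form.

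\textbf{Step 2 (the root is a cut vertex).} By Step 1, every edge of $G - r$ stays within a single subtree $T(c_i)$. Therefore $T(c_1)$ and $T(c_2)$ lie in different components of $G - r$, and removing the single vertex $r$ disconnects $G$. This contradicts biconnectivity, since $|V|>2$ and no single vertex removal disconnects $G$.

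Hence $r$ has exactly one child, so it is incident to exactly one tree edge and is therefore not a branch vertex. The only step requiring care is the justification of the ancestor–descendant property of non-tree edges in Step 1. Everything else is immediate, so I expect no real obstacle beyond stating that property cleanly.
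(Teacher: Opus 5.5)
Your proof is correct and is the standard argument: the paper states this lemma without proof, relying on the same fact you use (every non-tree edge joins an ancestor to a descendant, so a root with two children would be a cut vertex). One small remark: under the paper's definitions a branch vertex must be internal and the root is never internal, so the ``not a branch vertex'' clause holds by definition; the real content is the single-child claim, which you establish.
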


The edges of $G$ are classified in two types: \emph{tree edges}, which belong to $T$, and \emph{back edges}, which do not. 
This nomenclature is justified by the property that, in a DFS tree, any back edge connects a vertex $v$ to one of its ancestors, i.e., a vertex in $rTv$. We consider back edges as oriented correspondingly and write $(v,u)$ for a back edge where $u \prec v$. In contrast, we use curly brackets for tree edges, such as $\{x,y\}$.

A DFS tree $T$ is \emph{ordered} if an ordering is specified for the children of each vertex. 
Each such ordering corresponds to a unique DFS traversal process that results in the tree $T$. 
Throughout the paper, when depicting an ordered tree, we assume that the children of a vertex are visited from left to right.

\label{sec:klx-number}

Given an ordered DFS tree $T$, its \emph{traversal} or \emph{(DFS-)contour} is the list of occurrences of its vertices on the tree tour. Formally, the traversal $\tau(T)$ of an ordered DFS tree $T$ with root $r$ is defined recursively as $\tau(r)$, where $\tau(u) = u$ if $u$ is a leaf, and $\tau(u) = u\, \tau(v_1)\, u\, \cdots u\, \tau(v_k)\, u$, if $u$ is a vertex with children $v_1,\dots,v_k$ ordered left to right. Note that in a traversal, every tree edge $\{x,y\}$ occurs twice, once as $xy$ and once as $yx$.

\begin{figure}
    \centering
    \begin{subfigure}{0.13\linewidth}
        \centering
        \begin{tikzpicture}
            \draw (0, 0) node[vertex] (a) []{}
            -- ++(-150:1)  node[vertex] (b) []{}
            -- ++(-30:1)  node[vertex] (c) []{}
            -- ++(-150:1)  node[vertex] (d) []{}
            -- ++(-30:1)  node[vertex] (e) []{}
            (b) -- (d)
            ;
            \draw (-0.38,-2.5) node {\small Graph $G_1$};

        \end{tikzpicture}
    \end{subfigure}
    \hspace{0.04\linewidth}
    \begin{subfigure}{0.13\linewidth}
        \centering
        \begin{tikzpicture}
            \draw (0, 0) node[vertex] (a) []{}
            -- ++(-90:1)  node[vertex] (b) []{}
            -- ++(-90:1)  node[vertex] (c) []{}
            -- ++(0:1)  node[vertex] (d) []{}
            -- ++(90:1)  node[vertex] (e) []{}
            -- ++(90:1)  node[vertex] (f) []{}
            (b) -- (e)
            (a) -- (f)
            ;
            \draw (0.5,-2.5) node {\small Graph $G_2$};
        \end{tikzpicture}
    \end{subfigure}
    \hfill
    \begin{subfigure}{0.2\linewidth}
        \centering
        \begin{tikzpicture}
                    \draw (0,0) node {~};

            \draw (0, -0.2) node[vertex] (a) [label=above:$a$]{}
            -- ++(270:1) node[vertex] (b) [label=below:$b$]{}
            -- ++(150:1) node[vertex] (c) [label=left:$c$]{}
            -- ++(270:1) node[vertex] (d) [label=left:$d$]{}
            (b)
            -- ++(30:1) node[vertex] (e) [label=right:$f$]{}
            -- ++(270:1) node[vertex] (g) [label=right:$g$]{}
            (a) -- (c)
            (a) -- (e)
            (b) -- (d)
            (b) -- (g)
            ;
         \draw (0,-2.5) node {\small Graph $G_3$};
        \end{tikzpicture}
    \end{subfigure}
    \hspace{0.03\linewidth}
    \begin{subfigure}{0.35\linewidth}
        \centering
        \begin{tikzpicture}
             \draw (0, 0) node[vertex] (a) [label=above:$a$]{}
            -- ++(270:1) node[vertex] (b) [label=below:$b$]{}
            -- ++(210:1) node[vertex] (c) [label=left:$c$]{}
            -- ++(210:1) node[vertex] (d) [label=left:$d$]{}
            (b)
            -- ++(330:1) node[vertex] (e) [label=right:$f$]{}
            -- ++(330:1) node[vertex] (g) [label=right:$g$]{}
            ;
            \draw[dashed, -latex]
            (c) to [out=90, in=210] (a)
            ;
            \draw[dashed, -latex]
            (e) to [out=90, in=330] (a);
            \draw[dashed, -latex]
            (d) to [out=0, in=240] (b)
            ;
            \draw[dashed, -latex]
            (g) to [out=180, in=300] (b)
            ;
            \draw (0,-2.5) node {\small Ordered DFS tree $T_3$};

        \end{tikzpicture}
    \end{subfigure}
    \begin{tikzpicture}
    \node[vertex, label=below:$a$] (a) at (0, 0) {};
    \node[vertex, label=below:$b$] (b) at (1, 0) {};
    \node[vertex, label=below:$c$] (c) at (2, 0) {};
    \node[vertex, label=below:$d$] (d) at (3, 0) {};
    \node[vertex, label=below:$c$] (c2) at (4, 0) {};
    \node[vertex, label=below:$b$] (b2) at (5, 0) {};
    \node[vertex, label=below:$f$] (e) at (6, 0) {};
    \node[vertex, label=below:$g$] (g) at (7, 0) {};
    \node[vertex, label=below:$f$] (e2) at (8, 0) {};
    \node[vertex, label=below:$b$] (b3) at (9, 0) {};
    \node[vertex, label=below:$a$] (a2) at (10, 0) {};

    \draw[->] (a) -- (b);
    \draw[->] (b) -- (c);
    \draw[->] (c) -- (d);
    \draw[->] (d) -- (c2);
    \draw[->] (c2) -- (b2);
    \draw[->] (b2) -- (e);
    \draw[->] (e) -- (g);
    \draw[->] (g) -- (e2);
    \draw[->] (e2) -- (b3);
    \draw[->] (b3) -- (a2);

    \path[dashed]
        (c2) edge[out=30, in=150, -] (a2)
        (e2) edge[out=30, in=150, -] (a2)
        (d) edge[out=30, in=150, -] (b2)
        (g) edge[out=30, in=150, -] (b3)
        ;
    \draw (5.5,-0.8) node {\small Arc diagram of $\tau(T_3)$};
\end{tikzpicture}
    \caption{Three graphs $G_1$, $G_2$, and $G_3$, along with an ordered DFS tree $T_3$ of $G_3$ rooted at $a$ and the corresponding arc diagram.
    }
    \label{fig:congestion_differs_from_KLX}
\end{figure}

Given a DFS traversal $\tau(T)$, the corresponding \emph{arc diagram} is constructed by laying out all the vertex occurrences of $\tau(T)$ in an oriented line, and representing each back edge $(v,u)$ as an arc from the last occurrence of $v$ in $\tau(T)$ to the next occurrence of $u$ after the last occurrence of $v$---which exists as $u \prec v$ in $T$. An arc diagram is illustrated in \cref{fig:congestion_differs_from_KLX}.

\begin{definition}[KLX]
Let $e=\{x,y\}$ be a tree edge of an ordered DFS tree $T$, with $x \prec y$. The set $\mathcal{O}(e)$ of \textbf{open (back) edges for e} is defined as the set of back edges whose span in the corresponding arc diagram contains the oriented edge $y \to x$.


We define $\KLX(T) = \KLX(G,T) = \max_{e \in E_T} |\mathcal{O}(e)|$ and $\KLX(G) = \min_T \KLX(G,T)$, where the minimisation is over all ordered DFS trees $T$ of $G$.
An (ordered) DFS tree $T^*$ that satisfies $\KLX(G,T^*) = \KLX(G)$ is  a \textbf{KLX-minimum tree} of $G$.
\end{definition}

For instance, in the arc diagram of \cref{fig:congestion_differs_from_KLX}, we have $\mathcal{O}(\{b,f\}) = \{(c,a),(g,b), (f,a)\}$ and $\mathcal{O}(\{c,d\}) = \{(d,b)\}$. In this case, $\KLX(T_3) = 3$.

\begin{remark}
A graph has $\KLX = 0$ if and only if it is a tree, as in this case there are no back edges.
\end{remark}

\begin{example}\label{example1}
Figure~\ref{fig:congestion_differs_from_KLX} presents three graphs $G_1$, $G_2$, and $G_3$, with KLX numbers 1, 2, and 3, respectively.  
The KLX values can be verified by full enumerations of the respective DFS trees, but the present paper also provides simpler techniques.
Graph $G_1$ is not a tree but a ``cactus,'' and hence has $\KLX = 1$ (Theorem~\ref{thm:characterisation_of_KLX=1}). Graph $G_2$ admits a path-like DFS tree that fits a characterisation of $\KLX = 2$ graphs in Theorem~\ref{th:biconnect_KLX2}. 
Graph $G_3$ does not satisfy the conditions of Theorem~\ref{th:biconnect_KLX2}, and thus has $\KLX(G_3) \ge 3$. 
Since we have just observed that $\KLX(G_3, T_3) = 3$, we can conclude that in fact $\KLX(G_3) = 3$. 
\end{example}

The back edges in $\mathcal{O}(e)$ can be classified in two types, based on their relation to the tree edge $e$. 
We say that a back edge $(v,u)$:
 \begin{enumerate}[(i)]
     \item \emph{crosses} the edge $e$ if $e \in uTv$; 
     \item \emph{envelops} the edge $e$ if there exists a branch vertex $x \in uTv$ with $x \neq u, v$, and with children $w_1, w_2$ such that $w_1$ precedes $w_2$, $v \in T(w_1)$, and either $e \in T(w_2)$ or $e = \{x,w_2\}$. 
 \end{enumerate}
Similarly, we say that a back edge $(v,u)$ \emph{crosses a vertex $x$} if $x \in uTv$ and $x \neq u, v$. 

For example, in the ordered DFS tree $T_3$ and its arc diagram in \cref{fig:congestion_differs_from_KLX}, among the three back edges in $\mathcal{O}(\{b,f\})$, the edges $(g,b)$ and $(f,a)$ are crossing, and the edge $(c,a)$ is enveloping.

This classification of back edges suggests the following simple lower bound for the KLX number, obtained by taking into account only the crossing edges.

\begin{definition}[DFS tree congestion]
    For a tree edge $e$ in a DFS tree $T$ of a graph $G$, we define its \textbf{(DFS-tree) congestion} $\dtc(e)$ as the number of back edges $(v, u)$ that cross $e$.
    The congestion of a DFS tree $T$ is then defined as $\DTC(T) = \DTC(G,T) = \max_{e \in T} \dtc(e)$, and that of a graph $G$ as $\DTC(G) = \min_T \DTC(G, T)$.
\end{definition}

We observe that the value of $\DTC(T)$ is independent of the traversal order of the tree $T$, and thus the quantity $\DTC(G)$ matches Ostrovskii's spanning tree congestion parameter \textit{stc}~\cite{ostrovskii_minimal_2004}, when the latter is restricted to DFS trees and adjusted by $-1$.


\begin{proposition} \label{prop:klx-vs-dtc}
    For any graph $G$, $\DTC(G) \le \KLX(G)$. 
    Moreover, there exist graphs for which this inequality is strict. 
\end{proposition}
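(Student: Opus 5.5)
The plan has two parts. First, a pointwise comparison of crossing edges with open edges gives the inequality. Second, a small explicit graph, for which the paper already supplies the needed values, gives strictness.

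\emph{Inequality.} We show that for every ordered DFS tree $T$ and every tree edge $e=\{x,y\}$ with $x \prec y$, every back edge crossing $e$ lies in $\mathcal{O}(e)$. Let $(v,u)$ cross $e$, so $e \in uTv$; hence $u \preceq x \prec y \preceq v$. In the traversal $\tau(T)$, the whole subtree $T(y)$ is visited contiguously. It starts with the step $x \to y$ and ends with the step $y \to x$, and this return step $y\to x$ is the unique occurrence of that oriented edge.

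Since $v \in T(y)$, the last occurrence of $v$ lies inside this contiguous block, hence strictly before the step $y \to x$. The arc of $(v,u)$ ends at the next occurrence of $u$ after that point. Because $u \preceq x$, the traversal cannot reach any occurrence of $u$ before leaving $T(y)$. The first vertex reached after leaving $T(y)$ is $x$. If $u=x$, the arc ends exactly at this occurrence of $x$ following $y \to x$. Otherwise $u \prec x$, and the next occurrence of $u$ comes later still. In both cases the span of the arc contains the oriented edge $y \to x$, so $(v,u) \in \mathcal{O}(e)$.

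Thus $\dtc(e) \le |\mathcal{O}(e)|$ for every $e$, so $\DTC(T) \le \KLX(T)$. Taking $T$ to be a KLX-minimum tree gives $\DTC(G) \le \DTC(G,T) \le \KLX(G,T) = \KLX(G)$. The only delicate point is the boundary case $u=x$, where the arc must be checked to include the step $y\to x$; it does, because the arc ends at the occurrence of $x$ that immediately follows $y$.

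\emph{Strictness.} We use $G_3$ from \cref{fig:congestion_differs_from_KLX}, which has $\KLX(G_3)=3$ by Example~\ref{example1}. It therefore suffices to show $\DTC(G_3)\le 2$, and the tree $T_3$ itself does this. In $T_3$ the back edges are $(c,a),(f,a),(d,b),(g,b)$, and the crossing counts are:
\begin{itemize}
\item edge $\{a,b\}$ is crossed by $(c,a)$ and $(f,a)$, giving $2$;
\item edge $\{b,c\}$ is crossed by $(c,a)$ and $(d,b)$, giving $2$;
\item edge $\{c,d\}$ is crossed by $(d,b)$ only, giving $1$;
\item edge $\{b,f\}$ is crossed by $(f,a)$ and $(g,b)$, giving $2$;
\item edge $\{f,g\}$ is crossed by $(g,b)$ only, giving $1$.
\end{itemize}
So $\DTC(G_3)\le \DTC(T_3)=2<3=\KLX(G_3)$.

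The main obstacle is the lower bound $\KLX(G_3)\ge 3$, which relies on Theorem~\ref{th:biconnect_KLX2} as cited in the example. If a self-contained argument were wanted, I would instead enumerate DFS trees of $G_3$ up to symmetry. By Lemma~\ref{lem:root_DFST_biconnected-graph_non-branching}, the root has a single child. I would then check that every traversal has a tree edge with three open back edges: a branching vertex must defer one subtree while the edges from its first-visited subtree are still open.
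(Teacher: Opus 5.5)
Your proof is correct and follows essentially the same route as the paper. Every back edge crossing $e$ lies in $\mathcal{O}(e)$, which gives $\DTC(G)\le\KLX(G)$, and $G_3$ with the tree $T_3$ (where $\DTC(T_3)=2$ and $\KLX(G_3)=3$ by Example~\ref{example1}) gives strictness; your arc-diagram argument simply spells out the inclusion that the paper treats as immediate. As a side remark, the lower bound $\KLX(G_3)\ge 3$ that you flag as the main obstacle needs neither \cref{th:biconnect_KLX2} nor an enumeration, since $G_3$ is biconnected with $\deg(b)=5$, so \cref{lem:max-degree_and_KLX_lowerbound} with $k=2$ applies directly.
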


The strictness of the lower bound is demonstrated e.g.\ by the biconnected graph $G_3$ in \cref{fig:congestion_differs_from_KLX}, where from \cref{example1} we know that $\KLX(G_3) = 3$, and DFS tree $T_3$ witnesses that $\DTC(G) \leq 2$.




\section{Characterisations of graphs with $\bf \textbf{KLX} \leq 1$ and $\bf \textbf{KLX} \leq 2$}
\label{sec:klx-leq-2}

In this section we provide characterisations of graphs with $\KLX$ number less than or equal to two.
We begin by establishing some foundational lemmas, and then build on these to obtain the targeted characterisations.
Finally, we present linear-time algorithms to decide whether a given graph satisfies $\KLX \leq 2$.  

\subsection{Some preliminary lemmas}
\begin{lemma}\label{lem:KLX_of_a_block_is_atmost_that_of_the_whole}
    A block (biconnected component) $B$ of a graph $G$ satisfies $\KLX(B) \le \KLX(G)$.  
\end{lemma}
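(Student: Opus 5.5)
Take a KLX-minimum ordered DFS tree $T$ of $G$ and restrict it to the block $B$, obtaining an ordered DFS tree $T_B$ of $B$ whose congestion is at most that of $T$.

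\textbf{Step 1: $T_B$ is a DFS tree of $B$.} Let $T$ be rooted at $r$ and let $b$ be the $\prec$-minimal vertex of $B$ in $T$. This vertex is unique: $B$ is connected and every edge joins comparable vertices, so all vertices of $B$ lie in $T(b)$. Put $T_B = T[V(B)]$. We check that $T_B$ is a tree.
\begin{itemize}
\item For $v \in V(B)$, every vertex on the path $bTv$ lies in $B$. Take any $w$ on this path with $w \neq b,v$. Removing $w$ separates $v$ from $b$ in $T$. By the back-edge property, $G-w$ can then only reconnect $T(w)$ to the rest through back edges from $T(w)$ to proper ancestors of $w$. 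So if $w \notin B$, the cycle obtained by joining $bTv$ to a $v$–$b$ path inside $B$ (or the biconnectivity of $B \cup bTv$) contradicts the maximality of $B$.
\item The cleaner route is the standard fact: for a DFS tree, the vertex set of each block induces a connected subtree of $T$, namely $T(b)$ minus the subtrees hanging off at articulation points. I would cite or quickly reprove this.
\end{itemize}
So $T_B$ is a spanning tree of $B$ rooted at $b$, and every edge of $B$ not in $T_B$ joins an ancestor to a descendant. Hence $T_B$ is a Trémaux tree of $B$. Order the children in $T_B$ as they are ordered in $T$. The resulting contour $\tau(T_B)$ is then $\tau(T)$ with the excursions into vertices outside $B$ deleted and consecutive repeated vertices merged.

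\textbf{Step 2: compare open-edge sets.} Every back edge of $T_B$ is a back edge of $T$ lying in $B$. Fix a tree edge $e=\{x,y\}$ of $T_B$ and a back edge $(v,u)$ in $\mathcal{O}_{T_B}(e)$. We show $(v,u) \in \mathcal{O}_T(e)$.

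In $\tau(T_B)$, the arc runs from the last occurrence of $v$ to the next occurrence of $u$. Its span contains the oriented step $y \to x$. We relate this to $\tau(T)$ by an order-preserving map of occurrences. Each occurrence in $\tau(T_B)$ corresponds to an occurrence in $\tau(T)$. The last occurrence of $v$ in $\tau(T_B)$ maps to a position at or before the last occurrence of $v$ in $\tau(T)$. Either way, the arc in $\tau(T)$ starts from the last occurrence of $v$, which precedes the step $y\to x$: after that step, $v$'s subtree is finished, because $y \to x$ being after $v$'s last appearance in $\tau(T_B)$ means $v \notin T(y)$ or $v$'s visit is complete.

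The endpoint of the arc in $\tau(T)$ is the first occurrence of $u$ after that. It can only come later than the image of the step $y\to x$, since it must come after $x$ is returned to, given that $u \preceq x$ when the step lies in the span. Thus $|\mathcal{O}_{T_B}(e)| \le |\mathcal{O}_T(e)| \le \KLX(G)$, and taking the maximum over $e$ gives $\KLX(B) \le \KLX(B,T_B) \le \KLX(G)$.

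\textbf{Main obstacle.} The delicate part is Step 2's bookkeeping. Deleting excursions shifts last occurrences and next occurrences, and we must check that the image of the span in $\tau(T)$ still covers the step $y \to x$.

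I would argue this via the crossing/enveloping classification rather than positions. If $(v,u)$ crosses $e$ in $T_B$, it crosses $e$ in $T$, since $uT_Bv = uTv$. If it envelops $e$ in $T_B$ via a branch vertex $x'$ with children $w_1,w_2$ in $T_B$, then the same $x'$ witnesses enveloping in $T$. The children of $x'$ in $T$ toward $v$ and toward $e$ are exactly $w_1$ and $w_2$: children within $B$ are children in $T$, because $T_B$ consists of genuine $T$-edges. Their relative order is also preserved.

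It remains to show that $\mathcal{O}$ equals exactly the set of crossing and enveloping back edges. I would establish this directly from the arc diagram definition as a small lemma, which makes the comparison purely combinatorial.
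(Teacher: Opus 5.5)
Your proposal is correct and follows the same route as the paper. Both restrict a KLX-minimum tree $T$ of $G$ to $V(B)$, show the restriction is a DFS tree of $B$, and compare open-edge sets; the paper uses that simple paths between vertices of a block stay inside the block, which is the clean form of your Step~1. Your crossing/enveloping argument giving $\mathcal{O}_{T_B}(e)\subseteq\mathcal{O}_T(e)$ makes explicit the monotonicity the paper only asserts "by the definition of KLX", so you can drop the tentative positional bookkeeping in Step~2 and the muddled first bullet of Step~1.
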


\begin{proof}
    Let $T$ be a KLX-minimum tree of $G$, and let $B$ be a block of $G$. 
    Consider the subtree $T[V_B]$ of $T$ induced by the vertices of $B$.
    Since $B$ is biconnected, all paths in $G$ between vertices of $B$ lie entirely within $B$.
    Consequently,  $T[V_B]$ is connected and acyclic, and thus forms a spanning tree (and hence an ordered DFS tree) of $B$.
    By the definition of KLX, the inequality 
    \(\KLX(B) \le \KLX(B, T[V_B]) \le \KLX(G)\) holds.
\end{proof}

\begin{lemma}\label{lem:num_suspenders<=vertex-connectivity-1}
    Let $k \ge 2$, and let $T$ be a DFS tree of a $k$-connected graph, rooted at a vertex~$r$.  
    Any subtree of $T$ rooted at a child of a vertex $v$ is then connected to the path $rTv$ by at least $k-1$ back edges.
    Moreover, unless $v = r$, at least one back edge from each subtree crosses the tree edge between $v$ and its parent. 
\end{lemma}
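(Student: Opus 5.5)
The key fact is the standard property of DFS trees: every back edge joins a vertex to one of its ancestors. Hence every edge of $G$ leaving the vertex set of a subtree $T(w)$ goes to an ancestor of $w$, that is, to a vertex of $rTv$ when $w$ is a child of $v$. I will combine this with Menger-type cut arguments.

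Fix a vertex $v$ with a child $w$, and set $S = V(T(w))$. Every edge from $S$ to $V\setminus S$ is either the tree edge $\{v,w\}$ or a back edge from $S$ into $rTv$. Back edges between two vertices of $S$ stay inside $T(w)$, and there are no edges from $S$ to other subtrees hanging off $rTv$.

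For the first claim, suppose the back edges from $S$ to $rTv$ have endpoints on $rTv$ forming a set $A$. The number of such back edges is at least $|A|$, so it suffices to show $|A| \ge k-1$. The set $A \cup \{v\}$ separates $S$ from the remaining vertices, provided some vertex lies outside $S \cup A \cup \{v\}$. In that case $k$-connectivity forces $|A \cup \{v\}| \ge k$. If $v \in A$ this already gives $|A| \ge k$; otherwise it gives $|A| \ge k-1$.

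The degenerate case, where every vertex lies in $S \cup A \cup \{v\}$, needs a separate argument. There I use the minimum degree instead, which is at least $\kappa \ge k$. Take a leaf $\ell$ of $T(w)$. All its neighbours other than its parent are ancestors of $\ell$. Those ancestors lying in $S$ are at most $|wT\ell|-1$, so I need a cleaner count. A better route is to choose $\ell$ as a deepest leaf of $T(w)$. If $S=\{\ell\}$, then $\ell$ has at least $k$ neighbours, all in $rTv$, giving at least $k-1$ back edges besides $\{v,w\}$. Otherwise I apply the separator argument. Any remaining degenerate situation means $V = S\cup rTv$, and then $|V|>k$ together with $k$-connectivity applied to the separator $A\cup\{v\}\setminus\{\text{one vertex}\}$ should suffice. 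This bookkeeping is the fiddly part, and I would settle it by the leaf-degree count.

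For the second claim, let $v \neq r$ with parent $p$. The vertex $v$ is not a cut vertex because $k \ge 2$. Removing $v$ therefore leaves $S$ connected to $r$. By the structure above, any path from $S$ to $r$ avoiding $v$ must leave $S$ through a back edge landing in $rTp$. Such a back edge $(x,u)$ with $x \in S$ and $u \preceq p$ has the edge $\{p,v\}$ on $uTx$, so it crosses that tree edge. The main obstacle is the degenerate-case handling in the first claim; the second claim is routine.
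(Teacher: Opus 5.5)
Your non-degenerate case is correct, and so is your argument for the second claim, which is essentially the paper's. The gap is the degenerate case $V = S\cup A\cup\{v\}$. It is not a bookkeeping issue: your reduction ``it suffices to show $|A|\ge k-1$'' asks for something that is false there.

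Take $G=K_{k+1}$ with $k\ge 3$. Every DFS tree is a Hamiltonian path $r=v_0,v_1,\dots,v_k$. For $v=r$ and $w=v_1$ you get $S=V\setminus\{r\}$ and $A=\{r\}$, so $|A|=1<k-1$. Yet there are exactly $k-1$ back edges $(v_i,r)$, $i\ge 2$. The failure is not an artefact of $v=r$: in $K_5$ with $v=v_1$ you get $A=\{v_0,v_1\}$, so $|A|=2<3$.

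Your proposed rescues do not close this. The leaf-degree count only settles $|S|=1$. The separator ``$A\cup\{v\}$ minus one vertex'' cannot work in general, since a complete graph has no vertex separators at all. The number of distinct endpoints is simply the wrong quantity, because many back edges can share an endpoint on $rTv$.

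The missing idea is to count edges rather than vertices. As you observed, the edges leaving $S=V(T(w))$ are exactly $\{v,w\}$ and the back edges from $S$ into $rTv$. Since $v\in V\setminus S$, these edges form an edge cut of $G$. Their number is therefore at least $\lambda(G)\ge\kappa(G)\ge k$, which leaves at least $k-1$ back edges. This is the paper's argument, and it needs no case distinction at all: the degenerate situation never arises.
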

\begin{proof}
    Suppose, for a contradiction, that for some subtree of $T$ rooted at a child of $v$ there are at most $k{-}2$ such back edges.
    Cutting all these back edges, along with the tree edge between $v$ and the respective child, would disconnect $G$.
    This would imply
    $\kappa(G) \le \lambda(G) \le k-1$, contradicting the $k$-connectivity of $G$. 

    Regarding the second statement, without such a crossing back edge, removing vertex $v$ would disconnect the subtree from the rest of $G$, contradicting the biconnectivity of $G$.
\end{proof}


\begin{corollary}\label{cor:KLX_upperbounds_branch-size}
    In a DFS tree $T$ of a biconnected graph $G$ every vertex has at most $\DTC(T) \leq \KLX(T)$ children.
\end{corollary}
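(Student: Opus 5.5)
The plan is to apply \cref{lem:num_suspenders<=vertex-connectivity-1} with $k=2$ to the children of a fixed vertex $v$ of $T$, and to charge one crossing back edge to each child. The inequality $\DTC(T)\le\KLX(T)$ is immediate from the definitions, since every crossing back edge of a tree edge $e$ lies in $\mathcal{O}(e)$. So the substance is to show that $v$ has at most $\DTC(T)$ children.

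First assume $v\neq r$, and let $p$ be the parent of $v$, with $v$ having children $w_1,\dots,w_m$. By the second part of \cref{lem:num_suspenders<=vertex-connectivity-1} (with $k=2$, using biconnectivity), each subtree $T(w_i)$ has a back edge $(x_i,u_i)$ with $x_i\in T(w_i)$ that crosses the tree edge $\{p,v\}$. The crossing condition $\{p,v\}\in u_iTx_i$ forces $u_i\preceq p$. The subtrees $T(w_i)$ are pairwise disjoint, so these $m$ back edges are distinct. Hence $m\le \dtc(\{p,v\})\le\DTC(T)$.

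For $v=r$, \cref{lem:root_DFST_biconnected-graph_non-branching} gives that $r$ has exactly one child. We need $1\le\DTC(T)$, which holds because a biconnected graph has more than two vertices and is therefore not a tree. It thus has a back edge $(x,u)$, and the tree edge of $uTx$ incident to $u$ is crossed by it.

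The only delicate point is the root case, where the crossing argument does not apply. It is handled entirely by \cref{lem:root_DFST_biconnected-graph_non-branching}, together with the observation that a biconnected graph contains a cycle, so $\DTC(T)\ge 1$.
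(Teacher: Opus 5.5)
Your proof is correct and follows the route the paper intends. The paper states this corollary without proof, directly after \cref{lem:num_suspenders<=vertex-connectivity-1}: for a non-root vertex, each child subtree contributes a distinct back edge crossing the edge to the parent, and the root is handled by \cref{lem:root_DFST_biconnected-graph_non-branching}. Your explicit check that $\DTC(T)\ge 1$ in the root case, which relies on the paper's convention $|V|>2$ for biconnectivity, fills in a detail the paper leaves implicit.
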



\begin{restatable}{lemma}{KLXminDFStreedegreeupperbound}
    \label{lem:KLX-min-DFS-tree_degree_upperbound}
    Let $G$ be a biconnected graph and $T$ a DFS tree of $G$. 
    Then any vertex of $G$ has degree at most $\KLX(T)+1$ if it is the root or a leaf of $T$, and degree at most $2\cdot\KLX(T)+1$ if it is an internal and not a branch vertex of $T$. 
\end{restatable}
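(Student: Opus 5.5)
Fix a vertex $x$ and count its incident edges in $T$: tree edges plus back edges. The goal in each case is to exhibit a single tree edge $e$ for which at least $\deg(x)-(\text{number of tree edges at }x)$, possibly adjusted by one, back edges lie in $\mathcal{O}(e)$. Bounding $|\mathcal{O}(e)|\le\KLX(T)$ then gives the claimed inequality.

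\emph{Leaf.} Suppose $x$ is a leaf with parent $p$. Then $x$ has exactly one tree edge $\{p,x\}$, and every other incident edge is a back edge $(x,u)$ with $u\prec p$ (since $u\neq p$, the graph being simple). Each such back edge contains $\{p,x\}$ in $uTx$, so it crosses $\{p,x\}$ and lies in $\mathcal{O}(\{p,x\})$. To confirm this in the arc diagram: the arc starts at the unique occurrence of $x$ and ends at an occurrence of $u$ later than the next occurrence of $p$, so its span contains the oriented step $x\to p$. Hence $\deg(x)-1\le\KLX(T)$.

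\emph{Root.} Suppose $x=r$. By \cref{lem:root_DFST_biconnected-graph_non-branching}, $r$ has a single child $c$ and one tree edge. Every back edge at $r$ has the form $(v,r)$ with $v\in T(c)$, $v\neq c$. Each such edge therefore spans the tree edge $\{r,c\}$, and since its arc ends at the final occurrence of $r$, the span includes the last step $c\to r$. So all $\deg(r)-1$ back edges lie in $\mathcal{O}(\{r,c\})$.

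\emph{Internal non-branch vertex.} Suppose $x$ has parent $p$ and unique child $c$, so it has two tree edges. Its back edges split into two kinds:
\begin{itemize}
\item \emph{upward} edges $(x,u)$ with $u\prec p$;
\item \emph{downward} edges $(v,x)$ with $v\in T(c)\setminus\{c\}$.
\end{itemize}
Every upward edge crosses $\{p,x\}$, by the same argument as in the leaf case, so their number is at most $\KLX(T)$. Every downward edge crosses $\{x,c\}$, as in the root case, so their number is also at most $\KLX(T)$. This gives only $\deg(x)\le 2\KLX(T)+2$, which is one more than the claim.

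The main obstacle is removing this extra $1$, and biconnectivity is the natural tool. By \cref{lem:num_suspenders<=vertex-connectivity-1} with $k=2$ applied to $v=x$, some back edge $(w,u)$ with $w\in T(c)$ crosses $\{p,x\}$. This edge is not incident to $x$, since its endpoints are $w\in T(c)$ and $u\prec x$. It crosses both $\{x,c\}$ and $\{p,x\}$; in particular it lies in $\mathcal{O}(\{x,c\})$, because its arc ends after the final occurrence of $x$.

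Hence $\mathcal{O}(\{x,c\})$ contains all downward edges together with this additional edge, so the number of downward edges is at most $\KLX(T)-1$. Adding the upward edges and the two tree edges gives
\[\deg(x)\le \KLX(T)+(\KLX(T)-1)+2=2\KLX(T)+1.\]
The only remaining check is that the arc-diagram spans indeed contain the relevant oriented steps, which is a routine verification from the definition of $\tau(T)$.
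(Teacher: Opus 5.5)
Your proof is correct and follows essentially the same route as the paper. In both, all back edges at a leaf or at the (non-branching) root lie in $\mathcal{O}$ of the single incident tree edge, and for an internal non-branch vertex the upward and downward back edges are charged to the two incident tree edges, with one extra back edge supplied by biconnectivity crossing both; the paper phrases that last step as a pigeonhole contradiction rather than your direct count. Since your extra edge $(w,u)$ also lies in $\mathcal{O}(\{p,x\})$ and is not incident to $x$, charging it to both sides would give the sharper bound $\deg(x)\le 2\cdot\KLX(T)$.
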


\begin{proof}
See Appendix~\ref{sec:appendix-a}, \cref{lem:proof_lems_sec_3}.
\end{proof}


\begin{restatable}{lemma}{maxdegreeandKLXlowerbound}
    \label{lem:max-degree_and_KLX_lowerbound}
    Let $G$ be a biconnected graph with 
    $\Delta(G) \ge \frac{k(k+1)}{2}+2$ for some $k \ge 1$. Then $\KLX(G) \ge k+1$. 
\end{restatable}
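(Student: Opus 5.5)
We argue by contraposition: suppose $G$ is biconnected with $\KLX(G) \le k$, fix a KLX-minimum tree $T$ (so $\KLX(T)\le k$), and show that every vertex $x$ has $\deg(x) \le \frac{k(k+1)}{2}+1$. By \cref{lem:KLX-min-DFS-tree_degree_upperbound}, if $x$ is the root or a leaf then $\deg(x)\le k+1 \le \frac{k(k+1)}{2}+1$. If $x$ is internal and not a branch vertex, the same lemma gives $\deg(x)\le 2k+1$. This is within the target for $k\ge 3$ but not for $k=1,2$. Hence the non-branching internal case needs a finer count, and the branch-vertex case is the main work.

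For a vertex $x$ with parent $p$ and children $w_1,\dots,w_m$ in left-to-right order, we have $m \le k$ by \cref{cor:KLX_upperbounds_branch-size}. The edges at $x$ are of three kinds: the tree edges $\{p,x\}$ and $\{x,w_i\}$, the back edges going up from $x$ to ancestors, and the back edges coming down into $x$ from descendants in the subtrees $T(w_i)$. The key step is to charge these edges to open-edge sets and bound them by $k$ at carefully chosen moments of the traversal.

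First, every back edge from $T(w_i)$ to $x$, together with every back edge from $x$ to an ancestor, crosses $\{x,w_i\}$ or $\{p,x\}$ respectively. Second, consider the tree edge $\{x,w_i\}$ at the moment the traversal returns from $w_i$ to $x$. The open set $\mathcal{O}(\{x,w_i\})$ then contains:
\begin{itemize}
\item the back edges from $T(w_i)$ landing at $x$ or above;
\item all back edges from $T(w_j)$, $j<i$, that land strictly above $x$ (these envelop $\{x,w_i\}$);
\item by \cref{lem:num_suspenders<=vertex-connectivity-1}, at least one such edge from each earlier subtree $T(w_j)$, since $x\ne r$.
\end{itemize}
So if $a_i$ denotes the number of back edges from $T(w_i)$ into $x$, we get roughly $a_i + (i-1) + (\text{edges from }T(w_i)\text{ crossing above }x) \le k$. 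Using \cref{lem:num_suspenders<=vertex-connectivity-1} once more, $T(w_i)$ has at least one edge crossing $\{p,x\}$, which gives $a_i \le k-i$. Summing over $i$ gives $\sum_i a_i \le \sum_{i=1}^{m}(k-i)$.

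Next we bound the up back edges $b$ from $x$, together with the tree edge to $p$. The edge $\{p,x\}$ has all $b$ up edges from $x$ plus at least one crossing edge from each of the $m$ subtrees open, so $b + m \le k$. The degree is then
\[
\deg(x) = 1 + m + b + \sum_i a_i \le 1 + k + \sum_{i=1}^{m}(k-i).
\]
For $m \le k$ this is at most $1+k+\frac{k(k-1)}{2} = \frac{k(k+1)}{2}+1$, which is the desired bound. The same computation with $m=1$ handles the non-branching internal case, and the root case follows from \cref{lem:root_DFST_biconnected-graph_non-branching}.

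The main obstacle is making the count at $\{x,w_i\}$ rigorous. We must verify from the arc-diagram definition that enveloping edges from earlier subtrees and the edges into $x$ from $T(w_i)$ are simultaneously open on the step $w_i\to x$, and that they are not double-counted. A back edge from $T(w_i)$ that lands exactly at $x$ closes at that occurrence of $x$, so the edge chosen to witness $a_i$ may have to be $\{x,w_i\}$ itself, or an edge just below it on the path to the deepest vertex. If the per-child bound turns out off by one, we would instead pick, for each $i$, the tree edge on the last step of the traversal of $T(w_i)$ and redo the count there. The target sum $\frac{k(k+1)}{2}$ strongly suggests that the intended bound is exactly ``child $i$ contributes at most $k-i+1$ edges at $x$'', summed over $i\le k$.
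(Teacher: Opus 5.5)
Your proposal is correct and follows essentially the paper's argument: the same per-child bound $a_i\le k-i$ (from the $i-1$ enveloping edges of the earlier subtrees plus one crossing edge of $T(w_i)$, via \cref{lem:num_suspenders<=vertex-connectivity-1}), the same bound $b\le k-m$ at the parent edge, and the resulting $\deg(x)\le 1+k+\sum_{i=1}^{m}(k-i)$. Your off-by-one worry is unfounded, since the arc of a back edge $(v,x)$ with $v\in T(w_i)$ ends exactly at the occurrence of $x$ immediately after $\tau(w_i)$ and so contains the step $w_i\to x$. Treating non-branching internal vertices as the case $m=1$ is a genuine improvement: the paper handles them with the bound $2k+1$ from \cref{lem:KLX-min-DFS-tree_degree_upperbound}, which exceeds $\frac{k(k+1)}{2}+1$ for $k\in\{1,2\}$.
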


\begin{proof}
    Suppose, for a contradiction, that there is a DFS tree $T$ of $G$ such that $\KLX(T) \le k$. 
    According to \cref{lem:KLX-min-DFS-tree_degree_upperbound}, any vertex that is the root or a leaf of $T$ has degree at most $k+1$, and any internal vertex that is not a branch vertex of $T$ has degree at most $2k+1$. 
    It thus now suffices to show that any other (branch) vertex of $T$ has degree at most $\frac{k(k+1)}{2}+1$ to get a contradiction with the assumption on $\Delta(G)$. 

    By \cref{cor:KLX_upperbounds_branch-size} the number of children of any branch vertex of $T$ is bounded by $\KLX(T)$. 
    Let $b$ be a branch vertex with $p \le k$ children. 
    Since $T$ cannot branch at its root (\cref{lem:root_DFST_biconnected-graph_non-branching}), $b$ has a parent. Let $e \in T$ be the edge connecting $b$ to its parent, and
    let $e_1, \cdots, e_p \in T$ be the edges connecting $b$ with its children. Without loss of generality, assume that the traversal of $T$ explores $e_1$ first, $e_2$ second, and so on.
    
    The remaining edges incident to $b$, if any, are back edges and must cross $e$, $e_1$, $\dots$, or $e_p$.  
    As $G$ is biconnected, for all $1 \le i \le p$, there exists a back edge spanning from the subtree below the $i$-th child to crossing $e$ (\cref{lem:num_suspenders<=vertex-connectivity-1}). 
    Moreover, this $i$-th back edge envelops the edges $e_{i+1}, \cdots, e_p$ and all edges below them. 
    Therefore, $e_i$ and $e$ can be crossed by at most $k-i$ and $k-p$ more back edges, respectively. 
    Hence, the degree of the branch vertex $b$ is at most 
    \begin{eqnarray*}
        p+1 + k-p + \sum_{i=1}^{p} (k-i) &=& -\frac{1}{2}\left( p- \left(k-\frac{1}{2}\right)\right)^2 +\frac{k(k+1)}{2}+\frac{9}{8} \le \frac{k(k+1)}{2}+1
    \end{eqnarray*}
\end{proof}

\subsection{Characterisation of graphs with $\bf \textbf{KLX} \leq 1$}

A graph $G$ is a \emph{cactus graph} if every block of $G$ is either an edge or a cycle~\cite{araujo_proper_2016}.

\begin{theorem}\label{thm:characterisation_of_KLX=1}
    A graph $G$ has $\KLX(G) \le 1$ if and only it is a cactus graph. 
\end{theorem}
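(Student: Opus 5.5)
We prove the two directions separately. The ``only if'' direction goes through the blocks of $G$ using \cref{lem:KLX_of_a_block_is_atmost_that_of_the_whole}. The ``if'' direction is an explicit construction of a DFS traversal.

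\emph{Only if.} Suppose $\KLX(G)\le 1$ and let $B$ be a block of $G$. By \cref{lem:KLX_of_a_block_is_atmost_that_of_the_whole} we have $\KLX(B)\le 1$. If $B$ is a single edge there is nothing to prove, so assume $B$ is biconnected and fix a DFS tree $T$ of $B$ with $\KLX(T)\le 1$.

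By \cref{cor:KLX_upperbounds_branch-size}, every vertex of $T$ has at most one child, so $T$ is a Hamiltonian path $r=v_1,\dots,v_n$. By \cref{lem:KLX-min-DFS-tree_degree_upperbound}, the root and the leaf have degree at most $2$ in $B$, and since $B$ is biconnected they have degree exactly $2$. Hence the leaf $v_n$ has exactly one back edge $(v_n,v_j)$, and the root has exactly one back edge $(v_i,v_1)$. By the second part of \cref{lem:num_suspenders<=vertex-connectivity-1}, the back edge from the leaf crosses every tree edge above $v_{n-1}$ that lies below the relevant point. The cleanest route is the following. Every back edge $(v_b,v_a)$ crosses all tree edges on $v_aTv_b$, so by $\DTC(T)\le\KLX(T)\le 1$ the paths spanned by the back edges are pairwise edge-disjoint.

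Biconnectivity, via the second part of \cref{lem:num_suspenders<=vertex-connectivity-1} applied at every internal vertex $v_i$, forces each tree edge $\{v_{i},v_{i+1}\}$ with $i\ge 1$ to be crossed by some back edge. The spans therefore tile the path into edge-disjoint intervals covering it. If there were two or more back edges, some consecutive intervals $[v_a,v_b]$ and $[v_b,v_c]$ would meet at a vertex $v_b$ that is crossed by no back edge. Removing $v_b$ would then disconnect $B$, a contradiction. So there is exactly one back edge, $(v_n,v_1)$, and $B$ is a cycle. Thus $G$ is a cactus.

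\emph{If.} Let $G$ be a cactus. We run a DFS from an arbitrary root and argue that every DFS tree achieves $\KLX\le 1$. In a DFS tree of a cactus, each cycle block $C$ appears as a tree path from its first-visited vertex $u$ (the block's attachment point) down to some vertex $v$, together with the single back edge $(v,u)$. This holds because the DFS restricted to a block is a DFS tree of that block (as in the proof of \cref{lem:KLX_of_a_block_is_atmost_that_of_the_whole}), and a DFS tree of a cycle is a path.

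Let $e$ be a tree edge inside block $C$ and consider the arc of another block $C'$ that is open at $e$. Its open interval in $\tau(T)$, from the last occurrence of $v'$ to the next occurrence of $u'$, contains only vertices of $T(u')$. All of these are visited after $C'$'s path below $u'$ is finished, so they are in subtrees hanging off $u'$ after the branch containing $C'$. We must check that this interval contains no oriented edge $y\to x$ of another block.

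\emph{Main obstacle.} This is the key subtlety: enveloping edges. Between the last visit to $v'$ and the return to $u'$, the traversal climbs the path of $C'$ upward, and the edges traversed upward are exactly $C'$'s own tree edges. Any other subtree is explored entirely before the final visit of $v'$ or entirely after reaching $u'$. However, when backtracking from $v'$ to $u'$, the traversal may pass intermediate vertices $w$ of $C'$ that have further children not yet explored. This cannot happen, because by the time $v'$ is finished, all descendants of $v'$ are done, and ancestors $w$ on the path still have their remaining children explored after returning. Those explorations lie \emph{within} the span, producing enveloping.

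So the plan must instead \emph{choose} the child order: at each vertex $w$, visit the child continuing the current cycle's path \emph{last}, after all other children. Then, once $v'$ is finished, the return to $u'$ passes only through vertices whose other children are already done. Hence the span of each arc contains only upward edges of its own cycle, and every $\mathcal{O}(e)$ has size at most $1$. Verifying that this ordering is consistent (each vertex lies on at most one ``current'' cycle path as a non-attachment vertex) is the step I expect to need the most care.
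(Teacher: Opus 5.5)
Your argument is correct. The ``if'' direction, once you drop the opening claim that \emph{every} DFS tree of a cactus has $\KLX\le 1$, is exactly the paper's construction: at each vertex, explore the blocks hanging off it before continuing along the cycle through which it was entered. That opening claim is false. If a vertex $w$ descends along its own cycle before exploring a pendant cycle at $w$, the pendant cycle's arc lies inside the outer cycle's arc, and one of its upward edges has $|\mathcal{O}(e)|=2$. The consistency issue you flag is immediate. A non-root vertex has a unique parent edge, which lies in a unique block, so there is at most one cycle-continuing child to postpone. The garbled sentence about ``the relevant point'' should simply be deleted.

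The ``only if'' direction takes a genuinely different route. The paper applies \cref{lem:max-degree_and_KLX_lowerbound} with $k=1$ to get maximum degree at most $2$ in every block, so each block is an edge or a cycle. You instead argue as follows.
\begin{enumerate}
\item \cref{cor:KLX_upperbounds_branch-size} makes $T$ a Hamiltonian path.
\item $\DTC(T)\le 1$ makes the back-edge spans edge-disjoint.
\item Biconnectivity makes the spans cover the path. (Minor fix: the lemma applied at $v_i$ yields the edge $\{v_{i-1},v_i\}$, not $\{v_i,v_{i+1}\}$; the last edge $\{v_{n-1},v_n\}$ is covered by the leaf's back edge.)
\item Two spans meeting at $v_b$ would make $v_b$ a cut vertex.
\end{enumerate}

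Your route is self-contained and uses only crossings. It therefore shows that a biconnected graph admitting a DFS tree with $\DTC(T)\le 1$ is already a cycle. It also sidesteps a subtlety in the paper's chain. The stated bound $2\KLX(T)+1=3$ in \cref{lem:KLX-min-DFS-tree_degree_upperbound} for internal non-branch vertices does not by itself exclude degree $3$ when $k=1$. One needs the sharper bound $2\KLX(T)$, which the appendix argument in fact proves. In exchange, the paper's route is a one-line application of a general degree lemma that it reuses in the $\KLX\le 2$ analysis.
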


\begin{proof}

For the only-if direction,  \cref{lem:KLX_of_a_block_is_atmost_that_of_the_whole} implies that every block $B$ of $G$ satisfies $\KLX(B) \le 1$. Applying \cref{lem:max-degree_and_KLX_lowerbound} with $k=1$, no block involves a vertex with degree 3 or greater. 
Therefore, each block $B$ must be either an edge or a simple cycle. 
Thus, $G$ is a cactus graph.

For the other direction, given a cactus graph $G$, we construct a DFS traversal of $G$ in which any tree edge is crossed by at most one back edge and enveloping never occurs. 
The traversal can start at an arbitrary vertex.
After the traversal enters a cycle block $C$ via some articulation point $u$, it prioritises visiting the neighbour blocks of $C$ (if any) in the block-cut tree before continuing along the cycle $C$.
By this strategy, the only back edge ever open in $C$ is the one connecting the final vertex of $C$ back to $u$, which closes when the traversal returns back to $u$. Thus, at most one back edge in $G$ is open at any time, and hence $\KLX(G) \leq 1$.
\end{proof}

\begin{corollary}
There is an $O(|V| + |E|)$ algorithm to test whether a graph has $\KLX \leq 1$.
\end{corollary}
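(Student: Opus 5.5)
The corollary follows from \cref{thm:characterisation_of_KLX=1}: it suffices to decide in linear time whether $G$ is a cactus graph. The plan is to compute the biconnected components of $G$ with the classical Hopcroft--Tarjan DFS-based algorithm, which runs in $O(|V|+|E|)$ time. It returns, for each block $B$, its edge set $E_B$ and (implicitly) its vertex set $V_B$. Since the blocks partition $E$, and each vertex lies in at most as many blocks as its degree, the total size $\sum_B (|V_B|+|E_B|)$ is $O(|V|+|E|)$.

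Next, check for every block $B$ that it is either a single edge or a cycle. A biconnected block with at least three vertices is a cycle if and only if $|E_B| = |V_B|$: a biconnected graph has minimum degree at least two, so $|E_B|\ge|V_B|$, with equality forcing every vertex to have degree exactly two, and a connected $2$-regular graph is a cycle. A block with two vertices is a single edge, since $G$ is simple. So for each block we compute $|V_B|$ and $|E_B|$ and test whether $|V_B|=2$ or $|E_B|=|V_B|$. The test costs $O(|V_B|+|E_B|)$ per block, hence linear time overall.

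Equivalently, one can avoid the counting argument. During a single DFS, a connected graph is a cactus if and only if no tree edge is crossed by two back edges. This can be checked by marking the tree edges on the path $uTv$ for each back edge $(v,u)$ and rejecting as soon as an edge is marked twice. Each edge is marked at most once before rejection, so this is again linear time.

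The only point needing care is the bookkeeping that keeps the per-block vertex counts linear overall. Articulation points appear in several blocks, but the count is bounded by $\sum_B |V_B| \le |V| + \#\text{blocks} - 1 \le |V|+|E|$. Everything else reduces directly to \cref{thm:characterisation_of_KLX=1} and standard linear-time block decomposition.
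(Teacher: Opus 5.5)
Your proof is correct and follows the route the paper leaves implicit: the corollary is read off from \cref{thm:characterisation_of_KLX=1} together with linear-time cactus recognition via the Hopcroft--Tarjan block decomposition. Your per-block test ($|V_B|=2$ or $|E_B|=|V_B|$) and the bound $\sum_B |V_B| = O(|V|+|E|)$ supply exactly the details the paper omits. Your alternative single-DFS check (no tree edge crossed by two back edges, i.e.\ $\DTC(G,T)\le 1$ for the DFS tree at hand) is also valid. Its ``if'' direction needs a one-line justification: if the fundamental cycles are pairwise edge-disjoint, then every cycle of $G$ is one of them.
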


    \subsection{Characterisation of graphs with $\bf \textbf{KLX} \leq 2$}
In the following, we first provide a structural characterisation of biconnected graphs with $\KLX = 2$, and then generalise this result to general graphs. 

\begin{lemma}\label{lem:back-paths_of_KLX=2_DFST_biconnected-graph}
    Let $T$ be a DFS tree of a biconnected graph $G$ with $\KLX(T)\leq2$. 
    Then any branch vertex $b$ has only two children in $T$, and at least one of the subtrees rooted at them has the following properties: 
    \begin{enumerate}
        \item The subtree is a path (contains no further branch vertices).
        \item All of the subtree is crossed by a single back edge that connects its (unique) leaf to an ancestor of $b$. No other back edges are incident to the subtree.
    \end{enumerate} 
\end{lemma}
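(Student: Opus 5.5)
Let $b$ be a branch vertex of $T$. By \cref{cor:KLX_upperbounds_branch-size}, $b$ has at most $\KLX(T)\le 2$ children, so it has exactly two, $w_1$ (visited first) and $w_2$. By \cref{lem:root_DFST_biconnected-graph_non-branching}, $b\neq r$; let $e$ be the tree edge from $b$ to its parent. By \cref{lem:num_suspenders<=vertex-connectivity-1}, each subtree $T(w_i)$ has a back edge $f_i$ that crosses $e$, i.e.\ it goes from $T(w_i)$ to a proper ancestor of $b$. We will show that $T(w_1)$ has properties 1 and 2.

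First consider any edge $e'$ of $T(w_2)$, or $e'=\{b,w_2\}$. Then $f_1$ envelops $e'$, because $b$ is a branch vertex on its tree path, $w_1$ precedes $w_2$, and the lower endpoint of $f_1$ lies in $T(w_1)$. Also $f_2$ crosses $e'$ whenever $e'$ lies on the path from $w_2$ (or from $b$) down to the lower endpoint of $f_2$. So on that path $\mathcal{O}(e')$ already contains $\{f_1,f_2\}$ and is full. This is information about $T(w_2)$, though. What we need is the constraint inside $T(w_1)$, and it comes from the fact that $f_2$ is still open when $b$ is left at the end. At the end of the traversal, the edge $e$ is crossed by $f_1$, $f_2$, and every other back edge from $T(b)$ to an ancestor of $b$. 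Hence exactly one back edge leaves $T(w_1)$ above $b$, and it is $f_1$. Let $\ell$ be its lower endpoint.

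Next we show that $T(w_1)$ is a path ending at $\ell$, with no back edge other than $f_1$ incident to it. Take any edge $e'$ on the path $w_1T\ell$, or $e'=\{b,w_1\}$. In the arc diagram, $f_1$ spans $e'$. We then argue that $f_2$ (or another edge) is forced to be open there as well, which would leave no room for any other back edge. The delicate point is that the arc of $f_2$ starts after $T(w_1)$ has been traversed, so $f_2$ is \emph{not} open on edges of $T(w_1)$. The pressure must therefore come from elsewhere. We use \cref{lem:num_suspenders<=vertex-connectivity-1} inside $T(w_1)$. Suppose $T(w_1)$ contains a branch vertex $b'$, or a vertex of $T(w_1)$ other than $\ell$ has a back edge going to a vertex of $bTw_1$ or into $T(w_1)$. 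In the branch case, applying the lemma at $b'$ gives two back edges crossing the edge above $b'$. Each of them must land at or above $b$ unless it lands inside $T(w_1)$. Any back edge that lands at a proper ancestor of $b$ is excluded by the previous paragraph. So such back edges land at $b$ or inside $T(w_1)$.

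Now consider the last tree edge traversed inside $T(w_1)$, or the edge $\{b,w_1\}$ at the moment of returning to $b$. The arc of $f_1$ ends at the next occurrence of its upper endpoint, which comes after $b$. So $f_1$ is open on $\{b,w_1\}$, together with every back edge from $T(w_1)$ that ends at $b$. Moreover, biconnectivity applied to the vertex $w_1$ forces a back edge from $T(w_1)$ other than through $w_1$ whenever the structure is non-trivial. The main obstacle is making this counting precise.

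My plan for that step is to look at the deepest failure. Let $x$ be the deepest vertex of $T(w_1)$ such that $T(x)$ is not a single path carrying only $f_1$. Removing $x$ must not separate $T(x)$'s child subtree, so that subtree needs a back edge crossing $x$. Together with $f_1$ (if $\ell$ lies in $T(x)$), and with whichever back edge leaves $T(x)$, this yields three back edges open at some edge of $T(w_1)$. The alternative is that $\ell\notin T(x)$, in which case $T(x)$ hangs off a branch vertex whose other subtree contains $\ell$. Then we recurse: that branch vertex has two children, and the enveloping edge $f_1$ plus a second forced edge overflow an edge of the later subtree, giving a contradiction. If instead it is $T(w_2)$ that satisfies the conclusion, the argument is symmetric with roles swapped. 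The statement only asks for ``at least one'' subtree, so it is enough to show that $T(w_1)$ works whenever $T(w_2)$ fails, by the overflow counts above.
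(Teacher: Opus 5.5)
\textbf{You are proving the claim for the wrong subtree, and the claim you set out to prove is false.} The traversal order breaks the symmetry between $T(w_1)$ and $T(w_2)$. The back edge $f_1$ envelops everything in $T(w_2)$. But, as you observe yourself, nothing from $T(w_2)$ is open while $T(w_1)$ is traversed. So the only constraint on $T(w_1)$ is that exactly one of its back edges leaves above $b$.

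Here is a concrete counterexample to your target. Let $r$ be the root with single child $b$. Let $b$ have children $w_1$ (visited first) and $w_2$. Let $w_1$ have two leaf children $x_1$ (first) and $x_2$. Add back edges $(x_1,r)$, $(x_2,b)$ and $(w_2,r)$. This graph is biconnected and $T$ is a DFS tree of it. In the traversal $r\,b\,w_1\,x_1\,w_1\,x_2\,w_1\,b\,w_2\,b\,r$, every tree edge has at most two open back edges, so $\KLX(T)=2$. Yet $T(w_1)$ contains the branch vertex $w_1$, and the back edge $(x_2,b)$ is incident to it. Only $T(w_2)$ satisfies the conclusion.

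Consequently, your ``deepest failure'' step, which is supposed to produce three open back edges on some edge of $T(w_1)$, cannot be made precise. Your closing remark that the $T(w_2)$ case is ``symmetric'' is also not an argument, since enveloping acts only on later siblings.

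\textbf{The fix is the observation you made in your second paragraph and then set aside.} Since $f_1$ envelops $\{b,w_2\}$ and every edge of $T(w_2)$, each of these edges can be \emph{crossed} by at most one back edge.
\begin{itemize}
\item A branch vertex inside $T(w_2)$ would have its parent edge crossed by two back edges, one from each child subtree (second part of \cref{lem:num_suspenders<=vertex-connectivity-1}). So $T(w_2)$ is a path.
\item Its leaf $\ell$ carries a back edge $(\ell,w)$. Suppose $w$ is not a proper ancestor of $b$. Then biconnectivity at $w$ forces another back edge crossing the tree edge just below $w$. The edge $(\ell,w)$ also crosses that tree edge, so together with $f_1$ it would have three open back edges. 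Hence $w$ lies above $b$, and $(\ell,w)$ crosses $\{b,w_2\}$ and all of $T(w_2)$.
\item Any further back edge incident to $T(w_2)$ would cross one of these already-full edges.
\end{itemize}
This is the paper's proof. It is the \emph{later} subtree that forms the back path, which matches the later picture in which the traversal runs down the main path first and picks up the back paths on the way back.
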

\begin{proof}
    Firstly, $b$ is not the root of $T$ (\cref{lem:root_DFST_biconnected-graph_non-branching}) and has only two children (\cref{cor:KLX_upperbounds_branch-size}). 
    Let $T_1,\ T_2$ denote the two subtrees rooted at these children and assume, without loss of generality, that in the ordering of $T$ that yields $\KLX(T) \leq 2$, tree $T_1$ precedes tree $T_2$. 
    
    By \cref{lem:num_suspenders<=vertex-connectivity-1}, there exists at least one back edge from $T_1$ to an ancestor of $b$ that envelops all edges in $T_2$.
    This entails that no edge in $T_2$ can be crossed by more than one back edge. 
    As a result, $T_2$ can not contain a branch vertex, as the edge from the branch vertex to its parent would be crossed by two back edges (\cref{lem:num_suspenders<=vertex-connectivity-1}). 
    Consequently, $T_2$ must be a path.

    Secondly, the unique leaf $\ell$ of $T_2$ must be incident to a back edge lest its parent be an articulation point. 
    And the other endpoint $w$ of this back edge must lie above the branch vertex $b$; otherwise, to prevent $w$ from being an articulation point, another back edge would need to cross $w$. But then the tree edge connecting $w$ to its only child would be crossed by two back edges, contradicting our earlier conclusion.
    Consequently, the back edge from $\ell$ to $w$ crosses all edges in $T_2$, and no other back edges are incident to $T_2$. 
\end{proof}

\noindent\textbf{A new perspective: a spanning path $\bf \tilde{T}$ with back paths.}
 Let us note that a biconnected graph with $\KLX = 1$ is a cycle, by \cref{thm:characterisation_of_KLX=1}. From now on, we exclude this case, and focus on characterising biconnected graphs with KLX exactly equal to 2.

Let $G$ be such a graph and $T$ a KLX-minimum tree of $G$. By \cref{lem:back-paths_of_KLX=2_DFST_biconnected-graph}, every branch vertex $b$ in $T$ has at least one path-like subtree $b \rightarrow \cdots \rightarrow \ell$, with a unique back edge $(\ell,v)$ that crosses the whole subtree. We refer to such a path $b \rightarrow \cdots \rightarrow \ell \rightarrow v$ as a \emph{back path} from $b$ to $v$, and denote it by $[b,v]$. At the lowermost branch vertex — which exists as the graph is biconnected and not a cycle — both subtrees become back paths, and we consider also the other back edges, that do not participate in back paths, as special cases of this concept.

Contracting the path-like subtrees to back paths yields now a reduced spanning tree, in fact a spanning path $\tilde{T}$. The notions of edge crossings and the $\KLX$ number can be extended to $\tilde{T}$, resulting in $\KLX(G,\tilde{T}) = \KLX(G,T)$.
Note also that there are now no enveloping back paths, as the tree $\tilde{T}$ no longer contains branch vertices.

\medskip

\begin{figure}
    \centering
    \begin{tikzpicture}[scale=1, transform shape]
        \draw[very thick] (0, 0) node[vertex] (root) [label=left:$r$] {}
        -- ++(0:0.8) node[vertex] (i1) {}
        -- ++(0:0.4) node[vertex] (k2) {}
        -- ++(0:0.8) node[vertex] (i2) {}
        -- ++(0:0) node[vertex] (k3) {} 
        -- ++(0:0.8) node[vertex] (i3) {}
        -- ++(0:0.4) node[vertex] (v1) [label={[label distance=-2pt]above:$l_2$}] {}
        -- ++(0:0.8) node[vertex] (b1) [label={[label distance=-2pt]below:$l'_1$}] {}
        -- ++(0:0.4) node[vertex] (k4)  {}
        -- ++(0:0.8) node[vertex] (i4) {}
        -- ++(0:0.4) node[vertex] (v2) [ label={[label distance=-2pt]below:$l_3$}] {}
        -- ++(0:0.8) node[vertex] (b2) {}
        -- ++(0:0.8) node[vertex] (b3) [label={[label distance=-2pt]below:$l'_3$}] {}
        -- ++(0:0.4) node[vertex] (k5) {}
        -- ++(0:0.8) node[vertex] (b4) [label={[label distance=-2pt]above:$l'_4$}] {}
        ;

        \node[above,yshift=-2pt] at (b2) {$l'_2=l_4$}; 

        \draw[dotted, thick] 
        (root) to [out=45, in=135] (i1)
        (k2) to [out=45, in=135] (i2)
        (k3) to [out=45, in=135] (i3)
        (k4) to [out=315, in=225] (i4)
        (k5) to [out=315, in=225] (b4)
        ;
        
        \draw[dashed, thick] 
        (root) to [out=60, in=120] (b1)
        (v1) to [out=300, in=240] (b2)
        (v2) to [out=75, in=105] (b3)
        (b2) to [out=300, in=240] (b4)
        ;
    \end{tikzpicture}
    \hfill
    \begin{tikzpicture}[scale=1, transform shape]
        \draw node[vertex] (root) at (0,-0.5) [label=left:$r$] {};
        \draw node[vertex] (v1) at (1,-1) [label=below:$l_2$] {};
         \draw[very thick] (0, 0) (root) -- (v1);         
         \draw[very thick, color=red] (v1)
         -- ++(90:1) node[vertex, color=black] (b1) [label=above:\color{black}$l'_1$] {};
         \draw[very thick, color=black] (b1)
         -- ++(0:1) node[vertex, color=black] (v2) [label=above:$l_3$] {};
         \draw[very thick, color=red] (v2)
         -- ++(-90:1) node[vertex, color=black] (b2) [label=below:\color{black}$l'_2$] {};
         \draw[dashed, thick] (v2) -- ++(0:1) node[vertex] (b3) [label=above:$l'_3$] {};
         \draw[dashed, thick] (root) -- (b1);
         \draw[very thick, color=red] (b2) -- (b3) {};
         \draw[very thick, color=black] (b3)
         -- ++(-90:1) node[vertex] (b4) [label=below:$l'_4$] {};
         \draw[dashed, thick] (b2) -- (v1);
         \draw[dashed, thick] (b2) -- (b4);
    \end{tikzpicture}
    \caption{
    (Left) A DFS tree of a biconnected graph with $\KLX=2$, consisting of four long back paths $[r, l'_1], [l_2, l'_2], [l_3, l'_3], [l_4, l'_4]$ (drawn in dashed lines) and short back paths (dotted lines).
    (Right) A structural version of the corresponding graph, where all edges are paths, and black edges may contain cactus chains.}
    \label{fig:chain_of_4_cactus_necklaces}
\end{figure}

\parhead{Long back paths}
We first remark that, as $\KLX(\tilde{T}) \le 2$, no tree edge of $\tilde{T}$ can be crossed by more than two back paths.
Consequently, the root $r$ is incident to at most two back paths. 
Let us then define a notion of \textit{long back paths} by induction.\\

\noindent\emph{Initialisation.}  The first long back path $[l'_1, l_1]$ is defined as follows:
\begin{enumerate}[(i)]
    \item If $r$ is incident to only one back path $[ u,r]$, we set $[l'_1, l_1] = [u,r]$.
    \item If $r$ is incident to two back paths $[ u_1,r]$ and $[u_2,r]$, with $u_1 \preceq u_2$, we set $[l'_1, l_1] = [u_2,r]$.
\end{enumerate}

\noindent\emph{Induction.} 
 Let $[l'_i, l_i]$ be lastly defined long back path, with $l_i \prec l'_i$.
\begin{enumerate}[(i)]
    \item If $l'_i$ is the single leaf of $\tilde{T}$, then all long back paths have been defined.
    \item Otherwise, 
    to avoid being an articulation point, the branch vertex $l'$ must be crossed by a back path, but no more than one, as $\KLX(\tilde{T}) \le 2$. 
    This unique back path $[l'_{i+1},l_{i+1}]$ is defined as the next long back path. 
    Moreover, $l_i \prec l_{i+1} \prec l'_i \prec l'_{i+1}$. 
    For $i=1$, it comes from the definition of the first long back path as the ``longest'' one. For $i\geq2$, the relation holds as $[l'_i, l_i]$ is the only back path to cross $l'_{i-1}$, hence, $l_i \prec l'_{i-1} \preceq l_{i+1} $
\end{enumerate}

\noindent All long back paths, hence, satisfy the following relation: 
\begin{ceqn}
\begin{equation}\label{eq:KLX2_braids}
    r = l_1 \prec l_2 \prec l'_1 \preceq l_3 \prec l'_2 \preceq l_4 \prec\cdots \prec l'_{p-2} \preceq l_p \prec l'_{p-1} \prec l'_p, 
\end{equation}
\end{ceqn}
Note that if $\Delta(G) = 4$, then $l_{i+2}$ can be equal to $l'_{i}$ for all $1 \le i \le p-2$ (for instance, in \cref{fig:chain_of_4_cactus_necklaces}, $l'_2 = l_4$). Note however, that $\Delta(G) < 5$ according to \cref{lem:max-degree_and_KLX_lowerbound}.

\medskip

\parhead{Short back paths}
Let us now classify the remaining back paths, called \textit{short back paths}.
By construction, for each $1\leq i \leq p-1$, the tree path between $l_{i+1}$ and $l'_i$ is crossed by two long back paths, hence, no other back path can cross one of these edges.

However, for each $2 \le i \le p-1$, if $l_{i+1} \neq l'_{i-1}$, all edges along the tree path $l'_{i-1} \tilde{T} l_{i+1}$ are crossed only by the long back path $[l'_{i}, l_{i}]$, allowing one more back path to cross them. 
By characterisation of graphs with KLX 1, the only remaining possibility at this location is arbitrarily many \emph{short back paths} $[k_{i, 1}', k_{i, 1}], \ldots, [k'_{i, m}, k_{i, m}]$ ($m \ge 0$), aligned in series, as
\begin{ceqn}
\begin{equation}\label{eq:KLX2_short_bps}
    l_i \prec l'_{i-1} \preceq k_{i, 1} \prec k_{i, 1}' \preceq k_{i, 2} \prec k_{i, 2}' \preceq \cdots \preceq k_{i, m} \prec k_{i, m}' \preceq l_{i+1} \prec l'_i 
\end{equation}
\end{ceqn}

Finally, the only non-treated location are on the tree path between $l_1$ and $l_2$ and between $l'_{p-1}$ and $l'_{p}$. Similarly, the only possibility at these locations are arbitrarily many short back paths, aligned in series. We spot these short back paths with indices 1 and p, respectively.
\begin{ceqn}
\begin{align}\label{eq:KLX2_short_bps_first}
    &\text{If $p = 1$:\hspace{3.2em}}  l_1 \preceq k_{1, 1} \prec k_{1, 1}' \preceq k_{1, 2} \prec k_{1, 2}' \preceq \cdots \preceq k_{1, m} \prec k_{1, m}' \preceq l_{1}' \\
    &\text{If $p \geq 2$:\hspace{1em}} \left\{ 
    \begin{aligned} 
    l_1 \preceq k_{1, 1} \prec k_{1, 1}' \preceq k_{1, 2} \prec k_{1, 2}' \preceq \cdots \preceq k_{1, m} \prec k_{1, m}' \preceq l_{2}  \\
    l'_{p-1} \preceq k_{p, 1} \prec k_{p, 1}' \preceq k_{1, 2} \prec k_{p, 2}' \preceq \cdots \preceq k_{p, m} \prec k_{p, m}' \preceq l'_{p} \nonumber
    \end{aligned}
    \right.
\end{align}
\end{ceqn}

\begin{theorem}[Biconnected graph with $\KLX =2$]  \label{th:biconnect_KLX2}
    A biconnected graph $G$ has $\KLX =2$ if and only if it admits a DFS path $\tilde{T}$, and a partition of its back paths into long and short ones such that
    \begin{enumerate}[(i)]
        \item long back paths satisfy equation (\ref{eq:KLX2_braids}), and
        \item short back paths satisfy equations (\ref{eq:KLX2_short_bps}) and (\ref{eq:KLX2_short_bps_first}).
    \end{enumerate}
\end{theorem}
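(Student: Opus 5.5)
The proof splits into two directions. The ``only if'' direction is essentially the construction carried out just before the statement. The ``if'' direction needs an explicit ordered DFS tree built from the path $\tilde{T}$ and its back paths, together with a count of open edges at every tree edge. Throughout, $G$ is biconnected and not a cycle. If $G$ were a cycle it would have $\KLX = 1$ by \cref{thm:characterisation_of_KLX=1}, and excluding that case is what makes $\KLX=2$ exact rather than merely $\KLX\le 2$.

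\emph{Only if.} Assume $\KLX(G)=2$ and let $T$ be a KLX-minimum tree. By \cref{lem:back-paths_of_KLX=2_DFST_biconnected-graph}, every branch vertex has exactly two children, and at least one of the two child subtrees is a path traversed by a single back edge to a proper ancestor. Contract each such subtree into a back path; this gives the spanning DFS path $\tilde{T}$ with $\KLX(G,\tilde{T})=2$. The inductive definition of long back paths then yields (\ref{eq:KLX2_braids}). At each step, biconnectivity forbids $l'_i$ from being an articulation point, so some back path must cross it, and $\KLX\le 2$ leaves room for at most one. For the short back paths, I will argue segment by segment. On $l_{i+1}\tilde{T} l'_i$ two long paths already cross, so no further back path fits there. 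On $l'_{i-1}\tilde{T} l_{i+1}$, as well as on the two end segments, only one slot is free. Any family of back paths sharing a single slot cannot nest or overlap: a nested path, or two overlapping ones, would produce a third crossing on some edge. The paths in such a slot are therefore pairwise edge-disjoint intervals arranged in series, which is exactly (\ref{eq:KLX2_short_bps}) and (\ref{eq:KLX2_short_bps_first}). One must also check that no short back path straddles two segments, since it would then cross an edge that is already doubly covered. In this direction the main care point is keeping the $\preceq$ versus $\prec$ endpoint cases precise, including the degree-4 coincidences $l'_{i}=l_{i+2}$.

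\emph{If.} Given $\tilde{T}$ and the partition, I build an ordered DFS tree $T$ of $G$. A back path $[b,v]$ whose interior is nonempty becomes a pendant path hanging off $b$ on the tree, closed by a back edge to $v$. It is ordered as the \emph{second} child of $b$, after the continuation of $\tilde{T}$. The DFS property holds because $v\prec b$ on $\tilde{T}$ and the interior vertices of the back path have no other neighbours.

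Next I bound $|\mathcal{O}(e)|$ for every edge $e$. With this ordering, a back path hanging at $b$ is explored only after the whole remainder of $\tilde{T}$ below $b$ has been finished. Its own back edge is open only while its pendant path is traversed, and the only other edges open at that moment cross $b$ in $\tilde{T}$. Hence it envelops nothing. The count therefore reduces to crossings on $\tilde{T}$ plus the single closing edge of the current pendant path. By (\ref{eq:KLX2_braids})--(\ref{eq:KLX2_short_bps_first}), every vertex, and hence every edge, of $\tilde{T}$ is crossed by at most two back paths, so each edge of $\tilde{T}$ has at most two open edges. For an edge inside the pendant path of $[b,v]$, the open edges are its own back edge plus those crossing $b$ and ending strictly above $v$.

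The step I expect to be the main obstacle is exactly this last bound, i.e.\ showing that on a pendant path the total stays at most $2$. My first attempt is to show that every back path crossing the vertex $b$ has an endpoint at or below $v$'s level, so that it closes before or at $v$ and is not counted. This is immediate for short back paths, since nothing else shares their slot. For long back paths, (\ref{eq:KLX2_braids}) shows that $b=l'_i$ is crossed only by $[l'_{i+1},l_{i+1}]$, which gives at most $2$. If the order of exploration matters, I will choose the child ordering at each $l'_i$ to realise this bound.

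Finally, $\KLX\ge 2$ holds because $G$ is biconnected and not a cycle, so by \cref{thm:characterisation_of_KLX=1} it is not a cactus and hence $\KLX(G)\ge 2$.
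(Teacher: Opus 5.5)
Your overall plan is the paper's. The forward direction is the construction preceding the statement, and the converse is the traversal that runs straight down $\tilde T$ and picks up each back path as the second child of its branch vertex on the way back. The genuine gap is in the step you single out as the main obstacle, the count on pendant edges, because it rests on a wrong model of when back edges are open.

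The back edge $(\ell,v)$ of the pendant at $b$ is not open ``only while its pendant path is traversed''. It opens at the last occurrence of $\ell$ and stays open until the walk next reaches $v$. So it does envelop every pendant hanging at a vertex strictly between $v$ and $b$ on $\tilde T$, since those pendants are visited on the way up.

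Dually, while the pendant at $b$ is traversed, the walk has not yet returned to $b$. Hence every back path crossing $b$ in $\tilde T$ is open on that pendant, wherever it ends relative to $v$. Your earlier remark that the other open edges are exactly those crossing $b$ is the correct one; the qualifier ``ending strictly above $v$'' is not.

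For the same reason, the planned argument ``it closes before or at $v$ and is not counted'' fails: closing at a vertex between $b$ and $v$ happens only after the pendant is finished. Two consequences follow:
\begin{itemize}
\item Your formula drops $[l'_{i+1},l_{i+1}]$, which ends below $l_i$ but is open on the pendant of $[l'_i,l_i]$. You then count it anyway, contradicting your own formula.
\item The claim you call immediate for short paths is false, since $[l'_i,l_i]$ crosses $k'_{i,j}$ and ends strictly above $k_{i,j}$.
\end{itemize}

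The repair is as follows. The count on a pendant at $b$ is exactly $1$ plus the number of back paths crossing $b$ in $\tilde T$. So what you need is that every vertex carrying a pendant is crossed by at most one back path. This follows from (\ref{eq:KLX2_braids})--(\ref{eq:KLX2_short_bps_first}): $l'_i$ with $i<p$ is crossed only by $[l'_{i+1},l_{i+1}]$, and $k'_{i,j}$ only by the long back path spanning its segment.

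The leaf $l'_p$ of $\tilde T$ needs a separate word. It has no continuation, so ``second child'' is undefined there, and it may carry two pendants: the last long path and a short one with $k'_{p,m}=l'_p$. Whichever is visited first envelops the other, which again gives $2$.

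Finally, ``every vertex, and hence every edge, of $\tilde T$ is crossed by at most two back paths'' is a non sequitur. Back paths joining the two endpoints of a single tree edge cross that edge but no vertex. Check edges directly instead: edges of $l_{i+1}\tilde T l'_i$ carry exactly the two consecutive long paths, and all other edges carry one long path and at most one short one.

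With these corrections your argument goes through, and it is more explicit than the paper's own converse, which only appeals to the figure.
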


\begin{proof}
The previous paragraphs prove the forward direction.
In the opposite direction, a graph with a DFS path rooted at $r$, with long and short back paths satisfying all previous equations can be traversed in a way that no edge is enveloped or crossed by two back edges. 
Starting from the root $r$, enveloping can be avoided by first going straight down to the leaf $b_n$ and then visiting the back paths during the back traversal, as stated in the proof of \cref{lem:back-paths_of_KLX=2_DFST_biconnected-graph}. 
For example, \cref{fig:chain_of_4_cactus_necklaces} illustrates that no tree edge is crossed by more than two back paths. 
Thus, this traversal results in a DFS tree with $\KLX=2$. 
Together, all arguments from this subsection prove \cref{th:biconnect_KLX2}.
\end{proof}
    
\begin{restatable}{corollary}{corcharKLX}
    \label{cor:charKLX2}
    A graph $G$ has $\KLX \le 2$ if and only if it admits a DFS tree $T$ with the following structural properties:
    \begin{enumerate}
    \item For each biconnected component $C$ of $G$, the restriction of $T$ to $C$ is a DFS tree $T_C$ exhibiting the back-path structure characterised in \cref{th:biconnect_KLX2}.
    \item The articulation points linking each component $C$ to the rest of the graph satisfy the following:        \begin{itemize}
        \item Components with $\KLX = 2$ are articulated only along the main tree path or along one of the last (either long or short) back paths of the corresponding DFS tree $T_C$.
        \item All other articulation points connect exclusively to cactus subgraphs.
        \end{itemize}
    \end{enumerate}
    
\end{restatable}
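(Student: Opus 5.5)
The plan is to prove the two directions separately, reducing everything to the block-cut tree and to two facts already established: \cref{lem:KLX_of_a_block_is_atmost_that_of_the_whole} (the restriction $T[V_B]$ of a DFS tree of $G$ to a block $B$ is a DFS tree of $B$ with no larger KLX), and \cref{th:biconnect_KLX2} together with \cref{thm:characterisation_of_KLX=1} for the blocks themselves. The key observation is how back edges of different blocks interact in a traversal. Every back edge lies inside a single block. If the traversal leaves a block $C$ through an articulation point $a$ into a child block $D$, then while the subtree $T(a')$ of the child $a'$ of $a$ in $D$ is explored, every back edge of $C$ that is open at the tree edge $\{a,a'\}$ stays open throughout. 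So the back edges open at the moment we are at $a$ in $C$ are added to the congestion inside $D$ and its descendants.

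\textbf{Forward direction.} Take a KLX-minimum tree $T$ with $\KLX(T)\le 2$. Property 1 follows from the argument of \cref{lem:KLX_of_a_block_is_atmost_that_of_the_whole}: $T_C=T[V_C]$ is a DFS tree of $C$ with $\KLX(T_C)\le 2$. Hence $C$ is an edge, a cycle, or a block satisfying \cref{th:biconnect_KLX2}, and the derivation preceding that theorem applies verbatim to $T_C$. For property 2, let $a$ be an articulation point of a block $C$ with $\KLX(C)=2$ from which a child block $D$ hangs, and suppose $D$ is not a single edge. Then $D$ contains a cycle, so inside $D$ some tree edge is crossed by at least one back edge of $D$. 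Therefore, at the position of $a$ in the traversal of $T_C$, at most one back edge of $C$ may be open (crossing or enveloping). I will check against (\ref{eq:KLX2_braids})--(\ref{eq:KLX2_short_bps_first}) that the positions with at most one open back edge of $C$ are exactly the main path $\tilde T$ outside the doubly-crossed stretches $l_{i+1}\tilde T l'_i$, the short back paths, and the last back paths. These last back paths are the ones traversed at the end, after which nothing is enveloped. The interior of any other back path is enveloped by the long back path above it, so an articulation point there forces the attached blocks to add nothing, i.e.\ they must be trees.

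Pinning this down precisely is the main obstacle. The statement's wording (``along the main tree path'', ``exclusively to cactus subgraphs'') is imprecise, so I will first fix the intended reading: an articulation point where exactly one back edge of $C$ is open may carry only cactus pieces, and an articulation point with two open back edges may carry only trees. The proof would then be a case analysis on the vertex's position in the interleaving (\ref{eq:KLX2_braids}). A second subtlety is that $T_C$ may be rooted at a vertex other than the articulation point through which $C$ is entered. I will handle this by rooting the analysis at the block-cut tree's root block and noting that the entry vertex is the root of $T_C$.

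\textbf{Backward direction.} I will build the traversal recursively over the block-cut tree. Inside each $\KLX=2$ block, use the traversal from the proof of \cref{th:biconnect_KLX2}. Inside each cactus block, use the traversal from \cref{thm:characterisation_of_KLX=1}, which visits pendant blocks before continuing along the cycle. A pendant subgraph at an articulation point $a$ is then entered with at most $2-c$ back edges open, where $c$ is the congestion required by the pendant. The structural conditions guarantee this budget: trees need $0$, cactus pieces need $1$, and $\KLX=2$ blocks need $2$, which is only allowed where nothing is open. For a $\KLX=2$ block hung from a cactus, this requires entering it at a vertex where the cactus contributes nothing open, which I will arrange by choosing the traversal order at $a$ so that such blocks are explored before the cycle through $a$ is continued. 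This yields a DFS tree with $\KLX\le 2$.
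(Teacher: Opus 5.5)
Your overall frame matches the paper's. You restrict $T$ to blocks, and a pendant $H_a$ at an articulation point $a$ of a $\KLX=2$ block $C$ gets budget $2$ minus the number of back edges of $C$ open when the traversal enters it. The gap is in how you count those open edges, and your forward-direction case analysis rests on that count.

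You measure openness at $a$ by static crossing numbers on $\tilde T$: the doubly-crossed stretches $l_{i+1}\tilde T l'_i$, or a back path carrying its own back edge plus the enveloping one. But $\mathcal{O}(e)$ counts only back edges open during the \emph{upward} traversal of $e$, and you choose when $H_a$ is entered. Putting pendant subtrees first among the children of $a$ never increases congestion. A back edge open at the first visit of $a$ lands strictly above $a$, so it stays open at every later visit. Meanwhile the back edges of $H_a$ all close on returning to $a$.

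The idea you are missing is the paper's key observation. By \cref{lem:back-paths_of_KLX=2_DFST_biconnected-graph}, $T_C$ descends the whole main path before entering any back path, and the back paths are then entered bottom-up on the return. This gives the following budgets:
\begin{itemize}
\item A pendant at a main-path vertex is explored with no back edge of $C$ open, so it has budget $2$ anywhere on the main path.
\item A pendant at an interior vertex of a back path, explored during that path's descent, sees only the back edge enveloping it, not the path's own back edge, so it has budget $1$.
\item Only the deepest back paths can be entered with nothing open. These hang at the lowermost branch vertex, where the order of the two path-like subtrees is free. The paper's subcases for the last long and last short back path handle exactly this.
\end{itemize}

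As a result, the classification you plan to verify is false, and it partly contradicts the statement itself.
\begin{itemize}
\item Hang a copy of $G_2$ at a main-path vertex strictly between $l_{i+1}$ and $l'_i$, and explore it before descending further. This keeps $\KLX=2$, and the statement explicitly allows $\KLX=2$ components anywhere along the main path; your static count would allow only trees there.
\item Hang a triangle at an interior vertex of an enveloped long back path, and explore it before descending to that path's leaf. All congestions stay at most $2$, so ``they must be trees'' fails.
\end{itemize}
Your ``last back paths'' are also misidentified. The back paths traversed at the end are the topmost ones, and these are enveloped as soon as $p\ge 2$; the unenveloped ones are the deepest.

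The backward direction needs the same repair. Your claim that a pendant is ``entered with at most $2-c$ back edges open'' holds only if you schedule pendants first at every articulation point of the $\KLX=2$ blocks as well, not only of cactus blocks, and use the dynamic counts above.
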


\begin{proof}[Proof overview.]
We consider a KLX-minimum DFS tree and restrict it to each biconnected component $C$. The possible locations of articulation points then follow from a case analysis based on the structural characterisation given in \cref{th:biconnect_KLX2}.
\end{proof}

    \subsection{Algorithms for recognising graphs with $\bf \textbf{KLX} \leq 2$}

For an algorithm recognising graphs with $ \text{KLX} \leq 2$, we first need three structural definitions:
    \begin{enumerate}[(i)]
        \item A \textit{2-high-degree cycle} is a simple cordless cycle containing exactly two vertices of degree greater than 2.
        \item A \textit{ladder pattern} is a sequence of simple cycles such that:
        \begin{itemize}
            \item No cycle contains a vertex of degree 5 or higher.
            \item Any two adjacent cycles in the sequence share exactly one common path.
            \item Non-adjacent cycles in the sequence share no common path.
        \end{itemize}
        \item Let $\mathcal{C}_1, \ldots, \mathcal{C}_p$ (with $p \geq 2$) be a ladder pattern, and let $(u_1,d_1), \ldots,(u_{p-1},d_{d-1})$ be the paths common to each pair of neighbouring cycles, such that all $u_i$ (respectively all $d_i$) lie on the same side of the ladder. This means that all paths from $u_i$ to $u_{i+1}$ belong only to $\mathcal{C}_{i+1}$.
        A \textit{zigzag motif of the ladder pattern} is then a sequence of edges of the form:
        \begin{itemize}
            \item $\{(v,d_1),(u_1,u_2), (d_2, d_3), (u_3, u_4), \ldots, (\cdot,w)\}$, or
            \item $\{(v,u_1), (d_1,d_2), (u_2, u_3), (d_3, d_4), \ldots, (\cdot,w)\}$,
        \end{itemize}
         where $v$ and $w$ are vertices located on one of the paths composed of degree 2 vertices, either between $u_1$ and $d_1$ or between $u_p$ and $d_p$, respectively.
    \end{enumerate}


 \begin{algorithm}[htbp]
\caption{An algorithm recognising biconnected graphs with $ \text{KLX} \leq 2$}
\label{alg:KLX2}
\begin{algorithmic}[1]
\State Check that there is no vertex of degree at least 5, and test whether $\KLX(G)\leq 1$. \label{step:initial}
\State Replace every 2-high-degree cycle by a path of length~2 between the two high degree vertices. Proceed linearly, without using the ``replaced'' edges.
\label{step:ignore_short}
\State Verify that the resulting graph forms a ladder pattern.\label{step:braid_detecting}
\State Check the placement of the replaced cycles:
\begin{itemize}
    \item If the ladder pattern consists of a single cycle, accept if there is a vertex of degree 3.
    \item Otherwise, verify that there exists a zigzag motif of the ladder pattern and that all replaced cycles are correctly positioned along this zigzag motif.
\end{itemize}
 \label{step:final_check}
\end{algorithmic}
\end{algorithm}

 \begin{restatable}{proposition}{algbicKLX}
 \label{prop:algoKLX2}
    A biconnected graph $G$ passes all the steps of \cref{alg:KLX2} if and only if $\KLX(G) \leq 2$. \Cref{alg:KLX2} has time complexity $\mathcal{O}(|V| + |E|)$. 
\end{restatable}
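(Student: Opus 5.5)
The plan is to establish the equivalence by matching each step of \cref{alg:KLX2} with the structural characterisation of \cref{th:biconnect_KLX2}, treating the $\KLX\le 1$ case separately. The $\KLX\le 1$ case is handled by step~\ref{step:initial} through \cref{thm:characterisation_of_KLX=1}: a biconnected cactus is a cycle or an edge. The degree test in the same step is justified by \cref{lem:max-degree_and_KLX_lowerbound} with $k=2$, which gives $\Delta(G)\le 4$ whenever $\KLX(G)\le 2$. From then on I assume $G$ is biconnected, not a cycle, and has $\Delta(G)\le 4$.

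For the forward direction ($\KLX(G)\le 2 \Rightarrow$ acceptance), I take a KLX-minimum tree, contract it to the spanning path $\tilde T$, and use the long/short decomposition of (\ref{eq:KLX2_braids})--(\ref{eq:KLX2_short_bps_first}).

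1. I show that the 2-high-degree cycles are exactly the cycles formed by a short back path together with the segment of $\tilde T$ it spans. By (\ref{eq:KLX2_short_bps}) the internal vertices of that segment have degree $2$, and only the endpoints $k_{i,j},k'_{i,j}$ can have higher degree.
2. Conversely, I argue that a long back path together with its spanned tree segment can never be a chordless cycle with only two high-degree vertices, because $l_{i+1}$ and $l'_{i-1}$ lie strictly inside it. For the extreme long back paths this needs a small check.
3. Replacing each such cycle by a length-$2$ path leaves exactly $\tilde T$ together with the long back paths. The interleaving (\ref{eq:KLX2_braids}) then makes the cycles $\mathcal C_i$, each formed by consecutive braid intervals, into a ladder pattern whose shared paths are the segments $l_{i+1}\tilde T l'_i$.
4. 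The set of tree-path segments of $\tilde T$ that carry short back paths alternates sides of the ladder. This is precisely a zigzag motif, so step~\ref{step:final_check} accepts.

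For the converse, I reverse the construction. A ladder pattern $\mathcal C_1,\dots,\mathcal C_p$ together with a zigzag motif determines a Hamiltonian-like route: follow the zigzag edges and alternate with the long ``rails''. This route yields a DFS path $\tilde T$ whose non-path edges are long back paths satisfying (\ref{eq:KLX2_braids}). I then reinsert the replaced cycles along zigzag segments as short back paths in series, which gives (\ref{eq:KLX2_short_bps}) and (\ref{eq:KLX2_short_bps_first}). The backward direction of \cref{th:biconnect_KLX2} then gives $\KLX(G)\le 2$. The single-cycle case ($p=1$) is a cycle with short back paths in series, which needs only a degree-3 vertex to serve as the root side.

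For linear time, step~\ref{step:initial} is linear by the corollary to \cref{thm:characterisation_of_KLX=1}. Step~\ref{step:ignore_short} needs one pass that walks maximal degree-2 paths between high-degree vertices, detecting two parallel such paths with the same endpoints. Step~\ref{step:braid_detecting} walks the faces of the resulting graph, which has maximum degree $\le 4$ and an essentially path-like cycle structure. Step~\ref{step:final_check} tries only the two zigzag variants, each checked by a single scan.

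The main obstacle I expect is the bijection between ladder-plus-zigzag and the braid structure in the backward direction. In particular, I have to show that the ladder decomposition is essentially unique, up to reversal and the choice of side, so that the algorithm's choice does not miss a valid $\tilde T$. I also have to handle the degenerate cases $l'_{i}=l_{i+2}$, where degree-4 vertices are shared. I would try to handle these by a careful local analysis at vertices of degree 3 and 4.
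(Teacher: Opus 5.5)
Your overall route is the same as the paper's. You match each step of \cref{alg:KLX2} against the long/short back-path structure of \cref{th:biconnect_KLX2}. In the converse you rebuild a DFS path along the zigzag motif and reinsert the replaced cycles as short back paths. The genuine problem is your forward-direction claim~2, which is false for the first long back path; consequently claims~1 and~3 fail as stated.

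Take the first long back path $[l'_1,l_1]$ with $l_1=r$. By \cref{lem:root_DFST_biconnected-graph_non-branching} the root has a single tree edge. So if $r$ carries no back path other than $[l'_1,r]$, then $\deg(r)=2$. The only vertices of degree $>2$ on the cycle $r\tilde T l'_1\cup[l'_1,r]$ are then $l_2$, $l'_1$, and the endpoints of short back paths on $r\tilde T l_2$. Whenever that segment carries no short back path, this cycle is a chordless 2-high-degree cycle, and this happens for genuine KLX-minimum trees. For instance, take the path $r,a,l_2,b,l'_1,c,l'_2$ with back paths $l'_1xr$, $l'_2yl_2$ and $l'_2zl'_1$. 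This biconnected non-cycle has a traversal with $\KLX=2$, and its only vertices of degree $>2$ on $r,a,l_2,b,l'_1,x$ are $l_2$ and $l'_1$.

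Step~\ref{step:ignore_short} collapses such a cycle, so what remains is not $\tilde T$ plus the long back paths. The ladder found in Step~\ref{step:braid_detecting} then has one cycle fewer than your braid, and your claim~4 matches short back paths against the wrong structure. No local ``small check'' rescues the claim; the missing idea is that you must be allowed to \emph{change} $\tilde T$. The paper observes that in this situation $r\tilde T l_2\cup[l'_1,r]$ is a path of degree-2 vertices from $l_2$ to $l'_1$. Moving the root to $l_2$ turns it into a short back path $[l'_1,l_2]$ crossed by $[l'_2,l_2]$, consistent with (\ref{eq:KLX2_short_bps_first}) since $l'_1\preceq l_3$ by (\ref{eq:KLX2_braids}). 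At the other end the paper relies on its construction of $\tilde T$, where both subtrees of the lowermost branch vertex become back paths. That forces a short back path into the last region, so the last long back path never yields a 2-high-degree cycle. Only for such a normalised $\tilde T$ does Step~\ref{step:ignore_short} remove exactly the short back paths.

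There are also two smaller omissions. First, in the converse you reinsert the replaced cycles ``in series'' without justifying it. The paper argues that two replaced cycles cannot share an edge, because Step~\ref{step:ignore_short} is applied sequentially without reusing replaced edges. It also argues they cannot share two vertices, by a degree count that would contradict Step~\ref{step:initial}. Second, for Step~\ref{step:braid_detecting}, ``walking the faces'' presupposes a planar embedding that you do not have. Either compute one first, or, as in the paper, grow the ladder cycle by cycle by local searches from the current shared path, inspecting each edge $O(1)$ times.
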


\begin{proof}[Proof overview.]
    We first show that 2-high-degree cycles correspond to short back paths. Once these are ignored, the graph  must contain only long back paths according to \cref{th:biconnect_KLX2}, which in turn form exactly a ladder pattern.
    
    The remaining task is to ensure that the short back paths are properly located along the DFS path.
    If $G$ passes all steps of \cref{alg:KLX2}, we can reconstruct a DFS tree that satisfies \cref{th:biconnect_KLX2}. This DFS traversal follows the zigzag motif of the identified ladder pattern.
    
    All steps run in linear time, as they rely primarily on DFS searches.
\end{proof}
Overall, \cref{alg:KLX2} provides a structural characterisation of graphs with $\KLX = 2$. Such a structure is presented in \cref{fig:chain_of_4_cactus_necklaces}, where short back paths are omitted and the ladder pattern is made explicit.

\begin{restatable}{proposition}{algoKLXfull}
    \label{th:algoKLX2full}
    Based on \cref{cor:charKLX2}, there exists an algorithm that accepts a graph $G$ if and only if $\KLX(G) \leq 2$. Moreover, it runs in time  $\mathcal{O}(|V| + |E|)$.
\end{restatable}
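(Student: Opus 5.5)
The plan is to reduce the general case to the biconnected case through the block-cut tree. The reduction goes block by block, and \cref{prop:algoKLX2} serves as the subroutine. First I compute the blocks and articulation points of $G$ with Hopcroft--Tarjan in $O(|V|+|E|)$ time and build the block-cut tree. Then I classify every block $C$. If $C$ is a single edge or a cycle, it has $\KLX(C)\le 1$. Otherwise I run \cref{alg:KLX2} on $C$; if it rejects, then $\KLX(C)\ge 3$, so by \cref{lem:KLX_of_a_block_is_atmost_that_of_the_whole} I reject $G$. Since the blocks partition $E$, the total cost of these calls is linear.

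Next I use \cref{cor:charKLX2} to turn the global question into local, per-block constraints on where articulation points may sit. For a block $C$ with $\KLX(C)=2$, the algorithm of \cref{prop:algoKLX2} does not merely accept. It also produces the ladder pattern and the admissible zigzag motifs, and from these one reads off the candidate DFS paths $\tilde T_C$. For each such block I will therefore compute the set $A_C$ of vertices lying on the main tree path or on one of the last back paths of some admissible $\tilde T_C$. There are only $O(1)$ combinatorially distinct choices: which end of the ladder to start from, which of the two zigzags to take, and at which short back path to end. So $A_C$ can be computed in $O(|C|)$ time by marking vertices along each candidate.

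The second condition of \cref{cor:charKLX2} then becomes a check on the block-cut tree. Every articulation point $a$ that lies in a $\KLX=2$ block $C$ and also in another block must either belong to $A_C$, or have everything hanging off $C$ at $a$ be a cactus. Moreover, the choices must be made consistently: a single $\tilde T_C$ has to be fixed per block, which determines where the traversal enters $C$, namely at its root $r$. I plan to handle this with a bottom-up dynamic programme over the block-cut tree, rooted at an arbitrary block. For each block $C$ and each entry vertex $a\in C$ (its parent articulation point), I record whether the subtree below can be realised, and by which of the $O(1)$ candidate paths. Since a block with $\KLX=2$ is only $O(1)$-flexible, each state is checked in time proportional to the block size plus its children. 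I then conclude the correctness of the procedure by combining \cref{cor:charKLX2} with \cref{thm:characterisation_of_KLX=1} for the cactus parts.

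The main obstacle will be this consistency step. I need to show that the set of valid entry vertices and end configurations of each $\KLX=2$ block can be enumerated with constant-size information per articulation point, so that the dynamic programme stays linear, rather than quadratic in the number of entry points of a block. I would first try to prove that the admissible roots of $\tilde T_C$ are confined to the end segments of the ladder, namely the degree-2 paths between $u_1,d_1$ and between $u_p,d_p$, which fits the definition of the zigzag motif. If that holds, each block has $O(1)$ canonical configurations, each realisable from a whole interval of entry points, and the overall algorithm runs in $O(|V|+|E|)$ time.
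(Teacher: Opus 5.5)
The overall plan matches the paper's: block-cut tree, \cref{alg:KLX2} on each block, a constant number of candidate structures per $\KLX=2$ block, then a global combination. There is, however, a genuine gap in your global step, and a second one in your candidate count.

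\textbf{The global step.} Rooting the block-cut tree at an arbitrary block is not harmless. In a DFS of $G$, every block other than the one containing the DFS root is entered at its articulation vertex towards that root block, and this vertex becomes the root of $T[C]$. Your bottom-up DP therefore only examines DFS trees whose root lies in the block you picked and whose block entries are the parent articulation points of your rooting. A $\KLX=2$ block can only be rooted at the ends of its ladder, so this produces false negatives. For example, take a ladder block $C_1$ with at least four cycles and a vertex $a$ interior to a middle rung. Such an $a$ lies on the main path of every admissible $\tilde T_{C_1}$ but is never a feasible root. Glue at $a$ a second $\KLX=2$ block $C_2$ in which $a$ lies on an end segment. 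Rooting the DFS on an end segment of $C_1$ and entering $C_2$ from $a$ on the way down gives $\KLX(G)\le 2$. But if your arbitrary root block is $C_2$, the DP must root $C_1$ at $a$ and rejects. The paper's combination step is designed to avoid fixing the orientation in advance:
it computes, for each $\KLX=2$ block, the set $R_C$ of feasible roots;
for each articulation vertex $v$, it uses membership of $v$ in $R_{C_1}$ and $R_{C_2}$ to force the direction of the block-cut-tree edge through $v$, rejecting if neither set contains $v$;
it accepts if and only if these forced directions extend to a rooting of $\mathcal{T}$, and only then orients the free edges arbitrarily.
You need this, or a rerooting DP that evaluates every block as a potential root block in two passes.

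\textbf{The candidate count.} Your $O(1)$ list of candidate $\tilde T_C$ leaves out a choice. Each short back path (2-high-degree cycle) can be traversed with either of its two sides on the main path, independently of the others. \cref{cor:charKLX2} forbids non-cactus attachments on the back-path side of every non-last one. This breaks your test in both directions:
if you define $A_C$ as a union over candidates, it is too permissive, since an ear with non-cactus attachments on both sides passes but must be rejected;
if you fix the ears' sides in advance, it is too strict.
The paper instead fixes one of the constantly many (eight) maximal zigzag motifs and then decides the sides of the short back paths locally from the articulation points, in linear time.

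\textbf{Your stated obstacle.} The step you flag as the main obstacle is that the feasible entry vertices form $O(1)$ intervals at the ends of the ladder. This is exactly what the paper asserts through its description of $R_C$: segments running from the ends of the zigzag motif up to the first articulation point leading to another $\KLX=2$ component, or the first branch vertex. Your plan agrees with the paper on this point, but it is still asserted rather than proved.
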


\begin{proof}[Proof overview.]
    We describe the main idea of the algorithm.

    Our goal is to verify the existence of a DFS tree $T$ satisfying \cref{cor:charKLX2}. We begin by decomposing $G$ into its block-cut tree $\mathcal{T}$. We then check that each biconnected component $B$ satisfies $\KLX(B) \leq 2$ using \cref{alg:KLX2}. As established in the proof of \cref{prop:algoKLX2}, passing \cref{alg:KLX2} ensures the existence, for each $B$, of a DFS tree $T_B$ satisfying \cref{th:biconnect_KLX2}. We further verify that the articulation points of $B$ can be embedded in such a tree (possibly after a slight modification) so as to ``locally'' satisfy the constraints stated in \cref{cor:charKLX2}.
    
    The remaining step is to combine the local DFS trees $T_B$ into a global DFS tree $T$ of $G$. 
    To this end, we determine, for each block $B$, the set of feasible root positions, and orient the edges of the block-cut tree $\mathcal{T}$ according to compatibility constraints between adjacent components. 
    If these constraints are consistent—i.e., they define a valid orientation of $\mathcal{T}$—then we can assemble the $T_B$ into a DFS tree $T$ satisfying \cref{cor:charKLX2}, and hence $\KLX(G) \leq 2$.

    All steps run in linear time. A key point is that only $\mathcal{O}(1)$ candidate DFS structures need to be considered for each block. This follows from the structural characterisation of \cref{th:biconnect_KLX2}, which restricts attention to a constant number of maximal zigzag patterns, together with the efficient handling of short back paths via articulation points.
\end{proof}

\section{An FPT algorithm for testing $\bf \textbf{KLX} \le k$.}
\label{sec:fpt-courcelle}

Having designed linear-time algorithms for deciding whether a graph has KLX less than or equal to 0, 1, or 2, we now turn to testing whether $\KLX \le k$, for a fixed integer $k$.
In particular, we present a fixed-parameter tractable (FPT) algorithm with running time $\mathcal{O}(f(k)\cdot (|V|+|E|))$ for this task.

\begin{algorithm}
\caption{An FPT algorithm to test if a given graph is of KLX at most $k$.}
\label{alg:FPT}
\begin{algorithmic}[1]
\Require A graph $G$ and an integer $k \in \mathbb{N}$.
\State Test that $\TW(G) \le k+1$. If that is not the case, return false.
\State Build $\KLX_{\le k}$ as written in the proof in \cref{KLX_MSO2}
\State Test if $G \vDash \KLX_{\le k}$, and return the result.
\end{algorithmic}
\end{algorithm}

Given a graph $G$ as input, this algorithm first ensures that the tree width of $G$ is at most $k+1$. 
By the inequality $\TW(G) \le \KLX(G)+1$ shown in the following as~\cref{th:tw_lesseq_klx}, any graph with tree width greater than $k+1$ must have $\KLX(G) \ge k+1$, and can therefore be rejected.
All graphs that pass Step 1 have bounded tree width, and hence, the following theorem by Courcelle applies. 

\begin{theorem}[Courcelle~\cite{courcelle_monadic_1990}]
If $\varphi$ is an $\MSO_2$ formula, then there exists an algorithm deciding whether a graph $G$ of tree-width at most $k$ satisfies $\varphi$ in time $O(f(\varphi,k) \cdot (|V|+|E|))$ for some computable function $f$.
\end{theorem}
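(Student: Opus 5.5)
Since this is Courcelle's classical theorem, which the paper cites rather than proves, I would follow the standard automata-theoretic route. The idea is to reduce model checking on $G$ to running a finite tree automaton over a labelled tree that encodes a tree decomposition of $G$.

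The first step obtains a decomposition. Using Bodlaender's linear-time algorithm, for fixed $k$ I compute a tree decomposition of $G$ of width at most $k$ in time $O(g(k)\cdot|V|)$. I then convert it in linear time into a \emph{nice} tree decomposition with $O(|V|)$ nodes, in which every node is a leaf, introduce-vertex, introduce-edge, forget or join node. I fix a colouring of the vertices with $k+1$ slot names, so that within every bag distinct vertices carry distinct slots; this can be done greedily top-down. Each node then receives a label from a finite alphabet $\Sigma_k$: its node type, the slot it concerns, and for introduce-edge nodes the pair of slots. Each edge of $G$ is introduced at exactly one node. Each vertex is represented by the unique node where it is forgotten, with the root treated as forgetting all remaining vertices.

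The second step translates $\varphi$ into an $\MSO$ formula $\varphi^*$ over $\Sigma_k$-labelled trees, such that $G\models\varphi$ iff the labelled tree satisfies $\varphi^*$.
\begin{itemize}
\item A vertex is replaced by its forget node, and a vertex set by a set of forget nodes.
\item An edge is replaced by its introduce-edge node, and an edge set by a set of such nodes.
\item Incidence $\mathrm{inc}(e,v)$ becomes an $\MSO$-definable tree property. It states that the forget node $x$ of $v$ is an ancestor of the introduce-edge node $y$ of $e$, that one endpoint slot of $y$ equals the slot of $x$, and that no node strictly between $y$ and $x$ forgets or re-introduces that slot.
\end{itemize}
Adjacency and equality are defined from incidence. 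All of this is first-order over the tree's ancestor relation with fixed labels, so it is expressible in $\MSO$, and $\varphi^*$ depends only on $\varphi$ and $k$.

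The third step applies the theorem of Thatcher--Wright and Doner: every $\MSO$ sentence over finite ranked labelled trees is recognised by a deterministic bottom-up tree automaton $\mathcal{A}_{\varphi,k}$, constructed effectively from $\varphi^*$. Running this automaton over the $O(|V|)$-node tree takes linear time. Combined with the linear-time decomposition and the fact that $|E|=O(k|V|)$ for bounded tree-width, this gives total running time $O(f(\varphi,k)\cdot(|V|+|E|))$, with $f$ computable because the automaton construction is effective, albeit non-elementary.

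I expect the main obstacle to be the correctness of the incidence and membership encoding in the second step. One must check that slot reuse along root-to-leaf paths never confuses two different vertices with the same slot. This is ensured by the connectivity property of tree decompositions: the bags containing a given vertex form a subtree. It is also exactly why the ``no intervening forget/introduce of the same slot'' clause is needed. Everything else is a routine inductive translation over the structure of $\varphi$.
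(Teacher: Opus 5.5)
The paper does not prove this statement. It quotes Courcelle's theorem as a black box and only applies it in Algorithm~\ref{alg:FPT}. Your argument is the standard modern proof, and it is sound once one detail is fixed (second paragraph).

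It differs from Courcelle's original route. Courcelle shows that every \MSO-definable set of graphs is recognisable in the algebra of graphs with a bounded number of sources: the quantifier-rank-$q$ type of a glued graph is determined by the types of its parts, a Feferman--Vaught-style composition argument. He then evaluates a decomposition term bottom-up with the resulting finite automaton. You instead interpret $G$ inside its labelled nice decomposition tree and hand all the logic to the Thatcher--Wright--Doner theorem. Your route reuses an off-the-shelf classical result and makes the computability of $f$ immediate, but you must verify the encoding by hand. Courcelle's route needs no explicit encoding and extends to other graph operations. Courcelle's paper also assumes the decomposition is given, so the linear bound in the statement as phrased really does rest on Bodlaender's algorithm, exactly as you use it. The paper alludes to this in its complexity discussion.

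The detail to fix is the clause ``with the root treated as forgetting all remaining vertices''. If the root bag contains two or more vertices, they are all represented by the same node, so your vertex encoding is no longer injective. Then $\exists u\,\exists v\,(u\neq v)$ and vertex-set quantification are mistranslated, and ``the slot of $x$'' is undefined at the root. Adopt the usual convention that the root bag is empty, by adding a chain of forget nodes above the old root; forget nodes are then in bijection with $V$.

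With that change your incidence formula is correct. In the forward direction, every bag on the path from $y$ up to the child of $x$ contains $v$ by the connectivity property, so no node in between forgets or introduces its slot. Conversely, the no-forget condition alone ensures that the occupant of that slot in the bag of $y$ survives up to the child of $x$. There, distinctness of slots identifies it with the forgotten vertex. So the no-re-introduce condition is redundant, though harmless.
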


It remains to prove Theorem~\ref{th:tw_lesseq_klx} and to show that the property “$\KLX(G) \le k$” can be expressed by an $\MSO_2$ formula. 

\subsection{Tree-width as a lower bound on KLX}

Given a graph $G=(V_G,E_G)$, a \emph{tree decomposition of $G$} is a labelled tree $L=(V_L,E_L)$ such that the vertices of $V_L$, called \emph{bags}, satisfy the following 3 properties:
\begin{itemize}
    \item Every bag $B\in V_L$ is a subset of $V_G$.
    \item For every edge $\{a,b\}\in E_G$, there exists a bag $B\in V_L$ such that $a,b\in B$.
    \item For every vertex $v\in V_G$, the set of bags containing $v$ induces a connected subtree of $L$. Formally, for every $v \in V_G$, we have that $L[\{B\in V_L \mid v\in B\}]$ is connected.
\end{itemize}
The \emph{width} of a tree decomposition $L$ is $\max_{B\in V_L} |B|-1$.
The \textbf{tree-width} of $G$, denoted by $\TW(G)$, is defined as the smallest width of a tree decomposition of $G$.

\begin{figure}[tb]
    \centering
    \begin{subfigure}{0.35\linewidth}
        \centering
        \begin{tikzpicture}
             \draw (0, 0) node[vertex] (a) [label=above:$a$]{}
            -- ++(270:0.7) node[vertex] (b) [label=left:$b$]{}
            -- ++(270:0.7) node[vertex] (c) [label=below:$c$]{}
            -- ++(210:0.7) node[vertex] (d) [label=below:$d$]{}
            -- ++(210:0.7) node[vertex] (e) [label=below:$e$]{}
            (c)
            -- ++(330:0.7) node[vertex] (f) [label=below:$f$]{}
            -- ++(330:0.7) node[vertex] (g) [label=below:$g$]{}
            ;
            \draw[dashed, -latex]
            (d) to [out=90, in=210] (a)
            ;
            \draw[dashed, -latex]
            (g) to [out=90, in=0] (b)
            ;
            \draw (0,-3) node {\small A DFS tree $T$ with back edges};

        \end{tikzpicture}
    \end{subfigure}
    \begin{subfigure}{0.25\linewidth}
        \centering
        \begin{tikzpicture}
             \draw (0, 0) node[vertex] (ab) [label=above:$ab$]{}
            -- ++(270:0.7) node[vertex] (bc) [label=left:$bc$]{}
            -- ++(210:0.7) node[vertex] (cd) [label=left:$cd$]{}
            -- ++(270:0.7) node[vertex] (de) [label=left:$de$]{}
            (bc)
            -- ++(330:0.7) node[vertex] (cf) [label=right:$cf$]{}
            -- ++(270:0.7) node[vertex] (fg) [label=right:$fg$]{}
            ;
            \draw (0,-2.5) node {\small Its line graph};
        \end{tikzpicture}
    \end{subfigure}
    \hfill
    \begin{subfigure}{0.37\linewidth}
        \centering
        \begin{tikzpicture}
        \draw (0, 0) node[ellipse, draw, inner sep=0.5] (ab) {a, b, d}
        ++(270:0.7) node[ellipse, draw, inner sep=0.5] (bc) {b, c, d, g}
        ++(210:1.2)
        node[ellipse, draw, inner sep=0.5] (cd) {c, d}
        ++(270:0.7) node[ellipse, draw, inner sep=0.5] (de) {d, e}
        (bc) 
        ++(330:1.2) node[ellipse, draw, inner sep=0.5] (cf) {c, d, f, g} 
        ++(270:0.7) node[ellipse, draw, inner sep=0.5] (fg) {d, f, g}
        ; 
        \draw (ab) -- (bc) -- (cd) -- (de)
        (bc) -- (cf) -- (fg)
        ;
        \draw (0,-3) node {\small The associated tree decomposition};
        \end{tikzpicture}
    \end{subfigure}
    
    \caption{Conversion from a DFS tree to a tree-decomposition. 
    Each bag in the tree decomposition is constructed from an edge of the DFS tree and augmented with vertices from back edges. 
    }
    \label{fig:tw_le_klx}
\end{figure}

\begin{restatable}{theorem}{thtwklx}\label{th:tw_lesseq_klx}
    For a graph $G$, $\TW(G) \le \KLX(G)+1$.
\end{restatable}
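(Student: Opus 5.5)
The plan is to turn a KLX-minimum tree directly into a tree decomposition whose bags follow the edges of the DFS tree, as in \cref{fig:tw_le_klx}. Fix a KLX-minimum ordered DFS tree $T$ of $G$ rooted at $r$; the case of a single vertex is trivial. For each tree edge $e=\{x,y\}$ with $x\prec y$, define the bag
\[
B_e=\{x,y\}\cup\{v : (v,u)\in\mathcal{O}(e)\}.
\]
That is, we add to the endpoints of $e$ the lower endpoints of all back edges open at $e$. Then $|B_e|\le 2+|\mathcal{O}(e)|\le \KLX(G)+2$, so the width is at most $\KLX(G)+1$.

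For the decomposition tree $L$, connect the bag of each edge $\{x,y\}$ with $x\ne r$ to the bag of the edge joining $x$ to its parent. The bags of the root's child edges $\{r,c_1\},\dots,\{r,c_q\}$ are chained in a path. The result is a tree, namely a spanning tree of the line graph of $T$. One preliminary fact is needed. Reading off the arc diagram, the arc of $(v,u)$ runs from the last occurrence of $v$ to the next occurrence of $u$. This stretch of the traversal consists of the return along $vTu$ together with the full exploration of the later-ordered subtrees hanging off internal vertices of $uTv$. Hence $\mathcal{O}(e)$ is exactly the set of back edges that cross or envelop $e$, matching the classification already given in the paper.

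Edge coverage is straightforward. A tree edge $\{x,y\}$ lies in its own bag. For a back edge $(v,u)$, let $w$ be the child of $u$ on $uTv$. Then $e=\{u,w\}$ is crossed by $(v,u)$, so both $u$ and $v$ lie in $B_e$.

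The main obstacle is the connectivity (running intersection) property. Fix a vertex $z$. The bags containing $z$ are of two kinds: those of tree edges incident to $z$, and those of edges $e$ at which some back edge $(z,u)$ is open.
\begin{enumerate}
\item If $z\ne r$, every child edge of $z$ is adjacent in $L$ to the parent edge of $z$. So the incident edges form a star in $L$.
\item If $z=r$, the child edges of $r$ form the chain, which is connected. Moreover $r$ is never a lower endpoint of a back edge.
\item For a back edge $(z,u)$, the crossed edges form the path $uTz$. This path is connected in $L$ and ends at the parent edge of $z$.
\item Each enveloped part, given by a branch vertex $x\in uTz$ with $x\ne u,z$ and a later child $w_2$, consists of $\{x,w_2\}$ together with the edges of $T(w_2)$. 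This set is connected in $L$. Its top edge $\{x,w_2\}$ is adjacent to the parent edge of $x$, which lies on $uTz$ because $x\ne u$.
\end{enumerate}
The crucial point in item 4 is that $x\ne u$. This guarantees that the attaching edge is a genuine crossed edge and never uses the root chain. Hence all pieces containing $z$ attach to the parent edge of $z$, and their union is connected, which gives the running intersection property. Putting this together yields a tree decomposition of width at most $\KLX(G)+1$, hence $\TW(G)\le\KLX(G)+1$.
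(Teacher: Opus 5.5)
Your proposal is correct and follows essentially the same construction as the paper: each bag consists of the two endpoints of a tree edge plus the deeper endpoints of the back edges open at it, the bags are arranged along the line graph of the KLX-minimum tree $T$, and edge coverage is argued the same way. Your chaining of the root's child edges and your explicit star/path/attached-subtree argument for the running intersection property tighten two points the paper glosses over: its parent--child line graph is disconnected when the root has several children, and its contour-interval argument for connectivity is only sketched.
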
 

To establish the bound claimed in~\cref{th:tw_lesseq_klx}, we convert a KLX-minimum tree $T$ of a graph $G$ into a tree decomposition $L$ whose largest bag has size at most $\KLX(G, T)+1$.
First, we construct the line graph of $T$, in which each vertex corresponds to an edge of $T$ and represents a bag. This will be the tree decomposition of $T$.
Then, for each bag $\{x, y\}$, we add the initial vertex $v$ of every back edge $(v, u)$ that either crosses or envelops $\{x, y\}$. 
For instance, in \cref{fig:tw_le_klx}, $g$ is added to $\{c,f\}$, due to $(g, b)$ crossing the edge $\{c, f\}$ and $d$ is also included because $(d, a)$ envelops $\{c,f\}$, resulting in the bag $\{c, d, f, g\}$.



\subsection{An $\bf\textbf{MSO}_2$ formula of $\bf \textbf{KLX}(G) \le k$}
\label{KLX_MSO2}

Monadic Second-Order logic ($\MSO_2$) is a logical framework that extends classical first-order logic by allowing quantification not only over individual vertices and edges, but also over sets of vertices and edges. A graph property $P(G)$ is said to be \emph{$MSO_2$-expressible} if there exists an $\MSO_2$ formula $\varphi$ such that $G \vDash \varphi$ if and only if $P(G)$ holds.


\begin{restatable}{theorem}{thKLXformula}\label{th:KLX_restatable}
    The property $\KLX_{\le k}(G) := ``\KLX(G) \le k"$ is $\MSO_2$-expressible.
\end{restatable}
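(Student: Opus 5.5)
We will write $\KLX_{\le k}$ as an existential $\MSO_2$ sentence that guesses an ordered DFS tree together with its traversal order, and then checks that every tree edge has at most $k$ open back edges. Ordered structures are awkward in MSO because an order on $V$ is a binary relation. We will avoid quantifying over such relations by encoding everything through edge sets and a bounded number of vertex sets.

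\textbf{Step 1: guess the DFS tree.} We existentially quantify a root $r$ and an edge set $F\subseteq E$. We express that $F$ is a spanning tree (connected and acyclic, both standard $\MSO_2$ properties). The ancestor relation $u\preceq v$ is then $\MSO_2$-definable. It says that $u$ lies on the unique $F$-path from $r$ to $v$, i.e.\ every vertex set containing $v$ and closed under taking $F$-neighbours that are ``closer to $r$'' contains $u$. Concretely, $u \preceq v$ holds iff every set $X$ with $r \in X$ that is $F$-connected and contains $v$ also contains $u$. The DFS (Trémaux) condition is that every edge of $E\setminus F$ joins comparable vertices under $\preceq$. A back edge $(v,u)$ crosses a tree edge $\{x,y\}$ with $x\prec y$ iff $u\preceq x$ and $y\preceq v$, which is first-order in $\preceq$.

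\textbf{Step 2: encode the child ordering.} Enveloping depends on the left-to-right order of siblings, so we must also guess this order. We use \cref{cor:KLX_upperbounds_branch-size}: in a DFS tree of a biconnected graph with $\KLX(T)\le k$, every vertex has at most $k$ children. For general graphs, the number of children is not bounded (think of stars). However, enveloping only involves a back edge $(v,u)$ leaving a subtree $T(w_1)$ across a branch vertex $x$ to a strict ancestor $u$. A child $w$ whose subtree has no back edge to a strict ancestor of $x$ envelops nothing. Call a child $w$ of $x$ \emph{active} if some back edge leaves $T(w)$ to a strict ancestor of $x$; such a back edge crosses $\{x,w\}$ or the edge above $x$. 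We will show that at most $k$ active children can precede any given child, since each of them contributes an enveloping back edge to the later edges. Consequently only the relative order of the first $k$ active children matters; all remaining children can be treated as ``late'' and ordered arbitrarily. We therefore guess $k+1$ vertex sets $C_1,\dots,C_{k+1}$, where $C_i$ holds the children of rank $i$ (and $C_{k+1}$ collects all children of rank at least $k+1$). We require each vertex to have at most one child in each $C_i$ with $i\le k$, and we require any active child in $C_{k+1}$ to make the formula fail.

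Precedence of siblings $w_1,w_2$ is then the definable condition ``$w_1\in C_i$, $w_2\in C_j$, $i<j$''. For the children within $C_{k+1}$, there is no such condition. This is justified because they are inactive and envelop nothing, so the back edges they carry only cross edges and do not depend on the order. With this, ``$(v,u)$ envelops $e$'' becomes a first-order formula in $F$, $\preceq$ and the $C_i$.

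\textbf{Step 3: count.} The final sentence is
\[\exists r\,\exists F\,\exists C_1\cdots\exists C_{k+1}\;\bigl[\mathrm{valid}\wedge \forall e\in F\;\neg\exists b_0,\dots,b_k\ \textstyle\bigwedge_{i\neq j} b_i\neq b_j \wedge \bigwedge_i \mathrm{open}(b_i,e)\bigr],\]
where $\mathrm{open}$ is the disjunction of crossing and enveloping. Its size depends only on $k$.

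The main obstacle is Step 2: proving that bounding the number of ranked children by $k$ loses no generality for non-biconnected graphs, i.e.\ that a KLX-minimum ordering can always be normalised so that the inactive children come last. We will try first to show that moving an inactive child to the end never increases any $|\mathcal O(e)|$, by a local exchange argument on the arc diagram. If that fails, the fallback is to apply the construction blockwise via \cref{lem:KLX_of_a_block_is_atmost_that_of_the_whole} and \cref{cor:KLX_upperbounds_branch-size}, and to handle the block-cut structure separately.
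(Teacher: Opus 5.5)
\textbf{Step 2 orders the children the wrong way round.} As a result your sentence is not equivalent to $\KLX(G)\le k$; it is too strong.

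In the arc diagram, a back edge leaving $T(w_1)$ towards a strict ancestor of $x$ stays open while every sibling subtree visited \emph{after} $w_1$ is traversed. So the active children are the ones that must come late. Inactive children are harmless only if they are visited \emph{before} the active ones. Your encoding forces every active child into $C_1,\dots,C_k$, and these precede $C_{k+1}$. Every inactive child in $C_{k+1}$ is therefore enveloped by the back edges of all its active siblings. Your exchange claim, that moving an inactive child to the end never increases any $|\mathcal O(e)|$, is false.

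Here is a counterexample with $k=1$. Take a triangle $pqs$ and glue a pendant triangle at each of $p,q,s$. This graph is a cactus, so $\KLX=1$ by \cref{thm:characterisation_of_KLX=1}.

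For every root and every DFS tree, some vertex $x$ is the middle vertex of one triangle and the top of another. As middle vertex, its child in the first triangle is active. As top, it has a second child $w$ whose triangle's back edge crosses $\{x,w\}$. To find such an $x$:
\begin{itemize}
\item If the root lies in the central triangle, take that triangle's middle vertex.
\item If the root lies in the pendant triangle at $p$, then either $p$ is the middle vertex of that triangle, or $p$ is its last vertex. In the latter case the central triangle is entered at $p$, and its middle vertex works.
\end{itemize}
Your constraints put the active child in $C_1$ and $w$ in $C_2$. Then $\{x,w\}$ has two open back edges, so your sentence is false on a graph with $\KLX=1$.

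\textbf{The blockwise fallback does not repair this.} Here every block is a triangle with few children. The failure comes entirely from the interaction at articulation points, which \cref{cor:KLX_upperbounds_branch-size} does not control. Moreover, $\KLX(G)$ is not determined by the KLX values of the blocks; compare \cref{cor:charKLX2}.

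\textbf{The fix is to reverse the normal form, as \cref{le:last_order} does.} Visit all inactive children first, in arbitrary order, and rank only the \emph{last} $k$ children. Moving an inactive child to the front never increases any $|\mathcal O(e)|$, for three reasons:
\begin{itemize}
\item Its own back edges end at $x$ or below, so they never envelop a sibling.
\item Its subtree is no longer enveloped by active siblings.
\item Everything else is unchanged.
\end{itemize}
There are at most $k$ active children, because each contributes a distinct back edge crossing the edge above $x$. With rank sets for the last $k$ positions, plus one set for the unranked front part, your enveloping formula and your Step~3 go through.

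Your Step~3 is then simpler than the paper's. The paper counts open edges through $k$ slot sets $E_1,\dots,E_k$ with rules \textbf{E1}--\textbf{E3}, which requires an argument about disjoint contour intervals. Your first-order condition, that no $k+1$ distinct back edges are open at $e$, needs no such argument.
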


The full proof is given in Appendix~\cref{proof:mso2}. 
We outline here the main ideas. 
A key difficulty is that $\MSO_2$ cannot encode a total ordering of an unbounded number of children. However, the crucial observation is that only $\KLX$ children need to be ordered:

\begin{restatable}{lemma}{lelastorder}\label{le:last_order}
    Given a graph $G$ with $\KLX(G) \le k$, there exists an ordered DFS tree $T$ of $G$ with $\KLX(T) \le k$ such that for every vertex $x$ with children $y_1,...,y_d$ ordered from left to right, only the rightmost $k$ may contain a back edge crossing $x$.
\end{restatable}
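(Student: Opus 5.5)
Start from a KLX-minimum ordered DFS tree $T$ of $G$, so $\KLX(T)\le k$, and keep the underlying (unordered) tree fixed. Only the left-to-right order of children will change. At each vertex $x$, call a child $y_i$ \emph{suspended} if some back edge from $T(y_i)$ crosses $x$, that is, goes to a proper ancestor of $x$; the remaining children are \emph{free}. The plan is to reorder the children of every vertex so that all free children come first, in any order, followed by all suspended children in the relative order they had in $T$. Call the new ordered tree $T'$. Two things must then be shown: that there are at most $k$ suspended children, so they really are the rightmost $k$ (free children placed among the last $k$ are harmless); and that $\KLX(T')\le\KLX(T)$.

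\emph{Counting suspended children.} Suppose $x$ has suspended children $y_{i_1},\dots,y_{i_m}$ in $T$'s order. Pick back edges $f_1,\dots,f_m$ with $f_j$ leaving $T(y_{i_j})$ for a proper ancestor of $x$. Let $e$ be the tree edge from $x$ to its parent, which exists because such an ancestor exists. In the arc diagram, every $f_j$ starts before the final passage of the traversal through $e$ and ends after it, so $\mathcal{O}(e)\supseteq\{f_1,\dots,f_m\}$. Hence $m\le k$. This is the same mechanism as in the proof of \cref{lem:max-degree_and_KLX_lowerbound}: crossing back edges plus enveloping back edges together sit in the open set of $e$.

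\emph{Reordering does not increase KLX.} The main step, and the one I expect to be the main obstacle, is showing that moving free children to the left never increases $|\mathcal{O}(e)|$ for any tree edge $e$. I would argue edge by edge, using the crossing/enveloping decomposition of $\mathcal{O}(e)$. The crossing set of $e$ depends only on the unordered tree, so it is unchanged. A back edge $(v,u)$ envelops $e$ at a branch vertex $x$ only if $v$ lies in an earlier child subtree and $u$ is a proper ancestor of $x$. That forces the child containing $v$ to be suspended at $x$.

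So in $T'$, the back edges enveloping $e$ at $x$ come only from suspended children preceding $e$'s child. Since suspended children keep their relative order, each such child also preceded $e$'s child in $T$, unless $e$ lies in a free child. In that case, in $T'$ that child precedes every suspended child, so nothing envelops $e$ at $x$. Hence the enveloping set at each ancestor shrinks or stays the same, and $|\mathcal{O}_{T'}(e)|\le|\mathcal{O}_T(e)|\le k$.

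The care needed here is in checking that the arc-diagram definition of $\mathcal{O}(e)$ decomposes exactly as crossing plus enveloping. In particular, one must confirm that arcs ending at $u$ are counted consistently, meaning the arc runs to the next occurrence of $u$ after the last occurrence of $v$. Since the reordering at different vertices is done simultaneously, it is also simplest to argue the inclusion of enveloping sets for all ancestors $x$ of $e$ at once, rather than by induction. This yields $T'$ with the required property at every vertex.
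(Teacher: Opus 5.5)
Your proposal is correct and follows the paper's strategy: keep the unordered tree, move the children with no back edge reaching above $x$ to the left, keep the others in their original relative order on the right, and bound their number by $|\mathcal{O}(e)|$ for the edge $e$ from $x$ to its parent. Your order-independent ``suspended'' criterion and your explicit crossing/enveloping check that no open set grows are a cleaner version of the paper's ``interesting'' condition $\mathcal{O}(\{x,y_i\})\cap\mathcal{O}(\{x,p\})\neq\emptyset$, which, read literally, is triggered for every child of $x$ by a single back edge enveloping $T(x)$ from a higher ancestor, and so does not by itself give the bound $k$.
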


Let $k$ be the $\KLX$ value we want to test. The proof is split in four parts, each showing how to encode a concept into $\MSO_2$:

\parhead{Encoding of the DFS Tree} We define an $\MSO_2$ formula $\mathrm{DFS}(T,r,X)$ expressing that $T$ is a DFS tree rooted at $r$. $X$ is a ``labelling'' set used in the encoding to store for every vertex $v$ the information of which neighbour is its parent. Specifically, we label every edge and vertex with either $\top$ or $\bot$  according to 3 rules \textbf{R1}, \textbf{R2}, and \textbf{R3}:
\begin{itemize}
    \item \textbf{R1:} The root of $T$ is labelled $\top$, and all edges incident to it are labelled $\bot$.
    \item \textbf{R2:} For every edge $(u,v)$ of $T$, the vertex $u$ is labelled $\top$ if and only if the vertex $v$ is labelled $\bot$.
    \item \textbf{R3:} For every vertex, except the root, labelled with $x\in\{\top,\bot\}$, exactly one incident edge, namely the edge to its parent, is labelled $x$.
\end{itemize}
With this encoding, a path goes from a vertex to one of its ancestors if and only if it alternates between $\top$ and $\bot$, as depicted in \cref{fig:dfs_encoding_tree}.
We can then test that every back edge $(y,x)$ corresponds to $x$ being an ancestor of $y$. We also give a $\MSO_2$ formula $\mathrm{Parent}(x,y)$ which tests if $y$ is a parent of $x$.

\begin{figure}[tb]
    \centering
    \hspace{2em}
    \begin{tikzpicture}
             \draw (0, 0) node[vertex] (a) [label=above:$\top$]{}
            ++(210:0.7) node[vertex] (b) [label=left:$\bot$]{}
            -- ++(270:0.7) node[vertex] (c) [label=left:$\top$]{}
            ++(270:0.7) node[vertex] (c') [label=left:$\bot$]{}
            (a)
            ++(330:0.7) node[vertex] (d) [label=right:$\bot$]{}
            -- ++(240:0.83) node[vertex] (e) [label=right:$\top$]{}
            (d)
            -- ++(300:0.83) node[vertex] (f) [label=right:$\top$]{}
            ++(240:0.83) node[vertex] (g) [label=right:$\bot$]{}
            (f)
            ++(300:0.83) node[vertex] (h) [label=right:$\bot$]{}
            ;
            \draw[dashed] (a) -- (b);
            \draw[dashed] (c) -- (c');
            \draw[dashed] (a) -- (d);
            \draw[dashed] (f) -- (g);
            \draw[dashed] (f) -- (h);
        \end{tikzpicture}
        \hfill
        \begin{tikzpicture}
             \draw (1, 0.5) node[vertex] (u) [label=right:$u$]{}
            (-3,-0.7) node[vertex] (1) [label=below:$v_1$]{}
            (-2,-0.7) node[vertex] (2) [label=below:$v_2$]{}
            (-1,-0.7) node[vertex] (3) [label=below:$v_3$]{}
            (1,-0.7) node[vertex] (4) [label=below:$v_4$]{}
            (3,-0.7) node[vertex] (5) [label=below:$v_5$]{}
            (1,1.5) node[vertex] (v) []{}
            ;
            \draw[] (u) to (1) ;
            \draw[] (u) to (2) ;
            \draw[] (u) to (v) ;
            \draw[dashed,-latex] (4) to [out=-200, in=-160] (v) ;
            \draw[dashed,-latex] (5) to [out=60, in=0] (v) ;
            \draw[] (u) to node[right, pos=0.85, xshift=0cm] {$\in C_3$} (3);
            \draw[] (u) to node[right, pos=0.7, xshift=0.cm] {$\in C_2$} (4);
            \draw[] (u) to node[right, pos=0.5, xshift=0.1cm] {$\in C_1$} (5);
        \end{tikzpicture}
        \hspace{2em}
    \caption{(Left) The encoding of a DFS tree by \textbf{R1}, \textbf{R2}, and \textbf{R3}. 
    The solid edges represent $\top$ and the dotted ones represent $\bot$. 
    (Right) The encoding of the order for $k=3$. We only store the $k$ rightmost children edges in sets $C_1$, $C_2$, and $C_3$. According to \cref{le:last_order}, if $\KLX(T) \leq k$, only $v_3$, $v_4$, and $v_5$ may contain back edges (dotted edges) crossing $u$.
    }
    \label{fig:order_k_rightmost}\label{fig:dfs_encoding_tree}
\end{figure}

\parhead{Encoding of the order} According to \cref{le:last_order}, we don't need to encode the full order of the minimal-KLX tree, but only a part of it. We encode such an order of a DFS tree $T$ using $k$ sets $C_1,...,C_k$, with $C_i$ representing the $i$-th last visited child. Each set $C_i$ is equipped with a formula $\mathrm{Choice}(C_i)$ ensuring that for all vertices $u$ at most one of its children is ``chosen''. We also require that all choices are distinct, which leads to a visiting order of the DFS tree: we first visit all non-chosen children before going into the children chosen by $C_k,...,C_1$ in this order, as illustrated in \cref{fig:order_k_rightmost}. 


We also give the formula $\mathrm{IsChoice}$ such that $\mathrm{IsChoice}(C_i,u,v)$ tests if $v$ is the child of $u$ being ``chosen'' in $C_i$. 



\parhead{Encoding of the open edges} We describe two rules \textbf{X1} and \textbf{X2} that define for every back edge $(y,x)$ a set $X_{(y,x)}$ containing all tree edges where $(y,x)$ is open. 
Specifically, \textbf{X1} refers to the edges being crossed by $(y,x)$ and \textbf{X2} refers to the edges being enveloped by it.
Formally, $X_{(y,x)}$ is the smallest set of edges satisfying:
\begin{itemize}
    \item \textbf{X1.} All tree edges in the path from $y$ to $x$ along $T$ are contained in $X_{(y,x)}$.
    \item \textbf{X2.} For every vertex $u$ strictly inside the path from $y$ to $x$, denote $c_k,...,c_1$ the choices at $u$, there is a $(c_i,u) \in X_{(y,x)}$ and for all $j < i$, $T(c_j) \subseteq X_{(y,x)}$
\end{itemize}
We then check that our rules are consistent: if $T,r,X,C_1,...,C_k$ is an encoding of a DFS tree respecting all previously mentioned rules, with $C_i$ choosing the $i$-th last child, then $X_{(y,x)}$ corresponds exactly to the set of edges where $(y,x)$ is open.


\parhead{Encoding of the KLX} Finally, we need to count how many back edges are open at the same time.
To that end, we define a series of $k$ edge sets $E_1,...,E_k \subseteq E_G$ such that every back edge $e$ finds a unique $i$ with $e \in E_i$.
Intuitively, each $E_i$ represents one “slot” for simultaneously open back edges.
If a back edge $e=(y,x)$ belongs to $E_i$, we require that $X_{(y,x)} \subseteq E_i$. This guarantees that no two back edges assigned to the same $E_i$ are open simultaneously, and thus enforces $\KLX(T,G) \le k$.

Formally, we describe three rules \textbf{E1}, \textbf{E2} and \textbf{E3} that $E_1,...,E_k$ must satisfy:
\begin{itemize}
    \item \textbf{E1.} For all back edge $e$, there is an $i$ with $e \in E_i$.
    \item \textbf{E2.} For all tree edges $e$, we have $e \in E_i$ if and only if there exists a back edge $e'$ such that $e \in X_{e'}$ and $e' \in E_i$.
    \item \textbf{E3.} For all pair of back edges $e,e'$, if $e,e' \in E_i$, then $X_{e} \cap X_{e'} = \emptyset$.
\end{itemize}
Putting everything together, we show that $G$ satisfies $\KLX(G) \le k$ if and only if there exist the variables $T, r, X, C_1, \ldots, C_k, E_1,\ldots,E_k$ satisfying all above constraints. This yields an $\MSO_2$ formula of the proposition $\KLX_{ \le k}$.

\subsection{Concluding remarks}

Algorithm~\ref{alg:FPT} serves as a proof for the following theorem. 

\begin{restatable}{theorem}{thFPTfinal}\label{th:FPTfinal}
    For all $k\in \mathbb{N}$, there exists an $\mathcal{O}(f(k)\cdot (|V|+|E|))$ algorithm that tests whether a graph $G$ has KLX number at most $k$.
\end{restatable}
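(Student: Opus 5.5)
The plan is to assemble the three ingredients already established in this section, following \cref{alg:FPT}, and to check that each step runs in time $\mathcal{O}(f(k)\cdot(|V|+|E|))$.

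\textbf{Step 1: tree-width test.} For fixed $k$, deciding whether $\TW(G)\le k+1$, and producing a tree decomposition of that width when it exists, is linear-time by Bodlaender's algorithm, with running time $2^{\mathcal{O}(k^3)}\cdot|V|$. If the test fails, then $\TW(G)>k+1$. By \cref{th:tw_lesseq_klx} this gives $\KLX(G)\ge \TW(G)-1>k$, so rejecting is correct.

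\textbf{Step 2: the formula.} By \cref{th:KLX_restatable}, the property $\KLX_{\le k}$ is expressed by an $\MSO_2$ formula $\varphi_k$. Its construction in \cref{KLX_MSO2} depends only on $k$: it has $2k+3$ existential set variables $T,r,X,C_1,\dots,C_k,E_1,\dots,E_k$ and the fixed subformulas $\mathrm{DFS}$, $\mathrm{Choice}$, $\mathrm{IsChoice}$ and the rules \textbf{X1}, \textbf{X2}, \textbf{E1}--\textbf{E3}. Hence $\varphi_k$ is computable from $k$, and its length is bounded by a computable function of $k$.

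\textbf{Step 3: model checking.} Every graph that reaches this step has $\TW(G)\le k+1$. Courcelle's theorem therefore decides $G\vDash\varphi_k$ in time $\mathcal{O}(f(\varphi_k,k+1)\cdot(|V|+|E|))$, which is $\mathcal{O}(f'(k)\cdot(|V|+|E|))$. Correctness follows from \cref{th:KLX_restatable}: $G\vDash\varphi_k$ holds exactly when $\KLX(G)\le k$.

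\textbf{Total running time.} Adding the three steps gives the bound $\mathcal{O}(f(k)\cdot(|V|+|E|))$ claimed in \cref{th:FPTfinal}. I expect no real obstacle, since the substantive work is already in \cref{th:tw_lesseq_klx} and \cref{th:KLX_restatable}. The only point needing care is that Courcelle's algorithm needs a tree decomposition of bounded width as input, or at least needs to compute one. Step 1 supplies this, so the input size $|V|+|E|$ enters the running time only linearly.
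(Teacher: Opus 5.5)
Your proposal is correct and follows the paper's own argument for \cref{th:FPTfinal}. You run \cref{alg:FPT}, reject via Bodlaender's linear-time test when $\TW(G)>k+1$ (sound by \cref{th:tw_lesseq_klx}), observe that the formula of \cref{th:KLX_restatable} depends only on $k$, and apply Courcelle's theorem, so the three costs sum to $\mathcal{O}(f(k)\cdot(|V|+|E|))$; the only slip is cosmetic, namely that $r$ is a first-order vertex variable rather than a set variable.
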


Let us analyse the complexity of Algorithm~\ref{alg:FPT}. 
Testing $\TW(G) \le k+1$ (step 1) can be done in running time complexity $2^{O(k^3)} \cdot (|E|+|V|)$ \cite{bodlaender_linear-time_1996}. 
Notice that depending on how one applies Courcelle's theorem (with or without a tree-decomposition), one may not even need to run Step 1. An approximation algorithm can also be used, as just testing that the tree-width is bounded is the only goal of this step. Step 2 is constant-time as the formula $\KLX_{\le k}$ does not depend either on the graph or on the DFS tree. Due to Courcelle's Theorem, Step 3 can be done in $\mathcal{O}(f(\KLX_{\le k},k)\cdot(|V|+|E|))$. Therefore, the overall complexity is in $\mathcal{O}(g(k)\cdot (|V|+|E|))$ with $g(k) = 2^{O(k^3)}+f(\KLX_{\le k},k+1)$, linear time.

Although this algorithm as such is unlikely to serve any practical use, due to the worse-than-exponential dependency on $k$, the result suggests that further work could uncover usable versions of the algorithm.


\newpage

\bibliography{mfcs2026_KLX}
\newpage
\appendix

\section{Proofs of technical lemmas and theorems from \cref{sec:klx-leq-2}}
\label{sec:appendix-a}

\subsection{Lemmas} \label{lem:proof_lems_sec_3}
\KLXminDFStreedegreeupperbound*
\begin{proof}[Proof of \cref{lem:KLX-min-DFS-tree_degree_upperbound}.]
    If a vertex is a leaf, only one of its incident edges belongs to $T$.
    All others incident edges are back edges, and are crossing the tree edge adjacent to the leaf.
    Thus, the degree of a leaf cannot exceed $\KLX(T)+1$. 
    The biconnectivity of $G$ implies that the root is not a branching vertex (as noted in \cref{lem:root_DFST_biconnected-graph_non-branching}), allowing the same argument to be applied to the root.  
    
    Let us consider an internal vertex $u$ that is not a branch vertex. 
    In order for $u$ not to be an articulation point, some back edge $e$ must cross both the tree edges above and below~$u$. 
    If~$u$ had $2\cdot\KLX(T)+1$ or more incident edges, 
    of which only two belong to $T$, 
    one of the tree edges above or below~$u$ would be crossed by at least $\KLX(T)$ back edges. Adding the back edge $e$, this would contradict the definition of $\KLX(T)$. 
    Hence, the degree of an internal vertex that is not a branch vertex is at most $2\cdot\KLX(T)+1$.
\end{proof}


\subsection{About the characterisation of graphs with KLX 2}

\corcharKLX*

\begin{proof}[Proof of \cref{cor:charKLX2}] \label{proof:charKLX2}
    Let $G$ be a graph with $\KLX \leq 2$, and $T$ its associated KLX-minimum tree.
    Let $C$ be such a biconnected component of $G$ (if any).
    
    The previous characterisation of biconnected graph with $\KLX \leq 2$ (see~\cref{th:biconnect_KLX2}) guarantees that $T[C]$ has a specific structure of ``spanning path'', as illustrated in \cref{fig:chain_of_4_cactus_necklaces}. Let us note this restricted DFS tree as $T_C$.
    
    A key observation is that the traversal $T_C$ explores all tree edges before back paths, as established in \cref{lem:back-paths_of_KLX=2_DFST_biconnected-graph}.
    This lemma let only one non-defined order at the lowermost branching vertex. If this vertex is part of a short back path, then $T_C$ could visit it first, before the long back path.
    
    \noindent Let $x \in V_C$ be an articulation point, and let $H_x$ denote the component attached at $x$. We analyse $\KLX(H_x, x)$ depending on the position of $x$ in $T_C$:
    \begin{enumerate}
        \item \textbf{\( \bm x \) is on the main tree edge of $T_C$.}
        Then $H_x$ is explored first during the DFS, while no back edge in $C$ is open. Hence $\KLX(H_x, x) \leq 2$. After exploring $H_x$, all its back edges are closed (since $x$ is an articulation point), and the remainder of $C$ is unaffected.
        \item \textbf{\( \bm x \) is on a long back path (not last one).} By the key observation, $H_x$ is enveloped by the back edge opened by the previous back path.
        Since $G$ has $\KLX$ 2, $\KLX(H_x,x) \leq 1$. \label{itm:second}
        \item \textbf{\( \bm x \) is on a short back path (not the last being crossed by the last long back path).} 
        As such a short back path is not the last one, it is crossed by a long back path. Therefore, $H_x$ is enveloped by an edge open in a long back path and $\KLX(H_x,x) \leq 1$.
       
        \item \textbf{\(\bm x \) is on the last long back path.} there are two subcases, depending on the order of $T$.
        \begin{itemize}
            \item The last short back path crossed by the last long back path contain an articulation point $y$ such that  $\KLX(H_y,y) = 2$.
            In that case, the traversal should visit $y$ first, as the final back path must not be open before exploring $H_y$. This means that $H_x$ is enveloped, hence, $\KLX(H_x,x) \leq 1$.
            \item Otherwise, the long back path can be explored first and $\KLX(H_x,x) \leq 2$. 
        \end{itemize}
        \item \textbf{\(\bm x \) is on the last short back path being crossed by the last long back path.} This case mirrors the previous one. $\KLX(H_x,x) \leq 1$ whether there is an articulation point $y$ on the last long back path such that $\KLX(H_y,y)= 2$. Otherwise, if no such $y$ exist, $\KLX(H_x,x) \leq 2$.
        \end{enumerate}
        Therefore, connected components with KLX 2 are only articulated along tree paths and along either the
        last long or short back paths of the DFS tree $T_C$.
        All other articulation points must connect to \(\KLX\)~1 components (i.e., only cactus subgraphs). 

        This concludes the direct sense of the equivalence, the other is verified by construction.
\end{proof}

\subsection{About KLX2 algorithms}
\begin{algorithm}
\caption*{\textbf{Algorithm~\ref{alg:KLX2} (restated).} An algorithm recognising biconnected graphs with KLX $\leq 2$}
\begin{algorithmic}[1]
\State Check that there is no vertex of degree at least 5, and test whether $\KLX(G)\leq 1$. 
\State Replace every 2-high-degree cycle by a path of length~2 between the two high degree vertices. Proceed linearly, without using the ``replaced'' edges.
\State Verify that the resulting graph forms a ladder pattern.
\State Check the placement of the replaced cycles:
\begin{itemize}
    \item If the ladder pattern consists of a single cycle, accept if there is a vertex of degree 3.
    \item Otherwise, verify that there exists a zigzag motif of the ladder pattern and that all replaced cycles are correctly positioned along this zigzag motif.
\end{itemize}
\end{algorithmic}
\end{algorithm}

\begin{lemma}
    \label{lem:step3complexity}
    Step~3 of \cref{alg:KLX2} can be implemented in time $\mathcal{O}(|V| + |E|)$.
\end{lemma}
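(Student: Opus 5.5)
Step~3 verifies that the graph $G'$ obtained after Step~2 is a ladder pattern, that is, a sequence of simple cycles with maximum degree at most $4$, in which consecutive cycles share exactly one path and non-consecutive cycles share none. The plan is to reduce this to a linear scan of a bounded-degree skeleton. Since Step~1 guarantees $\Delta(G)\le 4$ and Step~2 only replaces cycles, $\Delta(G')\le 4$ as well, so $|E'| = \mathcal{O}(|V'|)$. The first step is to compute, in one pass, the \emph{suppressed} multigraph $H$: every maximal path of degree-$2$ vertices in $G'$ is contracted to a single labelled edge, and each edge of $H$ remembers its underlying path. This takes $\mathcal{O}(|V|+|E|)$ time by walking from each vertex of degree at least $3$ along degree-$2$ chains, and marking visited vertices. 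If $H$ has no vertex of degree at least $3$, then $G'$ is a single cycle, which is a trivial ladder, and we are done.

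The second step relies on a structural observation: in a ladder $\mathcal{C}_1,\dots,\mathcal{C}_p$, the vertices of degree at least $3$ are exactly the endpoints $u_i,d_i$ of the shared paths. Hence $H$ should be a ``ladder multigraph''. It has rungs $u_id_i$ for $1\le i\le p-1$ and rails $u_iu_{i+1}$ and $d_id_{i+1}$. At the two ends there is a single extra edge $u_1d_1$ (the path of $\mathcal{C}_1$) and a single extra edge $u_{p-1}d_{p-1}$. Vertices may coincide across consecutive rungs ($u_i=u_{i+1}$) in the degree-$4$ case, which mirrors $l'_i=l_{i+2}$ in \eqref{eq:KLX2_braids}. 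I would recognise this shape greedily. First find an end: a pair $\{u,d\}$ joined by two parallel edges of $H$ (the end cycle together with the first rung), or the appropriate degenerate configuration when a rail has length zero. Then peel off rungs one at a time. Given the current rung $u_id_i$, the next rung must be the unique $H$-edge $u_{i+1}d_{i+1}$ with $u_{i+1}$ adjacent to $u_i$ and $d_{i+1}$ adjacent to $d_i$ by unused edges, and the remaining degrees are checked locally. Each step inspects $\mathcal{O}(1)$ edges because degrees are bounded, so the walk is linear. At the end we check that every edge of $H$ was consumed exactly once and that the two cycles formed at every step are simple, which amounts to checking that no vertex is revisited. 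This also verifies the condition that non-adjacent cycles share no path.

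The main obstacle will be the ambiguity at the start of the walk and in the degenerate degree-$4$ cases. When rails have length zero, several pairs of parallel edges may be candidate ends, and the choice of which neighbour plays the role of $u_{i+1}$ versus $d_{i+1}$ is not forced locally. I would handle this by noting that a ladder has exactly two ends, so at most $\mathcal{O}(1)$ starting configurations need to be tried. In the degree-$4$ coincident case, the vertex of degree $4$ must be shared by two consecutive rungs, and this determines the side assignment uniquely. Trying each candidate start with an abort-on-failure walk keeps the total cost at a constant times $|V|+|E|$. Correctness of the reverse direction (a successful walk really yields a ladder pattern) follows directly, since the walk explicitly exhibits the sequence of cycles and the shared paths.
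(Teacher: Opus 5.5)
Your plan is essentially the paper's argument: locate an end cycle with exactly two high-degree vertices, then repeatedly extend along unprocessed edges to the next pair of high-degree vertices, using $\Delta \le 4$ so that each edge is handled $\mathcal{O}(1)$ times; contracting degree-2 chains into the multigraph $H$ just makes the paper's ``local DFS explorations'' explicitly constant-cost per step. One small inconsistency to fix: the final check cannot literally be ``no vertex is revisited'', because the coincidences $u_i=u_{i+1}$ that you explicitly allow make the walk meet such a vertex at two consecutive rungs, so the check should only forbid reaching a vertex whose $H$-edges have all already been consumed.
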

\begin{proof}[Proof of \cref{lem:step3complexity}.] 
We sketch an algorithm to verify that the graph forms a ladder pattern.
\begin{algorithmic}[1]
    \item  Check that there is no vertex of degree at least $5$.
    \item Perform a DFS traversal to identify an initial cycle of the ladder pattern (these correspond to cycles with exactly two high-degree vertices). Let $u_0$ and $v_0$ denote its high-degree vertices, and mark this cycle as processed.
    \item Iteratively, starting from $u_i$, perform local DFS explorations to find a path to $v_i$ that introduces exactly two new high-degree vertices, denoted $u_{i+1}$ and $v_{i+1}$. Mark this path as processed, and repeat until reaching a path that introduces no new high-degree vertices.
\end{algorithmic}
Steps~1 and~2 clearly run in linear time. In Step~3, each local DFS exploration traverses only previously unprocessed portions of the graph. Moreover, each edge is explored a constant number of times (namely, at most four), which ensures an overall running time of $\mathcal{O}(|V| + |E|)$.
\end{proof}

\algbicKLX*

\begin{proof}
\parhead{Correctness}
$\Longleftarrow$ Let $G$ be a biconnected graph such that $\KLX(G) \leq 2$.
\cref{th:biconnect_KLX2} provides a spanning path $T$, with back paths verifying some properties. 
\begin{itemize}
    \item \textbf{Step 1} is passed, as all vertices of $G$ are of degree at most 4 (\cref{lem:max-degree_and_KLX_lowerbound} with $k = 2$). 
    \item \textbf{Step 2.} Let $\mathcal{C}$ be a 2-high-degree cycle of $G$. By definition, $\mathcal{C}$ contains a back path of $T$.
    \begin{itemize}
        \item If this is a long back path, then it must be the first or the last one, as other internal long back path have at least 4 high degree vertices ($l_i,l'_i,l_{i+1}, l'_{i-1}$).
        By definition, it cannot be the last one, as this specific back path always cross a small back path, by construction. Therefore, $\mathcal{C}$ must involve the first long back path.
        
        In this case, it requires that this long back path does not cross any short one, otherwise, $\mathcal{C}$ would contain at least 4 high degree vertices. 
        However, in this specific situation, the first long back path could be considered as a small back path crossed by the second long back path---being a matter of where the root is located.

        \item If $\mathcal{C}$ involves a short back path, it cannot involve any other back path, as it must already contain two high degree vertices. Therefore, $\mathcal{C}$ is the short back path and its associated tree edge.
        
    \end{itemize}
    Therefore, there exists a spanning path $T$, such that step 2 removes all short back paths.
    \item \textbf{Step 3.} Before this step, only long back paths remain. They are forming a ladder pattern, as the long back paths are a sequence of cycles respecting the definition, by \cref{eq:KLX2_braids}. Therefore, $G$ passes step 3 of the algorithm.
    
    If there is more than one cycle in this ladder pattern, one can remark that the two possible zigzag motifs of this ladder pattern either include $\{(l_i,l'_i)\}$ or $\{(l'_{i-1},l_{i+1})\}$, for $1 < i < p$.
    Additionally, they include two other extra paths either involving $l'_1$ and $l_p$ or $l_2$ and $l'_{p-1}$
    \item \textbf{Step 4.} 
    \begin{itemize}
        \item 
    If there is more than one long back path, all short back paths crossed either by the first long back paths, are located exclusively between $r$ and $l_2$, by \cref{eq:KLX2_short_bps_first}. Similarly, the last ones are located exclusively between $l'_{p-1}$ and $l'_p$.
    The remaining others are located on the paths $(l'_{i-1}, l_{i+1})$, for $1 < i < p$, by \cref{eq:KLX2_short_bps}. All together, these paths form the set $\{(r,l_2), (l'_2,l_4), \ldots (l'_{p-2}, l_{p}),(l'_{p-1}, l'_p)\}$, which is a zigzag motif of the ladder pattern, as noticed previously.
    Therefore, $G$ passes step 4 the algorithm.
    \item If there is only one long back path, the root has degree 3. Hence, $G$ passes step 4 the algorithm.
\end{itemize}
\end{itemize}

$\Longrightarrow$
Reciprocally, let $G$ be a graph passing \cref{alg:KLX2}. Let us build a DFS tree $T$ on $G$ such that $\KLX(T)=2$, using the characterisation proved in \cref{th:biconnect_KLX2}.

Let $\mathcal{C}_1, \ldots,\mathcal{C}_p$ be the ladder pattern detected during step 3.
First say that $p\geq2$
Let $(u_1,d_1), \ldots,(u_{p-1},d_{d-1})$ be the paths common to each neighbouring cycles in this ladder pattern, such that all $u$ (respectively all $d$) are on the same side. Specifically, all paths from $u_i$ to $u_{i+1}$ belong only to $\mathcal{C}_{i+1}$.
Without loss of generality, let us say that the zigzag motif recognised in step 4 is $\{(v,d_1),(u_1,u_2), (d_2, d_3), (u_3, u_4), \ldots, (\cdot,w)\}$, with $v$ and $w$ being vertices located in one of the paths made of degree 2 nodes, between $u_1$ and $d_1$, or between $u_p$ and $d_p$, respectively.

Let $v$ be the root of the DFS.
Let us consider the DFS path following the zigzag such that $r=v \prec d_1 \prec u_1 \preceq u_2 \prec d_2 \preceq d_3 \prec \cdots \preceq u_{p-1} \prec d_{p-1} \prec w$ (if $p$ is odd, without loss of generality). This DFS path implies $p$ long back paths as $[r,u_1], [d_1,d_2], [u_2, u_3], \ldots$ These long back paths verify \cref{eq:KLX2_braids}, by construction.

According to \cref{th:biconnect_KLX2}, all other back edges should be short edges. Let us show that they verify their respecting equations.
First notice that all 2-high-degree cycles must be located in series. By construction during step 2 of the algorithm, the 2-high-degree cycles cannot share an edge, as this step is done sequentially, without considering replaced edges any more.
By contradiction, let us imagine that two of them share two vertices. 
Such vertices are then of degree at least 5, as they are connected to the two 2-high-degree cycles plus the tree path. This is a contradiction of the step 1 of the algorithm.
Therefore, 2-high-degree cycles can only share a single vertex, and lie in series.

On each location of the zigzag, there exists an order on small back paths, such that equations (\ref{eq:KLX2_short_bps} and (\ref{eq:KLX2_short_bps_first}) are verified by the previous construction of long back paths.

If step 3 detects only one cycle, $G$ must contain vertex $u$ with degree 3, by step 4.
As stated previously, all 2-high-degree cycles must lie on series along the single cycle detected during step 3. As $u$ is of degree 3, $u$ is a high-degree vertex of a single 2-high-degree cycle. Let $v$ be other high-degree vertex of this 2-high-degree cycle, and $(w,u)$ the edge from $u$ not being part of this 2-high-degree cycle.
Consider the DFS path $u \prec v \preceq \dots \preceq w$. By construction, this DFS path verifies \cref{eq:KLX2_short_bps_first}, with $l_1 = u$, and $l_1' = w$, as $u \not= w$.
This concludes the proof that all graph $G$ passing \cref{alg:KLX2} is of $\KLX \leq 2$, by \cref{th:biconnect_KLX2}.

\parhead{Running time analysis}
All steps have time complexity $\mathcal{O}(|V| + |E|)$:
    \begin{itemize}
        \item \textbf{Step 1} is a linear assertion on the vertices.
        \item \textbf{Step 2} can be done in  $\mathcal{O}(|V| + |E|)$ via DFS search.
        \item \textbf{Step 3} can be done in  $\mathcal{O}(|V| + |E|)$ according to \cref{lem:step3complexity}. 
        \item \textbf{Step 4}. By choosing the vertex $v$ of the zigzag motif as a direct neighbour of $u_1$, or $d_1$. Similarly, we chose $w$ as a direct neighbour of $d_p$ or $u_p$. This choice maximises the zigzag pattern. Therefore, there are only 8 maximal zigzag patterns to verify during step 4. Finding these 8 zigzag patterns in linear time (doable during step 3). All checks of the 2-high-degree cycles positions are done in linear time.
    \end{itemize}
\end{proof}

\algoKLXfull*

\begin{proof}[Proof of \cref{th:algoKLX2full}]~

    \parhead{Description of the algorithm} 
    \begin{algorithmic}[1]
        \State Decompose $G$ into its block-cut tree $\mathcal{T}$ (whose nodes correspond to biconnected components $C$).
        \item For each biconnected component $C$, test whether $\KLX(C) \leq 2$ using \cref{alg:KLX2}. If not, reject. 
        Moreover, contract each component with $\KLX \leq 1$ into a single vertex, so that $\mathcal{T}$ becomes a tree of components with $\KLX = 2$ (while preserving articulation structure/position). 
        \item For each remaining component $C$, determine whether there exists a DFS tree $T_C$ satisfying locally \cref{cor:charKLX2}; otherwise, reject. In such a tree, every articulation vertex must lie either on the main tree path or on one of the extremal back paths. This induces a set $R_C \subseteq V_C$ of feasible roots for such $T_C$.
        \State For each edge of $\mathcal{T}$ connecting components $C_1$ and $C_2$ through an articulation vertex $v$, enforce compatibility:
        \begin{itemize}
            \item If the articulation vertex does not belong to $R_{C_1}$ nor to $R_{C_2}$, reject.            
            \item If it belongs to $R_{C_2}$ but not to $R_{C_1}$, then $C_2$ must be rooted at this vertex; orient the edge of $\mathcal{T}$ from $C_1$ to $C_2$.
            \item If it belongs to $R_{C_1}$ but not to $R_{C_2}$, orient the edge from $C_2$ to $C_1$.
            \item If it belongs to both $R_{C_1}$ and $R_{C_2}$, leave the edge unoriented.
        \end{itemize}
        \State If these constraints are globally consistent (i.e., define a partial orientation of $\mathcal{T}$ with no conflict), accept; otherwise, reject.        
    \end{algorithmic}

    \parhead{Correctness} Assume first that $G$ satisfies $\KLX(G) \leq 2$. Let $T$ be a KLX-minimum DFS tree of $G$.
    Its restriction to each block $C$, denoted $T[C]$, is a DFS tree satisfying \cref{cor:charKLX2}. Hence $G$ passes Steps~2 and~3.
    Moreover, the tree $T$ induces a consistent orientation of $\mathcal{T}$. We remark here that step 4 does not orient any edge of $\mathcal{T}$ in an opposing direction.  
    Therefore, $G$ passes steps 4 and 5, and, hence, the algorithm.\\[1em]
    Conversely, assume that $G$ is accepted by the algorithm. We construct a DFS tree $T$ of $G$ satisfying \cref{cor:charKLX2} by induction. At each step of the construction $T_i$, we prove that $T_i$ is a DFS tree of $G[E_{T_i}]$ satisfying \cref{cor:charKLX2}.
    
    Initialisation. We first orient the remaining unoriented edges of $\mathcal{T}$ arbitrarily so that $\mathcal{T}$ becomes a rooted tree. There exists such an orientation as $G$ passes step~5. Let $C_r$ be the root component of $\mathcal{T}$. By step~3, there exists a DFS tree $T_{C_r}$ or $C_r$ satisfying \cref{cor:charKLX2}, and we set $T_0 := T_{C_r}$.

    Induction. We then traverse $\mathcal{T}$ in a depth-first manner. Suppose $T_i$ has been constructed for a subtree of $\mathcal{T}$, and let $C'$ be the next component to attach, with parent $C$. 
    By construction, the orientation constraints of step~4 ensure that there exists a DFS tree $T_{C'}$ rooted at the articulation vertex shared with $C$ and satisfying locally \cref{cor:charKLX2}. We attach $T_{C'}$ to $T_i$ at this vertex, obtaining a DFS tree $T_{i+1}$. Note that this ``attach'' in $G$ may be a path, as we contracted them at step~1. 
    The induction hypothesis along with the construction and the property of $T_C$ guarantee that $T_{i+1}$ is a DFS tree of $G[E_{T_{i+1}}]$, and satisfies \cref{cor:charKLX2}.

    After processing all components of $\KLX = 2$, the remaining contracted components (with $\KLX \leq 1$) are expanded by arbitrary DFS traversals from their articulation points. This yields a DFS tree $T$ of $G$ satisfying \cref{cor:charKLX2}, and thus $\KLX(G) \leq 2$.
    \\[1em]
    \parhead{Running time analysis} Step~1 runs in linear time using the algorithm of Hopcroft and Tarjan~\cite{hopcroft_algorithm_1973}. Step~2 runs in $\mathcal{O}(|V|+|E|)$ by \cref{prop:algoKLX2}. Steps~4 and~5 are linear in the size of $\mathcal{T}$.

    Step~3 is the most delicate. The key observation is that only $\mathcal{O}(1)$ candidate DFS structures need to be considered for each component. Indeed, by \cref{alg:KLX2} and the proof of \cref{prop:algoKLX2}, there are at most a constant number (namely, 8) of maximal zigzag motifs. Once such a structure is fixed, the placement of short back paths can be determined in linear time using the articulation points and \cref{cor:charKLX2}. Moreover, verifying locally that the constraints of \cref{cor:charKLX2} are satisfied can also be done in linear time.
    
    For a given component $C$, the set $R_C$ can therefore be computed in linear time. More precisely, $R_C$ consists of the vertices lying on segments that extend from the endpoint of the zigzag motif up to either the first articulation point leading to another component with KLX~2, or the first branching vertex of $T_C$.
    Overall, Step~3 can be implemented in time $\mathcal{O}(|V| + |E|)$. Consequently, the entire algorithm runs in linear time.
\end{proof}

\section{Proofs of technical lemmas from \cref{sec:fpt-courcelle}}

\subsection{Link with tree-width}

\thtwklx*

\begin{proof}  
For this, we transform the ordered DFS into an edge-tree, somewhat reminiscent of a line graph. Let $G$ be a graph such that $\KLX(G)=k$, and let $T$ be a directed DFS such that $\KLX(T)\leq k$ and denote $v = \mathcal{C}(G)$ its contour. Consider the \emph{line graph} of $T$, defined as the graph $T'$ whose vertices are the edges of $T$, $V_{T'}=E_T$ and where two edges are adjacent if one is the direct child of the other, that is
$E_{T'}=\bigl\{\{(x,y),(y,z)\} : (x,y)\in E_T \land (y,z) \in E_T\bigr\}$.

Recall how in a back edge $(x,y)$, $x \prec y$. For an edge $e\in T_E$, denote by $\mathcal{O}^{\mathrm{deep}}(e):=\{x : (x,y) \in \mathcal{O}(e)\}$ the set of all the ``deeper sides'' of the back edges that are open when visiting $e$. This will define our bags, and the intuition is that we keep the deeper side of the back-edges in the bag until we reach the second endpoint of the backward edge.

For every vertex $e=\{u,v\}$ of $T'$, define the bag $B_e:=\mathcal{O}^{\mathrm{deep}}(e)\cup \{u,v\}$.
Therefore,$|B_e|\leq |\mathcal{O}(e)|+2\leq \KLX(G)+2$. We now show that $T'$ alongside the bags $(B_e)_e$ is a tree decomposition of $G$.
\begin{description}
    \item[Property 1.] By definition, every bag is a subset of $V_G$.
    \item[Property 2.] For every edge $e\in E_G$, there are two cases:
    \begin{itemize}
        \item If $e\in E_T$, then the bag $B_e$ contains both endpoints $x$ and $y$.
        \item If $e=\{x,y\}\notin E_T$, then it is a back edge. Suppose $x\prec y$, and consider $(c,y)$ the last edge of the path $xTy$. Since $c\neq x$ because $e\notin E_T$, we have $x \in \mathcal{O}^{\mathrm{deep}}(\{c,y\})$, and therefore $x,y\in B_{\{c,y\}}$.
    \end{itemize}
    \item[Property 3.] Remark that for all $u \in V_G$, the list of the moments $t$ in the contour such that $u\in \mathcal{O}^{\mathrm{deep}}((\tau_t, \tau_{t+1}))$ is an interval: it is a union of intervals all starting at the same point. That is, a union of intervals of the form $[\![a; b_i]\!]$ with $a$ the last occurrence of $u$ in $\tau$, common to all intervals, and a serie of $(b_i)_i$. Since an interval of the contour is necessary a connected subtree of $T_E$, this gives us the property.
        
        
\end{description}
Therefore, $\TW(G)\leq \max_e |B_e|-1 \leq \KLX(G)+1$. The inequality is strict in some cases.
\end{proof}

\subsection{KLX is $\MSO_2$-expressible} \label{proof:mso2}

In this subsection, let $k \in \mathbb{N}$ be the KLX number we want to test. Given a tree $T = (V_T,E_T)$ of $G = (V_G,E_G)$, we denote by $\overline{E_T} = \mathrm{Compl}_{E_G}(E_T) = E_G \setminus E_T$ the complement of $E_T$ in $E_G$ (and not in $V_G^2$). We will show that testing if the KLX of a graph is $\le k$ is a $\MSO_2$-expressible property.

\subsubsection{Encoding of a rooted DFS Tree}

The goal of this section is to construct a formula $\mathrm{DFS}$ that encode as a $\MSO_2$ formula the fact that a set of edges $T$ is a rooted DFS. Testing whether a tree is a DFS tree has already been shown to be $\MSO_1$-expressible. For instance,~\cite{sam_parameterized_2023} showed that the class of trees admitting a DFS of height less than $k$ is $\MSO_1$-expressible. We will have to use a different encoding that is more powerful for the next few steps.

Given a spanning tree $T$ of $G$, let us label every \textbf{edge and vertex} with either $\top$ or $\bot$ according to the following three rules:
\begin{itemize}
    \item \textbf{R1:} One vertex, namely the root, is labelled $\top$, and all edges incident to it are labelled $\bot$.
    \item \textbf{R2:} For every edge $(u,v)\in T_E$, the vertex $u$ is labelled $\top$ if and only if the vertex $v$ is labelled $\bot$.
    \item \textbf{R3:} For every vertex, except the root, labelled with $x\in\{\top,\bot\}$, exactly one incident edge, namely the edge to its parent, is labelled $x$.
\end{itemize}

Rules \textbf{R1} and \textbf{R2} imply that a vertex $u$ is labelled $\top$ if and only if it is at even depth from the root.


\begin{restatable}{proposition}{FptDfsEncoding}
    \label{prop:fpt_dfs_encoding}
    Given two vertices $u,v\in V_G$, the path from $u$ to $v$ is an alternating sequence of $\top/\bot$ labels in $T$ if and only if $u\preceq v$ or $v\preceq u$.
\end{restatable}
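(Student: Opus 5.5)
The plan is to read ``the path from $u$ to $v$'' as the unique tree path $uTv$. We say it is \emph{alternating} if, listing its vertices and edges in order $u=w_0, \{w_0,w_1\}, w_1, \dots, w_m=v$, consecutive items always carry different labels. The first step is to pin down what the labelling forces locally at a tree edge $\{p,c\}$, where $p$ is the parent of $c$.

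By \textbf{R2}, $p$ and $c$ have opposite labels. By \textbf{R3}, the edge $\{p,c\}$ is the parent edge of $c$, so it carries the same label as $c$, and hence the opposite label to $p$. The same holds when $p=r$, using \textbf{R1} together with the fact that $r$ is labelled $\top$. So on every tree edge the child matches the edge and the parent differs from it. We also need that every other tree edge incident to $c$ (a child edge of $c$) carries the label opposite to $c$. This is the content of ``exactly one'' in \textbf{R3}, applied to tree edges; back edges are irrelevant for tree paths.

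For the direction ($\Leftarrow$), assume without loss of generality $u\preceq v$, so that $uTv$ descends as $u=w_0\prec w_1\prec\dots\prec w_m=v$. Each edge $\{w_{i-1},w_i\}$ has $w_{i-1}$ as parent. The sequence therefore reads (label of $w_{i-1}$, label of the edge, label of $w_i$) $=(x,\bar x,\bar x)$ under the convention ``edge labelled like the child''. Here I must fix the intended meaning of alternating. The natural reading that makes the statement true is: traversing from the upper endpoint downward, each vertex label differs from the label of the next edge. Equivalently, going upward each vertex's label agrees with its outgoing edge. I will state this convention explicitly, since the labelling scheme only makes sense under it. The descending case is then immediate from the local fact above.

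For the direction ($\Rightarrow$), suppose $u,v$ are incomparable. Let $a$ be their lowest common ancestor, with $a\ne u,v$. Then $uTv$ climbs into $a$ via a child edge $\{c_1,a\}$ and leaves via another child edge $\{a,c_2\}$. Both edges are child edges of $a$, so they share the same label, opposite to $a$. At $a$ the pattern is (edge, $a$, edge) $=(\bar x, x, \bar x)$, but the upward part requires the edge entering $a$ to carry $a$'s label when read consistently. So the alternation rule, read in either direction, breaks at $a$: the path passes from an ascending segment to a descending one, which the local pattern forbids.

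The main obstacle is precisely making ``alternating'' rigorous, so that the ascending, descending and turning configurations are distinguishable. The plan is to define it as: along the path, each vertex shares its label with exactly one of its two incident path edges, namely the one toward the upper endpoint. Then I verify this holds on monotone paths and fails at the turning vertex $a$, where $a$ shares its label with neither incident edge. The root case is handled by \textbf{R1}.
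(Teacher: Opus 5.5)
Your argument is correct. Its skeleton matches the paper's: tree edges carry their child's label, monotone paths pass the test, and the turn at the lowest common ancestor fails it. The effort is placed differently, though.

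The paper spends its proof on the local fact. It proves by induction on depth that at every non-root vertex, the unique incident tree edge sharing the vertex's label is the parent edge. Depth-$1$ vertices get this from \textbf{R1} and \textbf{R2}. In the inductive step, the parent $p$ has already used its label on its own parent edge, so $\{p,c\}$ must carry $c$'s label. You instead read this off the clause ``namely the edge to its parent'' in \textbf{R3}. That is legitimate for \textbf{R3} as worded, but the induction buys more. It shows the parent clause follows from \textbf{R1}, \textbf{R2} and bare uniqueness. Bare uniqueness is all that $\varphi_3$ enforces, and the correctness of $\mathrm{Parent}(c,p)$ rests on this.

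Conversely, the paper settles the equivalence in one sentence (``no two consecutive labels are equal''). Read over vertex labels alone, or over the interleaved vertex--edge sequence you noticed, that sentence is false. Your case analysis at the lowest common ancestor makes this step rigorous.

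One clean-up: drop ``namely the one toward the upper endpoint'' from your definition. Incomparable endpoints have no upper endpoint, and the clause is not needed. Use the direction-free criterion ``every internal vertex of $uTv$ shares its label with exactly one of its two path edges''. It is equivalent to ``consecutive edges of $uTv$ carry different labels'', which is the reading under which the paper's sentence is true. It holds at every non-turning vertex, where exactly one path edge is the parent edge. It fails at $a\neq u,v$, where both path edges are child edges of $a$.
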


\begin{proof}
We show by induction on the depth of a vertex $u$ that $u$ is labelled $\top$ if and only if the edge to its parent $p$ is the only incident edge labelled $\top$.

\begin{description}
    \item[Initialisation.] vertices at depth $1$ are all labelled $\bot$, and by \textbf{R1} all edges incident to the root are labelled $\bot$. Hence, for a vertex $u$ labelled $\top$, its parent is labelled $\bot$.
    \item[Inductive step.] Let $u$ be a vertex and $p$ its parent. We only treat the case where $u$ is labelled $\top$, the case of $\bot$ being identical. Then $p$ is labelled $\bot$, and the only edge incident to $p$ carrying the label $\bot$ is the edge between $p$ and its parent. Therefore the edge $(u,p)$ is labelled $\top$.
\end{description}

Finally, given two vertices $u,v$, we have $u\preceq v$ or $v\preceq u$ if and only if the path does not go from one child branch to another, which is equivalent to saying that no two consecutive labels are equal.
\end{proof}

\begin{corollary}
A tree $T$ is a DFS tree in the graph $G$ if and only if there exist a labeling of $T$ and a root vertex $r$ satisfying rules \textbf{R1}, \textbf{R2}, and \textbf{R3}, such that for every edge $(u,v)\notin T_E$, the path from $u$ to $v$ is an alternating $\top/\bot$ sequence.
\end{corollary}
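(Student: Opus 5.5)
The plan is to derive the corollary from \cref{prop:fpt_dfs_encoding} together with the classical characterisation of Tr\'emaux trees: a rooted spanning tree $T$ of $G$ is a DFS tree if and only if every edge of $G$ joins two vertices that are comparable under $\preceq$. The two directions then reduce to (a) exhibiting a labelling that satisfies \textbf{R1}--\textbf{R3}, and (b) translating the alternation condition into comparability.

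For the forward direction, let $T$ be a DFS tree rooted at $r$. I label $T$ explicitly:
\begin{itemize}
\item a vertex gets $\top$ if its depth is even and $\bot$ if its depth is odd;
\item each tree edge gets the label of its deeper (child) endpoint.
\end{itemize}
I then check the rules one by one:
\begin{itemize}
\item \textbf{R1} holds because $r$ has depth $0$ and every edge at $r$ carries the label of a depth-$1$ vertex, namely $\bot$.
\item \textbf{R2} holds because the endpoints of a tree edge have consecutive depths.
\item \textbf{R3} holds for a non-root vertex $u$ with label $x$: its parent edge carries $u$'s label $x$, while each edge to a child carries that child's label, which is the opposite of $x$. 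So exactly one incident tree edge carries $x$.
\end{itemize}
Since $T$ is a DFS tree, every non-tree edge $(u,v)$ is a back edge, so $u\preceq v$ or $v\preceq u$. \cref{prop:fpt_dfs_encoding} then gives that the tree path between $u$ and $v$ alternates.

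For the converse, suppose a labelling and a root $r$ satisfy \textbf{R1}--\textbf{R3} and every non-tree edge has an alternating tree path. Because the rules hold, \cref{prop:fpt_dfs_encoding} applies with respect to the rooting at $r$. Hence every non-tree edge joins comparable vertices, and tree edges trivially do as well. It remains to show that a spanning tree in which all edges join comparable vertices is a DFS tree. I would do this by running a DFS from $r$ that always descends along $T$, children in any fixed order, and check by induction that it reproduces exactly $T$. Suppose that while exploring $T(v)$ the search saw an unvisited neighbour $w$ of $v$ that is not a tree child of $v$. The edge $vw$ is a non-tree edge, so $v$ and $w$ are comparable. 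Since $w$ is unvisited it cannot be an ancestor of $v$, so $w$ is a proper descendant of $v$ lying in the subtree of some child $c$ of $v$. The search descends into $c$ before it finishes with $v$ and visits all of $T(c)$, including $w$, so by the time the edge $vw$ is examined from $v$, the vertex $w$ is already visited. Thus the search never leaves $T$.

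The main obstacle is not the combinatorics but the bookkeeping. First, \cref{prop:fpt_dfs_encoding} must be applied to the given, abstract labelling rather than the canonical one, so I must make sure its inductive proof uses only \textbf{R1}--\textbf{R3} and not the particular construction above. Second, \textbf{R3} has to be read as speaking about incident tree edges, because non-tree edges carry no meaningful label. I expect the Tr\'emaux characterisation step to be routine, and I would present it briefly as a standard fact.
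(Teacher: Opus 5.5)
Your proposal is correct and follows the route the paper intends: the paper gives no separate proof of this corollary, treating it as an immediate consequence of \cref{prop:fpt_dfs_encoding} together with the ancestor characterisation of DFS (Tr\'emaux) trees. You additionally make explicit the two pieces the paper leaves implicit: the existence of a labelling satisfying \textbf{R1}--\textbf{R3}, which is in fact forced (depth parity on vertices, the child's label on tree edges), and the converse of the Tr\'emaux characterisation. In that converse step, say explicitly that at each vertex your search examines all tree-child edges before any non-tree edge, since this ordering is exactly what guarantees that $w$ is already visited when $vw$ is examined.
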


The idea is now to encode a rooted DFS $T$ by means of a set $X$ containing exactly the vertices and edges labelled $\top$, the others being labelled $\bot$.

\begin{theorem}
The predicate $\mathrm{DFS}(T,r,X)$, which tests whether $T$ a set of edges is a DFS with root $r$ and with the set $X$ describes a correct labeling of $T$ satisfying \textbf{R1}, \textbf{R2}, and \textbf{R3}, is $\MSO_2$-definable.
\end{theorem}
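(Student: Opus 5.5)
We will write $\mathrm{DFS}(T,r,X)$ explicitly as a conjunction of $\MSO_2$ subformulas over a free edge-set variable $T$, a vertex variable $r$, and a mixed set variable $X$. If one prefers sets of a single sort, $X$ can be split into a vertex set $X_V$ and an edge set $X_E$. The overall plan is to express four things in turn: that $T$ is a spanning tree, that $X$ satisfies \textbf{R1}--\textbf{R3}, a path predicate, and finally the DFS condition via the preceding corollary. Every piece uses only standard $\MSO_2$ gadgets.

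\textbf{Spanning tree.} First we write $\mathrm{SpanTree}(T)$. Connectivity says that every vertex set $S$ with $\emptyset\neq S\neq V$ has some edge $e\in T$ with exactly one endpoint in $S$. Acyclicity says there is no nonempty edge set $F\subseteq T$ in which every vertex is incident to either zero or exactly two edges of $F$; this works because any such $F$ is a union of cycles. Both conditions need only first-order quantification over the incidence relation, once for a fixed $S$ or $F$.

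\textbf{Rules R1--R3.} \textbf{R1} is $r\in X \wedge \forall e\in T\,(\mathrm{inc}(e,r)\Rightarrow e\notin X)$. \textbf{R2} is $\forall e\in T\,\forall u,v\,\bigl(\mathrm{inc}(e,u)\wedge\mathrm{inc}(e,v)\wedge u\neq v \Rightarrow (u\in X\Leftrightarrow v\notin X)\bigr)$. For \textbf{R3}, for every $u\neq r$ we require that there is exactly one $T$-edge $e$ incident to $u$ with $(e\in X\Leftrightarrow u\in X)$. "Exactly one" is first-order: there exists such an $e$, and any two such edges are equal.

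\textbf{Paths and the DFS condition.} Next we define $\mathrm{Path}_T(P,u,v)$, meaning that $P\subseteq T$ is the edge set of the $u$--$v$ path in $T$. Since $T$ is a tree, it suffices to say that in the graph $(V,P)$ the vertices $u$ and $v$ have degree $1$ (or $P=\emptyset$ when $u=v$), every other vertex has degree $0$ or $2$, and $P$ is connected. Connectivity here is expressed by the same cut-based formula, restricted to the vertices incident to $P$. Alternation is then the statement $\mathrm{Alt}(P)$: for every edge $e\in P$ and every endpoint $w$ of $e$, we have $e\in X\Leftrightarrow w\notin X$. Along a path, consecutive vertices already alternate by \textbf{R2}, so the only thing left to check is that the edge labels interleave with the vertex labels. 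This matches the meaning of "alternating $\top/\bot$ sequence" in \cref{prop:fpt_dfs_encoding}, and we will double-check that the intended reading (every edge labelled opposite to its lower endpoint, i.e.\ the parent edge rule of \textbf{R3}) is exactly what $\mathrm{Alt}$ captures. Finally,
\[\mathrm{DFS}(T,r,X):=\mathrm{SpanTree}(T)\wedge \mathrm{R1}\wedge\mathrm{R2}\wedge\mathrm{R3}\wedge \forall e\notin T\,\forall u,v\,\bigl(\mathrm{ends}(e,u,v)\Rightarrow \exists P\,(\mathrm{Path}_T(P,u,v)\wedge \mathrm{Alt}(P))\bigr).\]
Correctness of this formula follows from the corollary following \cref{prop:fpt_dfs_encoding}.

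The main obstacle is making sure that $\mathrm{Alt}$ faithfully encodes the ancestor relation. A path that turns at a branch vertex $w$ uses two child edges of $w$, and both carry the label opposite to $w$'s children, that is, equal to $w$'s own label. The local condition "edge label differs from each endpoint label" fails at such an edge, so the test must be phrased exactly as in the proof of \cref{prop:fpt_dfs_encoding}: it should check that the edges of $P$ incident to the same vertex never share the label of that vertex. I would verify this against the induction in that proposition before finalising the formula. A companion $\mathrm{Parent}(x,y)$ then simply reads: $\{x,y\}\in T$ and $(\{x,y\}\in X\Leftrightarrow x\in X)$.
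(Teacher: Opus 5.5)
\textbf{Gap: the alternation predicate.} Your spanning-tree formula, your encodings of \textbf{R1}--\textbf{R3}, $\mathrm{Path}_T$ and $\mathrm{Parent}$ are fine. The one predicate that carries the DFS condition, $\mathrm{Alt}$, is wrong as written, and your proposed repair does not fix it.

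As written, $\mathrm{Alt}(P)$ demands $e\in X\Leftrightarrow w\notin X$ for \emph{both} endpoints $w$ of every $e\in P$. By \textbf{R2} the two endpoints of a tree edge carry opposite labels, so no edge satisfies this, and $\mathrm{Alt}(P)$ is false for every nonempty $P$. A back edge has distinct endpoints, so $P$ is nonempty and your last conjunct fails as soon as $G$ has a single back edge. Your $\mathrm{DFS}(T,r,X)$ therefore holds only when $G$ is a tree. The ``interleaved'' reading of an alternating sequence $v_0,e_1,v_1,\dots$ is inconsistent with \textbf{R2} for the same reason.

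\textbf{Your diagnosis is inverted.} By \textbf{R3}, and by your own \textbf{R3} and $\mathrm{Parent}$ formulas, the parent edge of $c$ carries $c$'s label. So a child edge $\{w,c\}$ carries the label \emph{opposite} to $w$, not $w$'s own label. Likewise, ``every edge labelled opposite to its lower endpoint'' contradicts \textbf{R3}.

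The correct picture is this. Vertex labels alternate along every tree path by \textbf{R2}, so they carry no information; what alternates only on vertical paths is the \emph{edge} sequence. At an internal vertex $w$ of a vertical path, one $P$-edge is $w$'s parent edge (labelled like $w$) and the other is a child edge (labelled unlike $w$). At a turning vertex, both $P$-edges are child edges of $w$, and both are labelled unlike $w$.

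Your repair, ``edges of $P$ at a vertex never share that vertex's label'', therefore fails either way. Read as ``no $P$-edge at $w$ has $w$'s label'', it is your original unsatisfiable condition. Read as ``not two $P$-edges at $w$ have $w$'s label'', it is vacuous, since only $w$'s parent edge does.

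\textbf{The condition you need} is
\[
\mathrm{Alt}(P) := \forall w\,\forall e,e'\in P\,\bigl(e\neq e'\wedge \mathrm{inc}(e,w)\wedge\mathrm{inc}(e',w)\Rightarrow (e\in X\Leftrightarrow e'\notin X)\bigr),
\]
that is, at every internal vertex of $P$ one of the two $P$-edges is that vertex's parent edge. With $P$ the $u$--$v$ path in $T$, this holds iff $P$ has no turning vertex, iff $u\preceq v$ or $v\preceq u$. Alternatively, you can avoid labels here altogether: require, using your $\mathrm{Path}_T$, that $u$ lie on the $T$-path from $r$ to $v$ or that $v$ lie on the $T$-path from $r$ to $u$.

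\textbf{Comparison with the paper.} With this fix your route, a local test on the tree path between the endpoints of each back edge, is sound. It differs from the paper's, which derives $\mathrm{Parent}$ from $X$ and defines ancestry via $\mathrm{Parent}$-closed vertex sets. As written, the paper's $\mathrm{IsInPath}$ quantifies that set existentially. Taking it to be all of $V$ then makes $\mathrm{IsAncestor}$ always true, so it needs the universal form: every $\mathrm{Parent}$-closed set containing $u$ contains $v$.
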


\begin{proof}
We define formulas $\varphi_1$, $\varphi_2$, $\varphi_3$ checking respectively \textbf{R1}, \textbf{R2}, and \textbf{R3}, and a formula $\psi$ expressing that every non-tree edge induces an alternating $\top/\bot$ path (some free variables like $T, r$ or $X$ are not always included in the argument of formula for clarity):
\begin{ceqn}
\[
\begin{aligned}
\mathrm{DFS}(T,r,X) \;:=\;& \mathrm{IsTree}(T) \land r\in X \land \varphi_1 \land \varphi_2 \land \varphi_3 \land \psi,\\
\varphi_1 \;:=\;& \forall e\in E_G,\ \forall v,\ (r,v)=e \Rightarrow e\notin X,\\
\varphi_2 \;:=\;& \forall (u,v)\in E_T,\ u\in X \leftrightarrow v\notin X,\\
\varphi_3 \;:=\;& \forall v\in V_G,\ v\neq r \Rightarrow \exists (v,n)\in E_T,\ \forall (v,u)\in T_E,\\
&\qquad \bigl((v,n)\in X \leftrightarrow v\in X\bigr)
\land
\bigl(((v,u)\in X \leftrightarrow v\in X)\Rightarrow u=n\bigr),\\
\mathrm{Parent}(c,p) \;:=\;& \exists (c,p)\in E_T,\ c\in X \leftrightarrow (c,p)\in X,\\
\mathrm{IsInPath}(x,u,v) \;:=\;& \exists P \subseteq V_T, u \in P \land \forall w\in P,w=v \lor (\exists p, \mathrm{Parent}(w,p) \land p \in P) \land x \in P,\\
\mathrm{IsAncestor}(u, v) \;:=\;& \mathrm{IsInPath}(v,u,r)\\
\psi \;:=\;& \forall (u,v)\in E_G \setminus E_T,\ \mathrm{IsAncestor}(u, v).
\end{aligned}
\]
\end{ceqn}
\end{proof}

Notice that we omitted for clarity to include the formula for $\mathrm{IsTree}(T)$ that test if $T$ is indeed a spanning tree. This can easily be done with the "unique path" characterisation of a spanning tree, see \cite{sam_parameterized_2023}.

\subsubsection{Encoding a choice}

Given $T,r,X$ the encoding of a rooted $\mathrm{DFS}$ tree, we present a way to encode a choice for every vertex $u \in V_T$ of a single children (or none).

Let us label every edges of $T$ with either $\top$ or $\bot$ according to the following rule 
\textbf{RC}: For every vertex, at most one incident edge other than the parent is $\top$.

This is the one we choose. We can then easely find if $x$ is the choosen children of $y$ by testing if ${x,y}$ is marked and $\mathrm{Parent}(x,y)$, as for every vertex, at most one $y$ is such that $\{x,y\}$ is marked and $\mathrm{Parent}(y,x)$ is true.

We can then encode a choice of children as an edge set $C$ that satisfy the formula $\mathrm{Choice}(C)$ and we can retrieve the choice with the formula $\mathrm{IsChoice}(C,u,c)$ as follow:
$$
\begin{aligned}
\mathrm{Choice}(C) \;:=\; &\forall x \in V_T, \lnot\exists y,y' \in V_T, y \neq y' \land \mathrm{Parent}(y,x) \land\\ &\mathrm{Parent}(y',x) \land (x,y), (x,y')\in C) \\
\mathrm{IsChoice}(C,u,c) \;:=\; &\{c,u\} \in C \land \mathrm{Parent}(c,u)
\end{aligned}
$$

We will now consider $T, r, X, C_1,...,C_k$ an encoding of a rooted DFS represented by $T, r, X$ where for every vertex $u$ we selected at most $k$ children in an order, representing by $C_1,...,C_k$, all respecting \textbf{RC}, and \textbf{are all disjoints}. They will represent the last $k$ (at most) children visited of every vertex, with $C_i$ representing the last $i$-th children. 

\subsubsection{Encoding of the open edges}

Recall from the static characterization that in an ordered DFS $T$, a back edge $(x,y) \in \mathcal{O}(e)$ if and only if either $e$ is in the path $xTy$, or there exists $v \in xTy \setminus \{y\}$ with children $c_1$ preceding $c_2$, $x\in T(c_1)$ and $e \in T(c_2)$. We will encode this as a $\MSO_2$ formula.

For $(u,v) \in \overline{E_T}$, define $X_{(u,v)}$ to be the set of inner edges of the open edge $(u,v)$ as the smallest set satisfying:
\begin{itemize}
    \item \textbf{X1.} All tree edges in the path from $y$ to $x$ along $T$ are contained in $X_{(y,x)}$.
    \item \textbf{X2.} For every vertex $u$ strictly inside the path from $y$ to $x$, denote $c_k,...,c_1$ the choices at $u$, there is a $(c_i,u) \in X_{(y,x)}$ and for all $j < i$, $T(c_j) \subseteq X_{(y,x)}$
\end{itemize}

\lelastorder*


\begin{proof}
    Let $T$ be an ordered DFS tree of $\KLX \le k$ We will keep the same tree but change the order of the children in order to satisfy the property. Let $u\in V_T$ be a non-root vertex of children $y_1,..,y_d$ and of parent $p$. We say that a children $y_i$ is \emph{interesting} if $\mathcal{O}((y_i,x)) \cap \mathcal{O}((x,p)) \neq \emptyset$. There are at most $k$ interesting children, as otherwise $|\mathcal{O}((x,p))| > k$. Then, we know that for every non-interesting children, $\forall (a,b) \in \mathcal{O}((y_i,x))$ with $b \prec a$, we have $b = x$. Therefore those children can commute place with all the other ones as they will have open 0 back edge once $x$ is next seen. We can therefore push all non-interesting children on the left and keep the interesting children on the right-side, with the same order.

    Finally, do this for every non-root vertex in $T$ to get your $T'$. Since for the root all children are non-interesting as there is no parent, any order will suffice.
\end{proof}

We will now always assume that a DFS ordered tree $T$ with $\KLX(T) \le k$ respect the previous property.

\begin{corollary}
    If $T,r,X,C_1,...,C_k$ correspond to the encoding of a DFS ordered tree with $\KLX \le k$, with $C_1,...,C_k$ the choice of the $k$ right-most children of every vertex, then $X_{(x,y)} = \{(x,y) \in E_T \space | \space (x,y)\in \mathcal{O}((x,y)) \} $
\end{corollary}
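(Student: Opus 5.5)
The plan is to prove the two inclusions between $X_{(y,x)}$, the least set closed under \textbf{X1} and \textbf{X2}, and the set $\{e \in E_T : (y,x) \in \mathcal{O}(e)\}$. The comparison uses the crossing/enveloping classification of open edges from \cref{sec:preliminaries}: $(y,x) \in \mathcal{O}(e)$ if and only if $(y,x)$ either crosses $e$ or envelops $e$. We assume throughout that $T$ satisfies the property of \cref{le:last_order}, and that $C_1,\dots,C_k$ select at each vertex its $k$ rightmost children, with $C_i$ giving the $i$-th last. In particular, every non-chosen child of a vertex $u$ is visited before all chosen children of $u$.

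First I would settle the crossing part. By definition, the tree edges crossed by $(y,x)$ are exactly the edges of $xTy$, which is what \textbf{X1} adds. Next I would fix a vertex $u$ strictly inside $xTy$ and let $w$ be the child of $u$ on the path towards $y$. The key observation is that $(y,x)$ crosses $u$, so the subtree $T(w)$ contains a back edge crossing $u$. By \cref{le:last_order}, $w$ is then among the $k$ rightmost children of $u$, so $w = c_i$ for some $i$, that is, $\mathrm{IsChoice}(C_i,u,w)$ holds. I read \textbf{X2} accordingly: the index $i$ is the one with $(c_i,u) \in X_{(y,x)}$, which holds by \textbf{X1}. For every $j<i$, \textbf{X2} then adds all of $T(c_j)$, where I take $T(c_j)$ to include the tree edge $\{u,c_j\}$, matching the clause ``$e = \{x,w_2\}$'' in the definition of enveloping.

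For $\supseteq$, I would check that every edge added by \textbf{X2} is enveloped by $(y,x)$. Take an edge in $T(c_j)$ or the edge $\{u,c_j\}$ with $j<i$. Here $c_j$ is visited after $c_i$, $y \in T(c_i)$, and $u \neq x,y$ is a branch vertex, since it has the two children $c_i$ and $c_j$. This is exactly the definition of enveloping with $w_1 = c_i$ and $w_2 = c_j$.

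For $\subseteq$, I would use that the right-hand set is closed under \textbf{X1} and \textbf{X2}, so minimality of $X_{(y,x)}$ gives the inclusion once I show each enveloped edge arises this way. Suppose $e$ is enveloped via a branch vertex $u$ with children $w_1$ preceding $w_2$, $y \in T(w_1)$, and $e \in T(w_2) \cup \{\{u,w_2\}\}$. As above, $w_1 = c_i$ is chosen. Since every non-chosen child precedes all chosen ones and $w_2$ comes after $w_1$, the child $w_2$ is also chosen, and it is visited later, so $w_2 = c_j$ with $j<i$. Hence $e$ is added by \textbf{X2}.

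I expect the main obstacle to be the bookkeeping around the encoding rather than the combinatorics. Two points need care. One is making precise that ``$C_i$ is the $i$-th last child'', together with the disjointness of the $C_i$ and the placement of non-chosen children first, faithfully represents the order of $T$. The other is the degenerate cases: vertices with fewer than $k$ children, for which some $C_i$ choose nothing, and the fact that edges in subtrees $T(c_j)$ hanging below $c_j$, not only the edge $\{u,c_j\}$ itself, are also enveloped.
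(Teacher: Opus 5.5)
Your proposal is correct and takes the same route as the paper, whose entire proof is the one-line remark that \textbf{X1} and \textbf{X2} transcribe the crossing/enveloping characterisation of $\mathcal{O}(e)$. You supply the details that remark leaves implicit: \cref{le:last_order} guarantees that the path child at each internal vertex is chosen and that every later sibling is also chosen with a smaller index, and $T(c_j)$ must be read as including $\{u,c_j\}$. The only blemish is bookkeeping: minimality of $X_{(y,x)}$ combined with your enveloping check gives $X_{(y,x)}\subseteq\{e : (y,x)\in\mathcal{O}(e)\}$, while your final paragraph proves the reverse inclusion using only that $X_{(y,x)}$ satisfies \textbf{X1} and \textbf{X2}, so the appeal to minimality belongs to the other direction.
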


\begin{proof}
    This is exactly the static characterization of $\mathcal{O}(e)$.
\end{proof}

This gives us the following formulas for $e \in X_{(x,y)}$. In this, for ease of comprehension "$e \in X_{(x,y)}$" can be seen as a predicate in of itself. $\vartheta_1, \vartheta_2$ represent the edges added in \textbf{X1}, \textbf{X2} respectively, and $\gamma$ test the condition in \textbf{X2} that for every vertex strictly inside the path, there is a $(c_i,u)$ being chosen. 
$$
\begin{aligned}
(a,b) \in X_{(x,y)} \;:=\; &\vartheta_1 \lor \vartheta_2 \\
\vartheta_1 \;:=\; & \mathrm{IsInPath}(e,x,y)\\
\vartheta_2 \;:=\; &\exists (c,u) \in E_T, \mathrm{IsInPath}((c,u),x,y) \land u \neq y \land\\
&\bigvee_{1\le i\le k} \big( \mathrm{IsChoice}(C_i,u,c) \land \bigvee_{j<i\le k} \exists c', \mathrm{IsChoice}(C_j,u,c') \land 
\mathrm{IsInPath}((c',u), a,r) \big) \\
\gamma \;:=\; &\forall (u,v) \in E_T, \mathrm{IsInPath}((u,v),x,y) \rightarrow \bigvee_{1\le i \le k} \mathrm{IsChoice} (C_i,v,u)
\end{aligned}
$$

\subsubsection{$\KLX_{\le k}$ is $\MSO_2$-expressible}

Consider a serie of sets $E_1, \dots ,E_k \subseteq E_G$ satisfying the following additional 3 conditions:
\begin{itemize}
    \item \textbf{E1.} For all $e \in \overline{E_T}$, there is an $i$ with $e \in E_i$.
    \item \textbf{E2.} For all $(u,v) \in E_T$, we have $(u,v) \in E_i$ if and only if there exists $(x,y) \in \overline{E_T}$ such that $(u,v) \in X_{(x,y)}$ and $(x,y) \in E_i$.
    
    \item \textbf{E3.} For all $(u,v),(u',v') \in \overline{E_T}$, if $(u,v),(u',v') \in E_i$, then $X_{(u,v)} \cap X_{(u',v')} = \emptyset$.
\end{itemize}

\begin{proposition}
Given a graph $G$, we have $\KLX(G) \leq k$ if and only if there exist an encoding $T,r,X,C_1,...,C_k$ of a DFS ordered tree and $k$ sets $E_1,\dots,E_k$ satisfying all the stated rules above.
\end{proposition}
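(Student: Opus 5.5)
We prove both directions by relating the slots $E_1,\dots,E_k$ to a colouring of an interval-like conflict structure on back edges. Throughout, we use the Corollary identifying $X_{(y,x)}$ with the set of tree edges $e$ such that $(y,x)\in\mathcal{O}(e)$. That Corollary is valid once the encoding $T,r,X,C_1,\dots,C_k$ comes from an ordered DFS tree whose choices $C_i$ are the $i$-th last children. By \cref{le:last_order}, such a tree can be assumed whenever $\KLX(G)\le k$.

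\emph{Forward direction.} Assume $\KLX(G)\le k$. Take a KLX-minimum ordered DFS tree $T$ normalised by \cref{le:last_order}. Let $r$ be its root and $X$ the $\top$-labelling given by \textbf{R1}--\textbf{R3} (Proposition~\ref{prop:fpt_dfs_encoding}). Let $C_i$ consist of the edges from each vertex to its $i$-th last child, which are pairwise disjoint and satisfy \textbf{RC}. It remains to produce the slots. For each back edge $b=(v,u)$, the set $X_b$ corresponds exactly to a contiguous time interval of the contour $\tau(T)$: it runs from the last occurrence of $v$ to the next occurrence of $u$. So the back edges form an interval family on the line of contour positions. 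At every tree-edge occurrence at most $k$ of these intervals are active, since $|\mathcal{O}(e)|\le k$.

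Interval graphs are perfect, so the greedy colouring in order of left endpoints uses at most $k$ colours, with overlapping intervals receiving distinct colours. Let $E_i$ contain the back edges of colour $i$ together with all tree edges lying in $X_b$ for such $b$. Then \textbf{E1} holds by construction, \textbf{E2} holds by definition of the tree part, and \textbf{E3} holds because two back edges of the same colour have disjoint intervals. The subtle point is the translation of interval disjointness into $X_b\cap X_{b'}=\emptyset$. A tree edge $e$ is traversed twice in $\tau(T)$, but openness is defined only on the upward traversal $y\to x$. Hence each tree edge corresponds to one contour position, and the two notions agree.

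\emph{Backward direction.} Assume such an encoding and sets $E_1,\dots,E_k$ exist. By the DFS encoding theorem, $T$ is a DFS tree rooted at $r$. We turn the choices into an ordering of $T$: at each vertex, first the non-chosen children in arbitrary order, then the children chosen by $C_k,\dots,C_1$. With this ordering, we check that $X_b\supseteq\{e: b\in\mathcal{O}(e)\}$ for every back edge $b$. Rule \textbf{X1} covers the crossed edges. For enveloped edges, the branch child containing $v$ lies before the enveloped child. Here we need the formula $\gamma$ to force every child on the path $uTv$ to be chosen, so that \textbf{X2} captures all later subtrees. Now fix a tree edge $e$. Every $b\in\mathcal{O}(e)$ lies in some $E_i$ by \textbf{E1}, and $e\in X_b$. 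By \textbf{E3}, no two back edges in the same $E_i$ can both contain $e$. Therefore $|\mathcal{O}(e)|\le k$, and $\KLX(G)\le\KLX(T)\le k$.

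\emph{Main obstacle.} The hardest step is making the formula for $X_b$ coincide with true openness in the backward direction, where the encoding is arbitrary rather than produced by \cref{le:last_order}. The formula must overapproximate openness, and we also need that $\gamma$ holds for every back edge. If $\gamma$ could fail, that is, if a back edge emanated from a non-chosen child, I would add $\gamma$ as an explicit conjunct of the final formula. The forward direction then still satisfies $\gamma$: in a normalised tree, any child containing a back edge that crosses its parent vertex is interesting, hence among the rightmost $k$ children and therefore chosen.
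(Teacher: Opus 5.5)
Your proposal is correct and follows essentially the same route as the paper. The forward direction is the paper's greedy slot assignment along the contour, which is exactly colouring the interval family $\{X_b\}$ by left endpoints. The backward direction is the same injection of $\mathcal{O}(e)$ into $\{1,\dots,k\}$ via \textbf{E1} and \textbf{E3}.

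Your version is somewhat tighter than the paper's in three ways. You only use the inclusion $\{e : b\in\mathcal{O}(e)\}\subseteq X_b$, instead of applying the corollary to an arbitrary encoding. You visit the chosen children in the order $C_k,\dots,C_1$, consistent with \textbf{X2}. You also make the role of $\gamma$ explicit. Note that $\gamma$ should only constrain vertices strictly inside the path, which is what your forward-direction justification via interesting children assumes.
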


\parhead{Intuition} The idea is that for every back edge $(u,v) \in E_G \setminus E_T$, all moments at which this edge is open $X_{(u,v)}$, that is, all traversed edges are gathered into the same set $E_i$, with no overlap for two distincts $E_i, E_j$. At the end, every $E_i$ will be a collection of circuit of the DFS, all disjoints and each of the form $X_{e} \cup \{e\}$.

\begin{proof}
We prove both implications. Denote $p(x)$ the parent of a vertex $x$ when the context is clear.

\medskip
\noindent\textbf{($\Rightarrow$)} Let $G$ be a graph such that $\KLX(G) \le k$, and let $T$ be the ordered DFS tree with contour $\tau$. Encode $T$ as previously described to obtain $T, r, X, C_1, \dots, C_k$. We define edge sets $(E_i)_{i \le k}$ incrementally by following the contour with the induction hypothesis that $|\{i : (\tau_t,\tau_{t+1}) \in E_i\}| = |\mathcal{O}(t+1)|$ by induction on $t< |\tau(T)|$: 
\begin{itemize}
    \item Initially, for all $i$, $E_i = \emptyset$.
    \item For a given transition of the DFS $(\tau_t, \tau_{t+1})$ that is a downward edge with $\tau_{t} \prec \tau_{t+1}$, it means $\tau_{t+1}$ is a vertex we have never seen. Then keep the same $(E_i)_i$: we will see $\tau_{t+1}$ again when coming back so we don't open any new edges.
    \item For a given transition of the DFS $(\tau_t, \tau_{t+1})=(c,x)$ that is an upward edge  with $x\prec c$, for every backward edge of the form $(x,y)$, there is some $j \le k$ such that $(x,p(x)) \notin E_j$ (otherwise we would have more than $k$ open edges at the same time on $(x,p(x))$). Add the moments when $(x,t)$ is open to $E_j$ (by the lemma, these are $X_{(x,t)}$), and add $(x,t)$ to $E_j$.
\end{itemize}

Let us verify that all rules are satisfied:
\begin{itemize}
    \item \textbf{R1}, \textbf{R2} and \textbf{R2} hold because of the previous results.
    \item \textbf{E1}: By definition of the $E_i$, for every backward edge $(u,v)$, we added it to an edge set when considering the transition at time $t_u$ in $\tau$.
    \item \textbf{E2}: For a given backward edge $(x,y)$, we added to the set $E_i$ only the edges of $E_T$ where $(x,y)$ is open, that is $X_{(x,y)}$. Since we only add an edge of $E_T$ in that case and start with the empty set for $E_i$, by induction these are the only ones that remain.
    \item \textbf{E3}: By contradiction, let $(u,v)$ and $(u',v')$ be two back edges, with $v \prec u$ and $v' \prec u'$, and suppose that $X_{(u,v)} \cap X_{(u',v')} \neq \emptyset$. Then there is an edge $(a,b) \in X_{(u,v)} \cap X_{(u',v')}$ that has been added twice to a set $E_i$. Suppose that $(u,v)$ appears before $(u',v')$ in $\tau$ (the other case is symmetric). Then, at step $t_{u'}$, when considering the edge $(u',v')$, we have $(u',p(u')) \notin E_i$, since this is the necessary condition for it to be added to $E_i$. But we know that $X_{(u,v)}$ represents a continuous interval of the DFS walk, but both $(u,p(u)) \in X_{(u,v)}$ (comes before $(u',p(u'))$) and $(a,b) \in X_{(u,v)}$ (comes after $(u',p(u'))$), so $(u',p(u'))$ must belong to $X_{(u,v)}$ and therefore cannot belong to $E_i$, a contradiction.
\end{itemize}

\medskip
\noindent\textbf{($\Leftarrow$)} Assume that such sets satisfying all the stated rules exist. We show that there exists a DFS ordered tree $T$ of contour such that, for every transition $(u,v)$ in the walk, the number of open edges at that point is at most $k$. Take the same $T$ for the DFS tree (the previous results assure that $T$ is indeed a DFS tree of root $r$).

Given a vertex $u$, let $y_1,\dots,y_n$ be the children of $u$. Denote $c_1,...,c_k$ the choosen children of $u$, defined by $C_1,...,C_k$. We define the order of $T$ at $u$ as follows: first visit every children of $u$ that is not among the choosen ones, in any arbitrary order, and then visit the children $c_1,c_{2},\dots,c_k$ in that order.

To prove the bound, we show that for every transition $(u,v)$ in the DFS,
$|\mathcal{O}({\{u,v\}})| = |\{i : (u,v) \in E_i\}| \leq k$.
Indeed, denoting $[k] = \{1,\dots,k\}$, we have
\[
\begin{aligned}
\left|\mathcal{O}({\{u,v\}})\right|
&= \left|\left\{(a,b)\in \overline{T_E} \mid (u,v)\text{ occurs while }(a,b)\text{ is open}\right\}\right| \\
&= \left|\left\{(a,b)\in \overline{T_E} \mid (u,v)\in X_{(a,b)}\right\}\right|
&& \text{by the lemma} \\
&= \left|\left\{(a,b)\in \overline{T_E} \mid \exists i,\ (u,v)\in X_{(a,b)} \text{ and } (a,b)\in E_i\right\}\right|
&& \text{by \textbf{E1}} \\
&= \left|\left\{(a,b,i)\in \overline{T_E}\times [k] \mid (u,v)\in X_{(a,b)} \text{ and } (a,b)\in E_i\right\}\right| && \text{by \textbf{E3} {\small (only once)}} \\
&= \left|\left\{i \in [k] \mid \exists (a,b)\in \overline{T_E},\ (u,v)\in X_{(a,b)} \text{ and } (a,b)\in E_i\right\}\right| && \text{by \textbf{E3} {\small (only once)}} \\
&= \left|\left\{i \in [k] \mid (u,v)\in E_i\right\}\right| && \text{by \textbf{E2}} \\
&\leq k.
\end{aligned}
\]
This concludes the proof.
\end{proof}

\thKLXformula*

\begin{proof}
This is a corollary to the previous lemma. We just have to give the final formula. Rules \textbf{E1}, \textbf{E2} and \textbf{E2} will be represented by respectively $\varrho_1, \varrho_2, \varrho_3$. $\varsigma$ represent the fact that all $C_i$ are disjoint. The formula is (finally!):
\begin{ceqn}
\[
\begin{aligned}
\KLX_{\le k} \;:=\; &\exists T,r,X,C_1,\dots,C_k,E_1,\dots,E_k, \\
&\mathrm{DFS}(T,r,X) \land \big(\bigwedge_{i=1}^k \mathrm{Choice}(C_i)\big)\land \varrho_1 \land \varrho_2 \land \varrho_3 \land \varsigma \land \gamma\\
\varsigma \;:=\; &\forall u,c \in V_G,\bigwedge_{i=1}^k \bigwedge_{j\neq i} \lnot(\mathrm{IsChoice}(C_i,u,c) \land \mathrm{IsChoice}(C_j,u,c))\\\
\varrho_1 \;:=\; &\forall e \in \overline{E_T}, \bigvee_{i=1}^k e \in E_i \\
\varrho_2 \;:=\; &\forall e \in E_T, \bigwedge_{i=1}^k (e \in E_i) \leftrightarrow (\exists e' \in \overline{E_T}, e \in X_{e'} \land e' \in E_i) \\
\varrho_3 \;:=\; &\forall e,e' \in \overline{E_T}, \bigwedge_{i=1}^k e, e' \in E_i \Rightarrow X_{e} \cap X_{e'} = \emptyset \\
\end{aligned}
\]
\end{ceqn}
\end{proof}

\end{document}